\newtheorem{Lemma}{Lemma}[section]
\newtheorem{Proposition}{Proposition}[section]
\newcommand{\be}{\begin{equation}}
\newcommand{\ee}{\end{equation}}
\newcommand{\ba}{\begin{eqnarray}}
\newcommand{\ea}{\end{eqnarray}}
\title{{\sf Asymptotically safe -- canonical quantum gravity junction}}
\author{
{\sf T. Thiemann}$^1$\thanks{{\sf 
thomas.thiemann@gravity.fau.de}}\\
\\
{\sf $^1$ Inst. for Quantum Gravity, FAU Erlangen -- N\"urnberg,}\\
{\sf Staudtstr. 7, 91058 Erlangen, Germany}\\
}
\date{{\small\sf \today}}
\begin{document} 

\maketitle

{\sf

\begin{abstract}
The canonical (CQG) and asymptotically safe (ASQG)
approach to quantum gravity share 
to be both non-perturbative programmes. However, apart from that they seem
to differ in several aspects such as: 1. Signature: CQG is Lorentzian while 
ASQG is mostly Euclidian. 2. Background Independence (BI): CQG is 
manifesly BI while ASQG is apparently not. 3. Truncations: CQG is 
apparently free of truncations while ASQG makes heavy use of them.

The purpose of the present work is to either overcome actual differences or 
to explain why apparent differences are actually absent. Thereby we intend 
to enhance the contact and communication between the two communities.
The focus of this contribution is on conceptual issues rather than
deep technical details such has  
high order truncations. On the other hand the paper tries to be self-contained 
in order to be useful to researchers from both communities.

The point of contact is the path integral formulation of Lorentzian CQG in its 
reduced phase space formulation which yields the formal generating functional 
of physical (i.e. gauge invariant) 
either Schwinger or Feynman N-point functions for (relational) 
observables. The corresponding effective actions of these 
generating functionals can then be subjected to the ASQG  
Wetterich type flow equations which serve in particular to find the 
rigorous generating fuctionals via the inverse Legendre transform of 
the fixed pointed effective action.
\end{abstract}

\section{Introduction}
\label{s1}

It is widely believed that only a non-perturbative approach can yield a 
predictive theory of quantum gravity (QG). The asymptotic safety approach 
(ASQG) \cite{1} and the canonical approach (CQG) \cite{2}, the most 
modern incarnation of which is Loop Quantum Gravity (LQG) \cite{3}, 
are two such programmes. These have been developed quite independently 
of each other and it is therefore an interesting challenge to try to 
bring them into contact in order to investigate their differences and their 
hopefully complementary strengths and weaknesses.

At first sight, this seems to be an easy task: Both theories start from 
a Lagrangian which can then be subjected to either canonical (CQG) 
or path integral (ASQG)
quantisation. The connection between those types of quantisations is that 
the path 
integral formulation allows to compute correlation functions of field
operator valued distributions of the canonical formulation. This 
relation is quite well understood, both conceptually and 
technically, for quantum field theories (QFT) 
in Minkowski space without gauge symmetries. On more general backgrounds 
with or without gauge symmetries a rigorous implementation of this relation 
is more challenging but at a formal level one can can still derive one 
formulation from the other via the reduced phase space path integral \cite{4}.

However, apart from these mostly technical challenges, there appear to be 
unsurmountable differences between the two programmes:\\
1. Signature:\\
Apart from few exceptions \cite{5}, ASQG is a path integral formulation 
for Euclidian signature metrics while CQG in its LQG incarnation 
is explicitly for Lorentzian signature \cite{6}. One cannot simply pass
between the signatures using analytic continuation in time as a quantum 
metric that one integrates over has quite arbitrary (generically non-analytic) 
time dependence.\\
2. Background dependence (BI):\\
ASQG makes heavy use of background metric dependent constructions such as 
the very definition of the averaging kernel that enters the Wetterich 
equation. Sometimes this background is kept arbitrary \cite{7}, sometimes it is 
chosen to be maximally symmetric \cite{8} and it is at least confusing to 
the newcomer in what sense the predictions of ASQG can be interpreted 
background independently. CQG on the other hand is manifestly background 
independent by construction.\\
3. Truncations:\\
The Wetterich equation of ASQG is an exact and non-perturbative 
renormalisation flow equation for the so-called effective average action 
(EAA), however, to actually solve this equation one must restrict that 
action to a finite number of terms and truncate the Wetterich equations 
to those terms. Such kind of truncations are absent in LQG.\\
\\
The purpose of this paper is two-fold:\\ 
A.\\
On the one hand we want to understand to what extent one can  
derive ASQG from CQG using the reduced phase space path integral. 
This should then ease the comparison between the two formulations.\\
B.\\
On the other hand, such a derivation will shed light in what sense the above 
three differences between the two programmes can be overcome.\\
\\
The architecture of this contribution is as follows:\\
\\
To be useful for both communities, in section \ref{s2} we will review 
the canonical quantisation programme as well as the reduced phase space path 
integral and in section \ref{s3} we will review the main ingredients of 
the asymptotic safety programme. Special attention will be paid 
on how to implement manifest gauge invariance in both programmes and how 
these can be translated into each other. This ensures that the cortrrelation
functions that one eventually computes are for (relational) Dirac observables
only.

While the review material laid out in sections \ref{s2}, \ref{s3} 
is quite general, in 
section \ref{s4} we pick a concrete class of theories within the 
Einstein-Scalar sector. While the matter content of that theory is not 
very physical, it serves the purpose to showcase for a model that 
is technically not too involved, 
that a connection between the two programmes can indeed be established.
We develop and prepare 
that model up to the point at which an ASQG tretament is
immediately possible. The ASQG tretament itself is carried out in
the companion paper \cite{companion}.

In section \ref{s5} we summarise and conclude. 

In the appendices we collect 
some background material:\\
In appendix \ref{sb} we establish the non-trivial relation between 
effective actions whose preimages under Legendre transformation 
are related by restriction. In appendix \ref{sa} we show how 
to get rid of square root actions in the path integral by a method which 
is inspired by Polyakov's trick to switch from the Nambu-Goto to the 
Polyakov action for the closed bosonic string \cite{10}.
In appendix
\ref{sc} we report a side result of our analysis, namely that 
the so-called optimised cut-off, which are much used in the Euclidian version 
of ASQG, is not in the image of the Laplace transform
even of a positive measure and thus raises the question whether 
it is in the image of the Laplace transform of some well defined 
mathematical object at all. Likewise some smoother cut-offs used in Euclidian 
ASQG are not in the Laplace transform image of 
functions but rather distributions so that the heat kernel time 
integrals that are required to compute the flow are ill-defined as they stand.
This
kind of analysis leads to 
appendix \ref{sd} where we introduce and explore a new class of
cut-off functions both for Euclidian and Lorentzian signature 
which safely are in the image of the Laplace and Fourier transform
respectively of a Schwartz function. In the Lorentzian regime 
the flow defined by this cut-off is necessarily in {\it complex coupling
constant space} but among the possible trajectories running into the 
fixed point we only allow 
{\it admissible oens} 
which have real valued, finite $k=0$ limit for the physical dimensionful 
couplings.\\ 
\\
As far as the obstructions I.-III. are concerned, we learn 
the following: 
\begin{itemize}
\item[I.] {\it Signature}\\
One can formulate the reduced phase space path integral either for 
Schwinger N-point functions or for Feynman time ordered N-point functions.
This is possible because in the reduced phase space formulation one has 
access to a Hamiltonian operator $H$ and Wick rotation of $e^{it H}$ 
with respect to $t$ has a well-defined meaning. However, the physical 
Hamiltonian $H$ is not of standard Schr\"odinger type, it typically 
involves a 
square root, so that integrating out 
the momenta is not immediately possible in exact form. The square root 
Hamiltonian is genericfor generally covariant theories and arises already 
for the relativistic particle \cite{9}. By the method provided in appendix 
\ref{sa} one can get rid of the square root and in that case, one can indeed 
integrate out the momenta and implement Euclidian signature ASQG. 
The resulting theory is however spatially very non-local. One can 
implement a more local version. Another method to get rid of the 
square root and to avoid non locality 
is to unfold the reduced phase space path integral to 
the unreduced one but then one must choose the Lorentzian 
route. One is then able to integrate out the momenta 
exactly and 
relate the path integral to the Einstein-Hilbert action for Lorentzian 
signature.
At this point one may think that comparing the programmes is no longer 
possible. However, as noted e.g. in \cite{5}, it is also possible 
to give a Lorentzian version of the Wetterich equation for the EAA although 
its interpretation is less clear.
\item[II.] {\it Background Independence}\\
The derivation of the reduced phase space path integral results in 
the Einstein-Hilbert action with metric dependent cosmological constant 
term and a non-standard gauge fixing structure that arises both from 
integrating out the meomenta and the gauge fixing conditions that we imposed.
If the corresponding generating functional 
was well-defined one could compute  all (connected) 
Feynman N-point functions from it. However, it is not and thus one may use the 
proposal of the ASQG programme to modify the action by quadratic
term that involves a so-called averaging
kernel which depends on a parameter $k\in \mathbb{R}$ and a background 
metric $\bar{g}$ of Lorentzian signature and $k^2$ can locally be 
related to the norm of a vector $k$ with respect to $\bar{g}$. The avearging 
kernel vanishes when $k=0$.

One may now pass from the modified generating functional to the EAA by 
Legendre transform which obeys a Wetterich like exact equation. From
the viewpoint of CQG,
the Wetterich equation serves the sole purpose to 
select a fixed point EAA (solution of the Wetterich in terms 
of the dimensionless couplings with finite limit as 
$k\to \infty$ fixing all but a finite 
number of those) 
which also has a limit as $k\to 0$ with respect to the running 
dimensionful couplings which then serves 
as the rigorous definition of the ordinary effective action of the 
guage invariant theory 
that we are interested in. Note that in the usual 
ASQG flow equation one considers the 
flow with respect to all degrees of freedom  
while actual effective action is restricted the gauge inavariant degrees of 
freedom. This restriction requires an additional step after the solution
of the Wettrich equation has been obtained.

Now the reason why the end result of this analysis is background independent 
is due to the use of the background method \cite{12} that one employs:
We keep the background arbitrary and leave it unspecified aside from assuming
that it has Lorentzian signature. This not only allows to unambiguously 
monitor the flow of the couplings that enter the EAA, it also leads to 
an effective action as follows: The Wetterich equation yields a functional
$\bar{\Gamma}_k(\hat{g},\bar{g})$ where $\hat{g}$ is the field that 
arises by Legendre transforming the generating functional 
$\bar{C}_k(f,\bar{g})$ of connected functions with respect to the current
$f$ using the background field method. 
If $\bar{\Gamma}_0(f,\bar{g})$ exists then 
$\Gamma(g):=\Gamma_0(0,g)$ is the standard 
effective action and its Legendre transform defines $C(g)$. Thus, 
the background method applied in this way is in fact background independent!
This is of course well known in field theory and in fact it was DeWitt,
one of the founding fathers of CQG, 
who co-invented this method.
\item{III.} {\it Truncations}\\     
In order to solve the Wetterich equation, we make an Ansatz 
for $\bar{\Gamma}_k$ that involves a small set of $k$ dependent coupling 
constants.
The left hand side of the Wetterich equation only depends on the derivatives 
of those coupling constants with respect to $k$. To match these terms to 
those that arise on the right hand side of the Wetterich equation one 
can make use of the Lorentzian version of the heat kernel method \cite{13}
which is based on DeWitt's seminal work on the Synge world function \cite{14}
and which plays a fundamental role in microlocal analysis \cite{15} and 
QFT in curved spacetime (CST) \cite{13}. We may call this the 
``Schr\"odinger kernel''
method. The properties used for the Synge world function are signature 
independent and can be used for our purposes as well. While in ASQG one 
often matches the gauge fixing in order to avoid certain ``off-diagonal'' 
terms (basically polynomials of covariant background derivatives that are 
not polynomials of the background d'Alambertian) the ``heat'' kernel method     
can be extended to off-diagonal situations \cite{17} and one does not even 
need to perform a transverse-traceless decomposition by suitable 
organisation of the truncation scheme (e.g. by orders of 
derivatives rather than 
orders of curvature polynomials). One is then left with performing integrals 
with respect to the ``heat'' kernel parameter which in Lorentzian signature 
are more difficult to perform than in Euclidian signature.
 
After all of this has been done, one must discard on the right hand side 
all terms that have no counter part in the Ansatz (truncation), then 
perform the fixed point analysis and finally take $k\to 0$ (if possible 
within the limits of the chosen truncation). In 
principle one can make the system of equations close iteratively by 
refining the Ansatz step by step by the terms that are produced by the 
Wetterich equation not already contained in the Ansatz. 

One may argue 
that in CQG this truncation is not necessary. However, the physical 
Hamiltonian in LQG suffers from many 
quantisation ambiguities which must be fixed using renormalisation 
methods \cite{18}, e.g. Hamiltonian renormalisation 
\cite{19}. In any renormalisation scheme one usually is forced 
to perform some form of approximation (here: truncations) 
due to practical limitations (in case of ASQG
it is the limits in complexity that symbolic computation programmes are able 
to handle in reasonable time). Thus renormalisation is present in both 
approaches and truncations are in principle not necessary except for reasons 
of pracicability. What is missing, to the best of our knowledge in every 
approach, is an error control analysis, i.e. 
a mathematical proof that a certain truncation does not miss out a relevant
coupling.             
\end{itemize}

\section{Relational observables, reduced phase space, canonical 
quantisation, path integral formulation}
\label{s2}

In this section we review elements of canonical quantisation of constrained
systems. CQG practitioners can safely skip this section.

\subsection{Classical phase space of constrained systems}
\label{s2.0}

Consider a phase space coordinatised by two sets of canonical pairs 
$z=((p_a,q^a),(y_j,x^j))$. We define the coordinate functions 
$Z=(P_a,Q^a,Y_j,X^j)$ on the phase space by $P_a(z)=p_a$ etc. The 
non vanishing Poisson brackets are 
\be \label{2.1}
\{P_a,Q^b\}=\delta_a^b,\; 
\{Y_j,X^k\}=\delta_j^k
\ee
The labels $a,b,c,..=1,2,..$ and $j,k,l,..-1,2,..$ 
can be taken from a discrete index set.
In classical mechanics these sets are finite, in classical field theory 
they are countably infinite. In the latter case we think of them as obtained as 
$P_a=<e_a,P>,\; Q^a=<f^a,Q>, Y_j=<b_j,Y>, X^j=<c^j,X>$ where $<.,.>$ denotes 
the inner product of the appropriate 1-particle Hilbert space including 
summation over gauge group representation labels 
(tensor representations of the diffeomorphism group, irreducible 
representations of Yang-Mills type gauge groups etc.) which we suppress, 
$P,Q,Y,X$ are the continuum fields and $(e_a,f^a),\; (b_j,c^j)$ are pairs 
of bases and co-bases respectively of the one particle Hilbert spaces, i.e. 
$<e_a,f^b>=\delta_a^b,\;<b_j,c^k>=\delta_j^k$.

These labels are adapted to a system of first class 
constraints  
\be \label{2.2}
\{C_j,C_k\}=f_{jk}\;^l\;\; C_l
\ee
where the structure functions $f_{jk}\;^l$ need not be constant on the phase 
space which is indeed the case for canonical GR. Again in field theory 
the constraints $C_j=<e_j,C>$ are smeared versions of the continuum 
constraints $C$. There may also be a true 
Hamiltonian $H$ already (not the case for generally covariant systems such as 
GR) which is gauge invariant i.e.
\be \label{2.3}
\{C_j,H\}=f_{j0}\;^k\;\; C_k
\ee
in which case we expand the phase space by an additional canonical pair 
$(y_0,x^0)$ which has vanishing Poisson brackets with all others and we add 
an additional constraint
\be \label{2.4}
C_0=y_0+H\;\; \Rightarrow\;\;\{C_j,C_0\}=f_{j0}\;^k\;\; C_k 
\ee
The augmented set of constraints is thus again first class so that w.l.g.
we may restrict to the case without true Hamiltonian.

The construction of the reduced phase space can now be described in two 
equivalent ways, either using so called relational Dirac observables 
which are gauge invariant or by using gauge fixing. Both descriptions 
take the following steps: First we solve the constraints $C_j=0$ 
for the momenta $y_j$ to arrive at equivalent sets of constraints
\be \label{2.5}
\hat{C}_j=y_j+h_j(p,q;\;x)
\ee
In general this involves picking a square root because typically the 
constraints depend quadratically on the momenta (recall the case of the 
relativistic particle). Also the dependence of the functions $h_j$ on 
$x$ is in general non-trivial, if it is trivial one says that the system 
is deparametrised. This happens only in very fortunate cases. In field theory
to write the constraints as (\ref{2.5}) often is not possible algebraically 
but one must also solve PDE's or equivalently invert non-diagonal 
infinite dimensional matrices which is technically challenging. 
One will therefore select the 
split of the canonical coordinates into the two sets 
in such a way as to avoid these complications as much as possible.

The virtue of the equivalent set of constraints (\ref{2.5}) is that they 
are Abelian, i.e. they have vanishing Poisson brackets among each 
other. To see this we note that $\{\hat{C}_j,\hat{C}_j\}$ must 
vanish when the constraints hold since $\hat{C}_j=M_j\;^k\; C_k$ for 
some matrix $M$. On the other hand, $\{\hat{C}_j,\hat{C}_j\}$ does not 
depend on $y_j$, it may therefore evaluated at any $y_j$ and in 
particular at $y_j=-h_j$. Thus $\{\hat{C}_j,\hat{C}_j\}$ vanishes identically, 
not only when the constraints hold. 

\subsection{Relational Dirac observables}
\label{s2.1}

This observation is the mechanism behind
relational observables \cite{20}. Consider ``coordinate functions'' 
$k^j:\; \mathbb{R}\to \mathbb{R};\;\; t\mapsto c^j(t)$ which are constants 
on the phase space and a one parameter set of gauge fixing conditions 
\be \label{2.6}
G^j(t):=X^j-k^j(t)   
\ee
Then for any function $F$ on the phase space the quantity 
\be \label{2.7}
O_F(t):=[\exp(s^j\; V_j\cdot)\; F]_{s=-G(t)}
\ee
is gauge invariant, i.e. it has vanishing Poisson brackets with the 
constraints $\hat{C}_j$. Here $V_j\cdot F:=\{\hat{C}_j,F\}$ is the 
Hamiltonian vector field of $\hat{C}_j$ and in (\ref{2.6}) one 
first computes the Hamitonian flow for $s$ a constant function on the phase 
space and after that evaluates at $s=-G(t)$. To verify that 
\be \label{2.8}
\{\hat{C}_j,O_F(t)\}=V_j\cdot O_F(t)=0
\ee
it is enough to use the Taylor expansion of (\ref{2.7}), that 
$V_j\cdot G^k=\delta_j^k$ and that $[V_j,V_k]=X_{\{\hat{C}_j,\hat{C}_k\}}=0$.

The observables $O_F(t)$ satisfy a number of interesting properties which 
are not difficult to prove \cite{4,20}. We mention two of them 
\be \label{2.9}
\{O_{F_1}(t),\; O_{F_2}(t)\}=O_{\{F_1,F_2\}^\ast}(t),\;
O_{F_1\circ F_2}(t)=F_1(O_{F_2}(t))
\ee 
where 
\be \label{2.10}
\{F_1,F_2\}^\ast=\{F_1,F_2\}
+\{F_1,\hat{C}_j\}\;\{G^j(t),F_2\}
-\{F_2,\hat{C}_j\}\;\{G^j(t),F_1\}
\ee
is the Dirac bracket defined by $\hat{C}_j, G^j(t)$ considered as a second 
class set of constraints.  The second relation in (\ref{2.9}) implies 
that 
\be \label{2.11}
[O_F(t)](z)=F([O_Z(t)](z))
\ee
i.e. to compute $O_F(t)$ we just need to compute the observables corresponding
to the coordinate functions $Z(z)$.
We easily see $O_{X^j}(t)=k^j(t)$ is just a constant 
function on the phase space and thus trivially has vanishing Poisson brackets 
with the constraints. Next, since $\hat{C}_j$ is already gauge invariant 
\be \label{2.12}
O_{\hat{C}_j}(t)=\hat{C}_j=O_{Y_j}(t)+h_j(O_P(t),O_Q(t);\;k(t))
\ee
Thus on the constraint surface $O_{Y_j}(t)$ can be expressed in terms 
of the $O_{P_a}(t), O_{Q^a}(t)$ which are therefore, at any time, 
the coordinates of the 
gauge invariant reduced phase space. Since $\{P_a,G^j(t)\}=
\{Q^a,G^j(t)\}=0$ from the first relation in (\ref{2.10})
\be \label{2.13}
\{O_{P_a}(t),O_{Q^b}(t)\}=O_{\{P_a,Q^b\}}(t)=\delta_a^b
\ee
thus the observables $O_{P_a}(t),O_{Q^a}(t)$ are canonically conjugate at 
all times $t$. This is important: If (\ref{2.13}) would be a complicated 
expression, it would be hard to find representations of the canonical 
commutation relations in the quantum theory.

Let us focus on a function $F$ which just depens on $p,q$. Then 
\be \label{2.14}
\frac{d}{dt}\; O_F(t)
=O_{\dot{k}^j(t)\{\hat{C}_j,F\}}
=O_{\dot{k}^j(t)\{h_j,F\}}
=O_{\dot{k}^j(t)\{h_j,F\}^\ast}
=\{O_{\dot{k}^j(t)\;h_j}(t),O_F(t)\}
\ee
where we used $\{F,G^j(t)\}=\{h_k,G^j(t)\}=0$ in the step before the last 
and then (\ref{2.9}). It follows that the gauge invariant, physical Hamiltonian 
that drives the one parameter time evolution of the those observables $O_F$ 
is given by 
\be \label{2.15}
H(t)=\dot{k}^j(t)\; h_j(O_P(0),O_Q(0);\;k(t)),\;
\{H(t),O_F\}([O_P(t),O_Q(t))=\frac{d}{dt} O_F(t)
\ee
It is explicitly time dependent unless $\dot{k}^j$ is a constant and 
if the dependence on $k(t)$ in $h_j$ vanishes (for example 
if the dependence on $x$ of $h_j$ is only through spatial derivatives).  
Of course a time independent physical Hamiltonian is much preferred 
(conservative system).

\subsection{Gauge fixing}
\label{s2.2}

The gauge fixing condition (\ref{2.6}) has already appeared in the relational
approach but there it served to extend functions $F(q,p)$ off the 
gauge cut $G^j=0$ of the constraint surface $C_j=0$ to the full phase space.
In the gauge fixing approach we consider the reduced phase space as 
coordinatised by the so called ``true degress of freedom''
$P,Q$ rather than $O_F(0),O_P(0)$. The ``gauge degrees of freedom''
$Y,X$ are discarded from this description altogether. This gives an 
equivalent description as follows: First we note that the canonical 
brackets among the $P,Q$ and the $O_P(0), O_Q(0)$ are identical. Next
consider a gauge transformation generated by the constrained 
Hamiltonian $\hat{C}(s)=s^j \hat{C}_j$. There 
is a residual 1-parameter set family of gauge transformations allowed   
that stabilise the gauge condition $G^j(t)=0$
\be \label{2.16}
\frac{d}{dt} G^j(t)=\partial_t G_j(t)+\{\hat{C}(s),G^j(t)\}
=-\dot{k}^j(t)+s^j=0
\ee
which fixes the Lagrange multiplier $s^j$. We note the fixed values 
\be \label{2.17}
s^j_\ast=\dot{k}^j(t)\;1,\;
X^j_\ast=k^j(t)\;1,\;
Y_j^\ast=-h_j(Q,P;\;k(t))
\ee
Let $F$ be a function of the true degrees of freedom $q,p$ only. Then 
the reduced Hamiltion $H_\ast(t)=H_\ast(q,p;t)$ is defined by 
\be \label{2.17a}
\{H_\ast(t),F\}(q,p):=(\{\hat{C}(s),F\}_{s=s_\ast,x=x_\ast,y=y^\ast})(q,p)
=\dot{k}^j(t)\{h_j(.,.;c(t)),F\}(q,p)
\ee
whence 
\be \label{2.18}
H_\ast(q,p;t)=\dot{k}^j(t)\; h_j(q,p;k(t))
\ee
Comparing (\ref{2.18}) and (\ref{2.18}) we see that the reduced Hamiltonian 
is the same function of the true degrees of freedom as is the physical 
Hamiltonian of the relational Dirac degrees of freedom. The two descriptions
are truly isomorphic. However one should keep in mind that the physical
interpretation of the true degrees of freedom and the reduced Hamiltonian 
depends on the choice of gauge fixing: $Q,P$ are those observables 
which on the gauge cut defined by $G^j(0)=0$ coincide with the gauge
invariant observables $O_P(0),O_Q(0)$ and their evolution is generated 
by the reduced Hamiltonian induced by $G^j(t)=0$.

\subsection{Reduced phase space quantisation: Hamiltonian formulation}
\label{s2.3}

Since the gauge invariant content of the theory can be described in terms 
of the true degrees of freedom $q,p$ and the reduced Hamiltonian $H_\ast(t)$
we consider canonical and path integral quantisation of this description.
We restrict to the case of a reduced Hamiltonian which is not explicitly 
time depndent. This will be sufficient for our application in section 
\ref{s4}. An extension to the explicitly time dependent case is possible 
but more complicated.\\
\\
Canonical quantisation involves finding a representation of the canonical
commutation relations and $^\ast-$relations (we set $\hbar=1$ and just 
consider bosonic variables for simplicity)
\be \label{2.19}
[P_a,Q^b]=i\;\delta_a^b\;1,\;\;
P_a^\ast=P^a, (Q^a)^\ast=Q^a
\ee
The algebra generated by the generators $P_a,Q^a,1$ is called the 
Heisenberg algebra and has the disadvantage that representations $\rho$
of $P_a,Q^a$ are unbounded operators. Thus one passes to the Weyl elements
\be \label{2.20}
W(f,g):=e^{i\;[f_a Q^a+g^a P_a]} 
\ee 
which obey the corresponding Weyl relations induced by (\ref{2.19}) 
\cite{20}. The Weyl elements generated the corresponding Weyl algebra
$\mathfrak{A}$ which is a $^\ast-$algebra (even a $C^\ast-$algebra).

We consider cyclic representations of $\mathfrak{A}$. These are in 
1-1 correspondence with states $\omega$ on $\mathfrak{A}$ i.e. positive,
normalised, linear functionals on $\mathfrak{A}$: 
$\omega(a^\ast a)\ge 0,\;\omega(1)=1,\; \omega(z a + z' a')=
z\omega(a)+z'\omega(a');\; z,z'\in \mathbb{C};\;a,a'\in \mathfrak{A}$.
Moreover $\omega$ gives rise to GNS data $({\cal H},\Omega,\rho)$ where 
${\cal H}$ is a Hilbert space, $\Omega$ a unit vector in $\cal H$ and 
$\rho$ a $^\ast-$representation of $\mathfrak{A}$ as bounded operators
on $\cal H$. The vector $\Omega$ is cyclic for $\rho(\mathfrak{A})$ i.e. 
${\cal D}:=\rho(\mathfrak{A})\Omega$ is a dense, common, invariant domain 
for the operators $\rho(a)$. Note the representation properties 
$\rho(a)\rho(b)=\rho(ab),\;\rho(a)+\rho(b)=\rho(a+b),\; 
\rho(a)^\dagger=\rho(a^\ast)$. The correspondence between $\omega$ and 
the GNS data is via the formula
\be \label{2.21}
\omega(a)=<\Omega,\rho(a)\Omega>_{{\cal H}}   
\ee
i.e. correlation functions among the operators $\rho(a)$ with respect 
to the cyclic vector $\Omega$ are computed by evaluating $\omega$ on them.
One shows that the GNS data are determined by $\omega$ up to unitary 
transformations between representations, see e.g. \cite{21} for all 
the details.

Cyclic representations of, or equivalently states on, $\mathfrak{A}$ are not 
difficult to find, any Fock representation is of that form. The real challenge
is to find states $\omega$ that support a quantum version 
of the reduced Hamiltonian $H_\ast$. We will drop the subscript $_\ast$ in 
what follows in order not to get confused with the $^\ast-$operation and 
thus just denote it by $H$. It is here where all the difficulties of 
interacting QFT enter and Haag's theorem \cite{22} tells us that such 
representations must be carefully adapted to $H$ if they exist at all.
On the other hand, this physical selection criterion on $\omega$ is 
welcome as otherwise we would be confronted with the question which 
unitary equivalence class of representation to choose (recall that 
spectra, i.e. measurement values, of operators depend on the unitary 
equivalence class). 

The construction of a quantum version of $H$ faces 
many issues such as: 1. operator ordering choices, 2. ultra-violet
(UV) singularities (summation over an infinite number of mode labels $a$) and 
3. infra-red (IR) singularities (working in non-compact spatial manifolds).
It is a common theme in constructive QFT \cite{23} to tame these issues 
by imposing a mode (UR) cut-off $a<M$, an IR cut-off (compactification 
of the spatial manifold e.g. on a torus of size $R$ with periodic 
boundary conditions) and to pick some truncation and ordering $H_M$ of 
$H$ with respect to the $P_a, Q^a,\; a\le M$ such that $H_M$ is a 
self-adjoint operator on ${\cal H}_M$ where $({\cal H}_M,\Omega_M,\rho_M)$ 
are the GNS data of a state $\omega_M$ on $\mathfrak{A}_M$ where 
$\mathfrak{A}_M$ is the Weyl algebra generated by the Weyl elements 
(\ref{2.20}) subject to the condition $f_a=0,\; a>M,\;g^a=0,\;a>M$. As
we are now in a situation with a finite number of degrees of freedom 
the problem of finding suitable $\omega_M$ is trivial as the von Neumann
theorem \cite{21} 
says that there is only one unitary equivalence class available 
if one asks that the representation is regular (i.e. we can differentiate 
the Weyl elements to obtain self-adjoint operators 
$\rho_M(P_a),\rho_M(Q^a),\;a\le M$ by Stone's theorem). The remaining 
task, at fixed $M$, is then to find a suitable operator $\rho_M(H_M)$. 
We assume that this task can be performed at any finite $M,R$. We will
abuse notation and write $H_M$ for $\rho_M(H_M)$ and $W(f,g)$ for 
$\rho_M(W(f,g))$ in what follows in order not to clutter the formulae.

We now have constructed a family of theories labelled by $M,R$. They
can be interpreted as a compactified version of the theory that we actually
want at finite resolution $M$. The idea of renormalisation is to impose 
that all those theories labelled by $M$ at fixed $R$ descend from a common
theory at the same fixed $R$ by coarse graining, i.e. by restricting 
observables to mode labels $a\le M$. In other words we want that 
\be \label{2.22}
\omega(a)=\omega_M(a)\;\forall\; a\in \mathfrak{A}_M,\;\;
P_M \; H\; P_M=H_M
\ee
where $P_M:\; {\cal H}\to {\cal H}_M$ is the orthogonal projection. 
These conditions are not automatically met using some choice of 
$\omega_M,H_M$ that we pick at any given $M$. The idea of Hamiltonian
renormalisation \cite{18} is now to define a renormalisation 
flow on the family of theories $(\omega_M,H_M)$ starting with some initial
family $(\omega^{(0)}_M,H^{(0)}_M)$. For this one picks coarse graining 
maps $J^{(n)}_{MM'},\; M<M',\;n\in \mathbb{N}_0$ with 
$J^{(n)}_{MM'}:\; {\cal H}^{(n+1)}_M\; \to {\cal H}^{(n)}_{M'}$ and picks 
the states $\omega^{(n+1)}_M$ such that these are isometric injections. 
Then one defines $H^{(n+1)}_M:=J_{MM'}^\dagger H^{(n)}_{M'} J^{(n)}_{MM'}$.
This defines a flow of families $n\mapsto (\omega^{(n)}_M,H^{(n)}_M)$ a 
fixed point of which then accomplishes the task (\ref{2.22}). Namely
a fixed point $(\omega,H)$ 
of this flow results in an inductive limit Hilbert space 
\cite{21} 
$\cal H$ with isometric injections $J_M\; {\cal H}_M\to {\cal H}$   
and where $P_M=J_M\; J_M^\dagger$ is the corresponding projection 
together with a quadratic form $H$ on $\cal H$ such that 
$H_M=J_M^\dagger H J_M$. If the quadratic form is closable and is bounded
from below, we do get a self-adjoint operator via the Friedrichs extension
\cite{24}.

The 
details of the flow depend on the choice of the coarse graining maps
which are typically of the form 
\be \label{2.23}
J^{(n)}_{MM'} W(f,g)\Omega^{(n+1)}_M=W(I_{MM'}(f,g))\Omega^{(n)}_{M'}
\ee
where $I_{MM'}$ is a coarse graining map on the smearing functions which 
maps those at coarse resolution $M$ into those at finer resolution $M'$. 
In order that the inductive limit construction works one must pick 
those $I_{MM'}$ such that $I_{M_2 M_3}\circ I_{M_1 M_2}=I_{M_1 M_3}$    
(``cylindrical consistency'') for $M_1<M_2<M_3$ and such that the label
set $\cal M$ of cut-offs is partially ordered and directed. This 
can be ensured for example using multi resolution analysis familiar from
wavelet theory (see e.g. \cite{25} and references therein).

If one has managed to remove the cut-off $M$ in this manner, one still 
must take the 
theromdynamic limit $R\to \infty$ which meets new technical challenges
and can give rise to different ``phases'' of the theory \cite{21}.\\
\\
Summary:\\
The reduced phase space quantisation can be carried out entirely within   
the Hamiltonian or canonical framework but in QFT renormalisation
is typically required. Since the flow equations can typically not be solved 
in closed form, some approximation scheme is required. This is the canonical
or Hamiltonian equivalent of truncations performed in ASQG.

\subsection{Reduced phase space quantisation: Path integral formulation}
\label{s2.4}

We now turn to a path integral framework. We start again from some 
$\omega_M,H_M$ with GNS data $({\cal H}_M,\Omega_M,\rho_M)$. The cyclic 
vector $\Omega_M$ is generically not an eigenstate of $H_M$. Suppose 
that $H_M$ is bounded from below (w.l.g. by zero)
and has a unique ground state $\Omega_{0,M}$ i.e. $H_M \Omega_{0,M}=0$.
For sufficiently complicated $H_M$ this vector will not be computable 
in closed form.
Let $U_M(t)=exp(-i t H_M)\; t\in \mathbb{R};\;\;C_M(s)=\exp(-s H_M),\;
s\ge 0$ 
be the unitary group and contraction semi-group generated by $H_M$ respectively
and consider the following normalised vectors
\be \label{2.24}
\Omega^T_{\pm,M}:=U_M(\pm T)\;\Omega_M,\;\;
\Omega^T_M:=\frac{C_M(T) \Omega_M}{||C_M(T)\;\Omega_M||}
\ee
If the strong limits of (\ref{2.24}) exist, denoted by $\Omega^\pm_M,
\Omega^C_M$ respectively, then these obey 
$U_M(t)\Omega^\pm_M=\Omega^\pm_M,\; C_M(s)\Omega^C_M=\Omega^C_M$, i.e. 
they are ground states of $H_M$. By the assumed uniquenes 
\be \label{2.25}
\Omega^\pm_M=e^{i\alpha_\pm} \Omega_{0,M},\;
\Omega^C_M=e^{i\alpha_C} \Omega_{0,M}
\ee
for some phases $\alpha_\pm,\alpha_C$. We have
\be \label{2.25a}
\lim_T\; <U_M(T)\Omega_M,U_M(-T)\Omega_M>=e^{-i[\alpha_+ -\alpha_-]},\;\;
\lim_T\; ||C_M(t)\Omega_M||^2=\lim_T <\Omega_M,C_M(-2T) \Omega_M>
\ee
whence 
\ba \label{2.25b}
&& <\Omega_{0,M},.\;\Omega_{0,M}>=
\lim_{T\to \infty}
<U_M(T)\Omega_M,.\;U_M(-T) \;\Omega_M>
e^{i[\alpha_+ - \alpha_-]}
=\lim_{T\to \infty} \frac{<U_M(T)\Omega_M,.\;U_M(-T) \;\Omega_M>}
{<\Omega_M,.\;U_M(-2T) \;\Omega_M>}
\nonumber\\
&& <\Omega_{0,M},.\;\Omega_{0,M}>
=\lim_T \frac{<C_M(T)\Omega_M,.\;C_M(T) \;\Omega_M>}
{<\Omega_M,.\;C_M(2T) \;\Omega_M>}
\ea

\subsubsection{Lorentzian formulation}
\label{s2.4.1}

In the Lorentzian formulation 
we are interested in the time ordered correlation functions (Feynman 
N-point functions) 
\be \label{2.26}
F_{N,M}((t_1,f^1),..,(t_N,f^N))
:=<\Omega_{0,M},\;{\cal T}[W_{t_N}(f^N,0)..W_{t_1}(f^1,0)]\;
\Omega_{0,M}>_{{\cal H}_M} 
\ee
where $W_t(f,g):=U_M(t) W(f,g) U_M(t)^{-1}$ and $f_a,g^a$ vanish for 
$a>M$. The time ordering symbol $\cal T$ orders the time dependence from 
left to right with largest time to the outmost left. Upon derivation by 
$f^k_{a_k},\; k=1,..,N; a\le M$ at $f^1=..=f^N=0$ we obtain 
\be \label{2.27}
F_{N,M}((t_1,a_1),..,(t_N,a_N))
:=<\Omega_{0,M},\;{\cal T}[Q_M^{a_N}(t_N)..Q_M^{a_1}(t_1)]\;
\Omega_{0,M}>_{{\cal H}_M} 
\ee
with $Q^a_M(t):=U_M(t) Q^a U_M(t)^{-1},\; a\le M$. This can also be written
as 
\be \label{2.28}
F_{N,M}((t_1,a_1),..,(t_N,a_N))=i^{-N}\;
[\frac{\delta^N}{\delta f_{a_N}(t_N)..\delta f_{a_1}(t_1)}\; \chi_M(f)]_{f=0},\;
\ee
with the generating functional 
\be \label{2.29}
\chi_M(f):=<\Omega_{0,M},\;{\cal T}[e^{i\int_{\mathbb{R}}\;dt\; f_a(t)\;
Q_M^a(t)}]\;\Omega_{0,M}>_{{\cal H}_M}
\ee
The way in which (\ref{2.29}) written is not practically useful as we 
have not explicit access to $\Omega_{0,M}$ but rather to $\Omega_M$. 
Suppose that $f$ has compact time support in $[-\tau,\tau]$ then we have 
explicitly using (\ref{2.25b})
\ba \label{2.29a}
&& \chi_M(f) 
= \lim_{T\to\infty}
\frac{
<\Omega_M, e^{iT H_M}\; {\cal T}[e^{i \int_{-\tau}^\tau\;
\;dt\; f_a(t) Q^a_M(t)}] \;e^{iT H_M} \Omega_M>
}
{
<\Omega_M, e^{2iT H_M} \Omega_M>
}
\\
&=& \lim_{T\to\infty}\lim_{N\to\infty}
\frac{
<\Omega_M, e^{i(T-t_{N-1}) H_M}\; e^{i\Delta_N f_a(t_{N-1}) Q^a}\;
e^{i\Delta_N H_M}\;..\;
e^{i\Delta_N H_M}\; e^{i\Delta_N f_a(t_{-N}) Q^a}
e^{i(t_{-N}-(-T)) H_M}\;\Omega_M>
}
{
<\Omega_M, e^{2iT H_M} \Omega_M>
}
\nonumber
\ea
where $\Delta_N=\frac{\tau}{N},\;t_k=k\Delta_N,\;k=-N,..,N-1$. 
For suffieciently large $T$ we have $T\ge \tau$ and by extending $f$ by 
zero to $\mathbb{R}-[\tau,-\tau]$ we can identify $\tau=T$. Then 
\ba \label{2.30}
\chi_M(f) &=&
\lim_{T\to\infty}\lim_{N\to\infty}
\frac{Z_{M,T,N}(f)}{Z_{M,T,N}(0)},\;\;
\\
Z_{M,T,N}(f) &=&
<\Omega_M, e^{i\Delta_N H_M}\; e^{i\Delta_N f_a(t_{N-1}) Q^a}\;
e^{i\Delta_N H_M}\;..\;
e^{i\Delta_N H_M}\; e^{i\Delta_N f_a(t_{-N}) Q^a}
e^{i\Delta_N H_M}\;\Omega_M>
\nonumber
\ea
We consider the integral kernel of the unitary propagator
\be \label{2.31}
\int [dq']_M\; U_{M,T,N}(q,q')\psi_M(q'):=(e^{i\Delta_N H_M} \psi_M)(q)
\ee
where $[dq]_M$ is the Lebesgue measure of the $q^a,\; a\le M$. 
As at finite $M$ the representation is in the unitarity class of the 
Schr\"odinger representation we consider $Q^a$ as multiplication by $q^a$ 
and $P_a$ as $i\partial/\partial q^a$. Then one has to argue that for 
$N\to \infty$ at finite $T$
\ba \label{2.32}
&& U_{M,T,N}(q,q')=\int\; [d\hat{q}]_M\;U_{M,T,N}(q,\hat{q}) 
\delta_{q'}(\hat{q})
=[e^{i\Delta H_M(Q,P)} \delta_{q'}](q)
=\int\;[d\frac{p}{2\pi}]_M\; e^{i\Delta H_M(q,p)} \; 
e^{-i\;\sum_{a\le M}\; p_a(q-q')^a}
\nonumber\\
&=& \int\;[d\frac{p}{2\pi}]_M\; e^{i\Delta H_M(q,p)} \; 
e^{-i\;\sum_{a\le M} p_a(q-q')^a}
\ea
Without going into the details of $H_M$ this step is hard to justify,
for Schr\"odinger type of operators one may use Feynman-Kac arguments 
\cite{24}. 

Using this heuristics we obtain 
\ba \label{2.33}
&& Z_{M,T,N}(f)=
\prod_{k=-N}^N \; [dq_k]_M\; \prod_{l=-N}^{N-1} \; [\frac{dp^l}{2\pi}]_M
\;\overline{\Omega_M(q_N)} \;\Omega_M(q_{-N}) \;
e^{-i\sum_{k=-N}^{N-1}\; \Delta_N [\sum_{a\le M} p^k_a 
\frac{q^a_{k+1}-q^a_k}{\Delta_N}-H(p^k,q_k)]}\; 
\times \nonumber\\
&& e^{i\sum_{k=-N}^{N-1} \sum_{a\le M}\; f_a(t_k) q^a_k}
\ea
In what follows we mean by the symbolic expression
\be \label{2.34}
Z(f)=\int\; [dp]\; [dq]\; \overline{\Omega(q(\infty))}\;\Omega(q(-\infty))
\;
e^{i\int\; dt\; f_a(t) q^a(t)} \; 
e^{-i\int\; dt\; [p_a(t)\dot{q}^a(t)\;-H(q(t),p(t))]} 
\ee
some form of limit of (\ref{2.33}) as $N\to\infty, T\to\infty, M\to\infty,
R\to \infty$ in that precise order, if it exists. 
$Z(f)$ is the generating functional 
of all Feynman functions and in $Z(f)/Z(0)$ the vacuum to vacuum 
correlations are removed. 

It is important to note that the path integral (\ref{2.34}) is over phase 
space and not over configuration space to begin with. Integrating out the 
momenta explicitly is only possible when $H$ has a sufficiently simple 
dependence on $p$ and even then it may modify the ``Lebesgue measure''
$[dq]$. We will see this explicitly for the case of GR in section 
\ref{s4}. Note also the dependence of $Z(f)$ on the cyclic vector $\Omega$
or equivalently the corresponding state state $\omega$. 

\subsubsection{Euclidian formulation}
\label{s2.4.2}

There is also an Euclidian version of (\ref{2.34}). We obtain it by 
going back to (\ref{2.29}) and now using the second 
way in (\ref{2.25a}) to express $\Omega_{0,M}$ in terms of $\Omega_M$
\be \label{2.35}
\chi_M(f)=\lim_{T\to\infty}\lim_{N\to\infty}
\frac{
<\Omega_M,e^{-T H_M}\; e^{-i t_{N-1} H_M} 
e^{i \Delta_N\; f_a(t_{N-1}) Q^a} e^{i\Delta_N H_M}..
e^{i\Delta_N f_a(t_{-N}) Q^a} e^{i t_{-N} H_M} e^{-T H_M}\Omega_M>
}
{
<\Omega_M, e^{-(T-(-T)) H_M} \Omega_M>
}
\ee
We define $2N$ smearing functions $f^k_a$ by $f^k_a:=f_a(t_k)$ 
with respect to the label $a$ (which are
considered independent of $t$). Then
we analytically continue $\tau\to i\sigma,\sigma>0$ which  
means $t_k=k \tau/N\to i s_k,\; s_k=k \sigma/N$. Then we 
define, abusing the notation $f_a(s_k):=-f^k_a$. We take $T$ sufficiently 
large $T\ge \sigma$ and extend $f$ trivially to $\mathbb{R}-[-\sigma,\sigma]$
so that we can set $T=\sigma$. 
Then (\ref{2.35}) becomes with $\Delta_N=\frac{T}{N}$
\be \label{2.36}
\chi_M(f)=
\lim_{T\to\infty}\lim_{N\to\infty}
\frac{Z^E_{M,T,N}(f)}{Z^E_{M,T,N}(0)},\;
Z^E_{M,T,N}(f)=<\Omega_M, e^{-\Delta_N H_M}\; e^{\Delta_N f_a(s_{N-1})
Q^a}\;..\; e^{-\Delta_N f_a(s_{-N})}\; e^{-\Delta_N H_M} \Omega_M>
\ee
The rest of the analysis is the same as in the ``Lorentzian'' case and 
results in the symbolic formula
\be \label{2.37}
Z^E(f) 
\int\; [dp]\; [dq]\; \overline{\Omega(q(\infty))}\;\Omega(q(-\infty))
\;
e^{\int\; ds\; f_a(s) q^a(s)} \; 
e^{-\int\; ds\; [i\; p_a(s)\dot{q}^a(s)\;+H(q(s),p(s))]} 
\ee
which is the generating functional of Schwinger functions which are obtained 
as (without vacuum correlators)
\be \label{2.38} 
S_N((s_N,a_N),..,(s_1,a_1))=[\frac{\delta^N}{\delta 
f_{a_N}(s_N)..\delta f_{a_1}(s_1)} \frac{Z^E(f)}{Z^E(0)}]_{f=0}
\ee

\subsubsection{Unconstrained phase space path integral}
\label{s2.4.3}

We now want to relate these reduced phase space path integrals to path 
integrals over the unconstrained phase space. This can be accomplished 
by recalling that $y_j$ is eliminated by using the constraint 
$\hat{C}_j=y_j+h_j=0$ while $x^j$ is eliminated using the gauge fixing 
condition $G^j=x^j-k^j$. Consider again first the Lorentzian version. Then 
\be \label{2.39}
Z(f)=\int\; [dq]\; [dp]\; [dy]\; [dx]\; 
\delta[\hat{C}]\; \delta[G]\; \overline{\Omega(q(\infty))}\Omega(q(-\infty))
e^{i<f\cdot q>} \; e^{-i[<p\cdot \dot{q}-H>]}
\ee
with $\delta[\hat{C}]=\prod_t \delta(\hat{C}_j(z(t))$ and  
$\delta[G]=\prod_t \delta(x^j(t)-k^j(t))$ and $<(.)>=\int\;dt (.)$. 
Using the relation $C_j=M_j^k \hat{C}_k$ between the original constraints 
$C$ and the constraints $\hat{C}$ solved for the $y$ we have 
$\{C_j,G^k\}=M_j^k$ when $\hat{C}_k=0$ whence 
\be \label{2.40}
\delta[\hat{C}]=|\det[M]|\; \delta[C],\;
\det[M]=\prod_t \det(M(t))
\ee
Strictly speaking, formula (\ref{2.40}) is not entirely correct: If $C_j$
labels constraints
quadratic in the momenta $y$ then the full constraint surface 
$\bar{\Gamma}$ defined by
$C_j=0$ for all $j$ 
of the phase space $\Gamma$ 
has in fact many branches corresponding, for each $j$, to a choice of
the two possible roots $y_j=-h_j^\pm$ where in the canonical quantisation
we only quantised the sector $y_j=-h_j,\; h_j:=h^+_j$ corresponding 
to the positive root for all $j$. 
Since typically $h_j^-$ is negative that sector is selected 
if we restrict the integral over 
$y_j$ to the negative half-axis but this would prevent us from 
using Gaussian integral techniques to integrate out $y$. To analyse 
this issue, let $\sigma_j=\pm 1$ and  
$\bar{\Gamma}_\sigma=\{(q,p,x,y)\;\; y_j=-h^{\sigma_j}(q,p,x)$ then we have 
a disjoint (up to measure zero sets)
union $\bar{\Gamma}=\cup_\sigma \bar{\Gamma}_\sigma$. Accordingly, we may 
pick on each sector $\bar{\Gamma}_\sigma$ an individual gauge fixing 
condition $G^j_\sigma=x^j-k^j_\sigma$ which defines the 
corresponding sector $\hat{\Gamma}_\sigma=\{(p,q,x,y),\;
y=-h^\sigma(q,p,x),\;x=k_\sigma\}$ of the reduced phase space.
We assume now 1. that we can write 
the original constraints $C_j$, i.e. before solving them for $y_j$, in the
form      
\be \label{2.40a}
C_j=\sum_k P_j^k\; \hat{C}_k^+\;\hat{C}_k^-,\;\hat{C}_k^\pm=y_k+h_k^\pm
\ee
for some invertible matrix $P$.
This is not the most general situation but it will be satisfied 
for the concrete theory that we study later on. Then the gauge 
stability condition on the sector $\sigma$ is 
given by 
\be \label{2.40b}
\dot{k}_\sigma^j=\{f^k C_k,G^j_\sigma\}_{\bar{\Gamma}_\sigma}
=-f^k\; P_k^j\; \sigma_j (h^+_j - h^-_j)=:f^k\; [M_\sigma]_k^j
\ee
which can be solved for $f^j=f^j_\sigma$. The reduced Hamiltonian 
on the sector $\sigma$ is determined by the condition that for 
any function $F$ of $q,p$
\be \label{2.40c}
\{H_\sigma,F\}=\{f^k C_k,F\}_{\hat{\Gamma}_\sigma,f=f_\sigma}
=-f^k_\sigma\; \sum_j P_k^j\; \sigma_j\; (h^+_j - h^j_-)\; 
\{h_j^{\sigma_j},F\}
=\sum_j\;\dot{k}^j_\sigma \{h^{\sigma_j},F\}
\ee
thus 
\be \label{2.40ca}
H_\sigma=\sum_j \dot{k}^j_\sigma\; h^{\sigma_j}_j(q,p,x=k_\sigma)
=-[\sum_j \dot{x}^j\;y_j]_{\hat{\Gamma}_\sigma}
\ee
which returns the earlier result for the ``totally positive'' sector.
Suppose now that in fact 2. $h_j^\sigma=\sigma h_j$ and that 
3. $h_j(q,p,x)=h_j(q,p,-x)$. Then for the choice 
$k^j_\sigma:=\sigma_j\; k^j$ for $k^j$ the choice for the totally 
positive sector we find 
\be \label{2.40cb}
H_\sigma=\sum_j \dot{k}^j\; h_j(q,p,x=k)=H
\ee
is {\it sector independent} and agrees with the value of the 
totally positive sector. Next we have 
(subscript $\sigma$ denotes 
sector restriction)   
\ba \label{2.40d}
&& \delta(G)\; |\det(\{C,G\})|\; \delta(C)
=\sum_\sigma\; \delta(G)_\sigma \;|\det(\{C,G\})|_\sigma\; \delta(C)_\sigma
\nonumber\\
&=& \sum_\sigma \;\delta(G)_\sigma \;|\det(M_\sigma\})|\; 
\prod_j\; \delta(C_j)_\sigma
= \sum_\sigma \;\delta(G)_\sigma \;|\det(M_\sigma\})|\; 
\prod_j\; \delta(-\sum_k P_j^k (h_k^+ - h_k^-)\sigma_k 
\hat{C}^{\sigma_k}_k)
\nonumber\\
&=& \sum_\sigma \; \delta(G)_\sigma \;\delta(\hat{C}_\sigma),\;\;\;
\delta(\hat{C}_\sigma)= \prod_j \delta(\hat{C}_j^{\sigma_j})
\ea
Therefore using (\ref{2.40c})
\ba \label{2.40e}
&&[\sum_\sigma\;1] [\int\; [dq\;dp]\; e^{i\; <f,q>-i\;[<p,\dot{q}>-H]}]
= \sum_\sigma \int\; [dq\;dp\;dx\;dy]\; \delta(G_\sigma)\; 
\delta(\hat{C}_\sigma)
e^{i\; <f,q>-i\;[<p,\dot{q}>-H_\sigma]}]
\nonumber\\
&=& \int\; [dq\;dp\;dx\;dy]\; |\det(\{G,C\})|\; \delta(G)\; 
\delta(C)\;
e^{i\; <f,q>-i\;[<p,\dot{q}>+<y,\dot{x}]}
\ea
It follows that under the assumptions 1.-3. formula (\ref{2.40}) 
is correct up to the constant factor $\sum_\sigma\; 1$ which drops 
out in the quotient $Z(f)/Z(0)$ that removes pure vacuum to vacuum 
amplitudes.

We also can bring the delta
distribution for the constraints to the exponent using a Lagrange 
multiplier $N^j$. This results in (again constant factors e.g. powers 
of $2\pi$ are 
dropped as we consider the fraction $Z(f)/Z(0)$)
\be \label{2.41}
Z(f)=
\int\; [dq]\; [dp]\; [dy]\; [dx]\;[dN]\; 
|[\det(M)]|\; \delta[G]\; \overline{\Omega(q(\infty))}\Omega(q(-\infty))
e^{i<f\cdot q>} \; e^{-i[<p\cdot \dot{q}+y\cdot\dot{x} -N\cdot C>]}
\ee
If we introduce independent ghost and and anti-ghost fields $\eta^j,\rho_j$ 
together with their ``Berezin measures'' $[d\eta] [d\rho]$ one can 
also bring the Dirac matrix determinant into the exponent. The 
$\delta$ distribution of $G$ can be brought to the exponent 
by 
introducing an additional Lagrange multiplier integral 
$\delta[G]=\int\;[dl]\;\exp(i<l\cdot G>)$. An argument often used is 
that $Z(f)$ is unchanged when replacing 
$G$ by $G-l$ as when $Z(f)$ is gauge invariant. If that was true 
in our case we could integrate 
both numerator and denominator with $\int [dl] \exp(-<l\cdot\kappa l>$ where 
$\kappa$ is a positive integral kernel which would 
then replace $\delta[G]$ by $\exp(-<l\cdot\kappa l>)$. However, this 
is manifestly not the case because unwinding what we have done would then 
replace the physical Hamiltonian by $H=\dot{k}\cdot h(P,Q,;k)\to 
[\dot{k}+\dot{l}]\cdot h(Q,P;k+l)\not=H$ unless $h$ is independent 
of $k$ (which is true in the case considered here (no explicit 
time dependence)) and if $\dot{l}=0$. However, the latter condition 
would imply that $l$ is the same at all times and thus we could 
not solve $\delta[G]=\prod_{t,j} \delta(G^j(t)-l^j)$ for all $t$.   
To obtain no explicit time dependence we need $\dot{k}=$const. but not 
$\dot{k}=0$ (then we would have $H=0$). 
The reason why this 
happens is due to the different notions of gauge transformations in 
the Lagrangian and Hamiltonian setting \cite{28}: The Lagrangian gauge  
transformations are kinematical, i.e. they come from a Lie algebra
(in GR it is the Lie algebra of vector fields, the Lie algebra 
of the diffeomorphism group).
In the Hamiltonian setting the gauge transformations are generated by the 
constraints and they do not form a Lie algebra if the structure functions 
are not constant on the phase space (which is precisely the case in GR).
The two notions coincide only on shell, when the classical 
equations of motion hold
but of course in the path integral the set of classical paths has measure
zero.  

We will therefore leave $\delta[G]$ untouched and obtain
the final expression
\be \label{2.42}
Z(f)=   
\int\; [dq]\; [dp]\; [dy]\; [dx]\;[dN]\;
|\int\; [d\eta]\; [d\rho]\; e^{-i<\eta\cdot M\cdot \rho>}|\;
\overline{\Omega(q(\infty))}\Omega(q(-\infty))\; \delta[G]
e^{i<f\cdot q>} \; e^{-i[<p\cdot \dot{q}+y\cdot\dot{x} -N\cdot C>]}\;
\ee
where again $M=(\{C,G\})$ is the Dirac matrix between constraints and gauge 
fixing conditions. Note that the ghost 
integral produces $\det(M)$ rather than $|\det(M)|$ (times an infinite power 
of $i$ which drops out in the modulus) which is why we keep the modulus,
a fact often ignored. 

The expression (\ref{2.42}) is as close as one can 
get to the usual formula $\int\; [d\Phi] e^{-iS(\Phi)}$, where $\Phi$ denotes 
the collection of all configuration fields, without furher specifying 
the constraints. Indeed one would like to 
integrate out $p,y$ in (\ref{2.42}) which would turn the Hamiltonian 
action that has appeared in the exponent into the Lagrangian action.
Whether this is possible explicitly depends on the $y,p$ dependence of 
$C$. Even when possible, this may result in an additional measure 
contribution coming from a corresponding Jacobean. We will see that this 
precisely happens in GR where for our model 
it turns out that the dependence of $<N\cdot C>$ and $<\eta, M\cdot \rho>$
is at most quadratic and linear in $y$ respectively and that the 
appearing Jacobean can be absorbed into a redifinition of the Dirac 
matrix $M$. That this is possible in the Lorentzian framework rests on the 
following elementary property of the Gaussian integral: 
\be \label{2.42a}
\int_{\mathbb{R}}\; du\; e^{z\;u^2}
\ee
It exists for all $z\in \mathbb{C}$ such that either $\Re(z)<0$ or 
$\Re(z)=0, \Im(z)\not=0$. It does not exist for $\Re(z)>0$.   
Finally note also that correctly there is only a smearing 
function $f$ for the true configuration degrees of freedom $q$ and not for 
$x$.\\
\\
We now repeat this analysis for the Euclidian formulation. Almost all
the work has been done already, we just need to remove two factors of $i$.
We find
\be \label{2.43}
Z^E(f)=   
\int\; [dq]\; [dp]\; [dy]\; [dx]\;[dN]\;
|\int \;[d\eta]\; [d\rho]\; e^{<-\eta\cdot M\cdot \rho>}|\;
\overline{\Omega(q(\infty))}\Omega(q(-\infty))\; \delta[G]
e^{<f\cdot q>} \; e^{-i[<p\cdot \dot{q}+i\;y\cdot\dot{x} -N\cdot C>]}\;
\ee
This looks innocent but it is actually desastrous for the Euclidian 
programme: When integrating over $y$ one does not invert the Legendre 
transform between the Lagrangean and the Hamiltonian formulation 
because the symplectic term $y\cdot x$ comes with a relative factor of 
$i$ as comprared to the constrained Hamiltonian $N\cdot C$. This is the 
reason why we prefer the Lorentzian version of the Wetterich equation.
We will see this in more detail in section \ref{s4}. One 
may argue that the Euclidian path integral is better defined than the 
Lorentzian version because the Lorentzian integral is oscillatory while 
the Euclidian one is damped. This is in fact true when the Hamiltonian 
is bounded from below. However, exactly for GR the situation is opposite
because the Euclidian action is not bounded from below due to the famous 
negative conformal mode \cite{26}. This corresponds to case
$\Re(z)>0$ for which (\ref{2.42a}) does not exist.

\section{Asymptotically safe quantum gravity: Summary and logic from 
point of view of CQG}
\label{s3}

As was stressed and reviewed in the previous section one can go back and 
forth between the Hamiltonian and path integral formulation of the 
QFT of a gauge theory. However, whatever formulation one prefers, 
the rigorous matehmatical formulation is a challenge in interacting QFT 
and renormalisation methods are an essential ingredient in order 
to arrive at a mathematically well defined theory with predictive power.
From that perspective we consider ASQG as a particular incarnation of that 
general theme. 

The starting point of the asymptotic safety approach for a general 
theory is the generating functional $Z(f)$ of time ordered N-point functions 
(Lorentzian version) or $Z^E(f)$ of N-point Schwinger functions (Euclidian
version) from which one can obtain 
effective actions by the usual methods. 
As we stressed in the derivation of the previous section,
the symbolic expressions (\ref{2.34}) and (\ref{2.37}) (reduced phase
space path integral) and (\ref{2.42}) and (\ref{2.43}) (full phase space 
path integral), while well motivated, they and their effective actions
are ill-defined as they stand. The idea is to assume that the effective 
actions  
are well-defined, to derive a renormalisation group equation from them 
which by itself is well defined (Wetterich equation) and then to 
forget about the derivation and rather take the renormalisation group 
equation (RGE) as the fundamental guiding principle for the construction of the 
theory. A solution of the RGE then {\it defines} the theory. In 
the subsequent subsections we sketch elements of that programme.\\ 
\\
ASQG practitioners can mostly skip this section, except that we develop
both the Lorentzian and Euclidian versions which leads to adaptions 
by powers of $i$ in the Wetterich equation, the heat kernel calculus and 
the Laplace versus Fourier transform with respect to heat kernel time.
There are also some remarks concerning the treatment of the ghost part 
of the effective action
and the functional derivatives at zero of the Wetterich equation with respect 
to the field on which the EAA depends, which are not often mentioned 
in the ASQG literature, see however \cite{26a}.

\subsection{Effective action without gauge invariance}
\label{s3.1}

Given the generating functional $Z_L(f), Z_E(f)$ of time ordered or Schwinger 
N-point functions respectively, formally defined by 
\be \label{3.1}
Z_L(f):=\int\; [dq]\; e^{-i S_L(q)}\; e^{i<f,q>},\;\;
Z_E(f):=\int\; [dq]\; e^{- S_E(q)}\; e^{<f,q>},\;\;
<f,q>:=<f\cdot q>
\ee
the functional 
\be \label{3.1a}
C_L(f):=i^{-1} \ln(Z_L(f),\; C_E(f):=\ln(Z_E(f)) 
\ee
is the generating functional of ``connected'' such functions and its 
Legendre transform 
\be \label{3.2}
\Gamma_L(\hat{q}):={\rm extr}_f\;[-C_L(f)+<f,\hat{q}>],\;
\Gamma_E(\hat{q}):={\rm extr}_f\;[-C_E(f)+<f,\hat{q}>]
\ee
the effective action or generating functional of ``one particle irreducible''
(1PI) such functions. 
This terminology comes from their graphical 
interpretation,
see e.g. \cite{27}. If any of these objects is well defined in the sense 
that its functional derivatives yield non-singular tempered distributions,
so are the others and one has ``solved the theory''. The effective action is 
the most compact definition of the theory because the 1-PI functions are 
the ``atoms'' of all correlation functions.

\subsection{Background field method without gauge invariance}
\label{s3.2}

It maybe convenient for various purposes to introduce a background 
field $\bar{q}$, especially in GR where it can be used to define 
preferred covariant derivatives with respect to a background metric
in order to construct additional structures (such as those used 
in asymptotic safety). The method has been invented in \cite{12} in fact
first in the Lorentzian signature setting and is naturally used in QFT 
in CST for example in order to define the Hadamard condition on the 
2-point function \cite{15,13}. 

The background objects will carry an additional overbar and are defined
by (Euclidian: $s=0$; Lorentzian: $s=1$) 
\ba \label{3.3}
\bar{Z}_s(f,\bar{q}) &:=& \int\;[dh]\; e^{-i^s S_s(\bar{q}+h)}  
\; e^{i^s<f\cdot h>}
\nonumber\\
\bar{C}_s(f,\bar{q}) &:=& i^{-s} \ln(\bar{Z}_s(f,\bar{q}))
\nonumber\\
\bar{\Gamma}_s(\hat{q},\bar{q}) &:=& {\rm extr}_f\; [-\bar{C}_s(f,\bar{q})
+<f\cdot \hat{q}>]
\ea
They are explicitly dependent on $\bar{q}$. The relation to the background 
independent objects are as follows:\\
Introducing a new integration variable $q=\bar{q}+h$ in the first line 
we find 
\ba \label{3.4}
\bar{Z}_s(f,\bar{q}) &=& e^{-i^s\; <f,\bar{q}>}\; Z_s(f)
\nonumber\\
\bar{C}_s(f,\bar{q}) &=& C_s(f)-<f\cdot \bar{q}>
\nonumber\\
\bar{\Gamma}_s(\hat{q},\bar{q}) &=& \Gamma_s(\hat{q}+\bar{q})
\ea
Since what we are interested in is really the background independent 
object $\Gamma_s(\hat{q})$ we may obtain it from the background dependent 
object $\bar{\Gamma}_s(\hat{q},\bar{q})$ by the simple formula
\be \label{3.5}
\Gamma_s(\hat{q})=
[\bar{\Gamma}_s(\hat{q}',\bar{q})]_{\hat{q}':=0,\bar{q}:=\hat{q}}
\ee
The reason for not setting instead $\hat{q}'=\hat{q},\; \bar{q}=0$ is 
because a zero background may jeopardise intermediate constructions that 
one performs with it (e.g. a zero background metric cannot be inverted).  

\subsection{Effective action with gauge invariance}
\label{s3.3}

We now consider a gauge theory with gauge group $\mathfrak{G}$ and Lie algebra 
$\mathfrak{g}$. The group $\mathfrak{G}$ acts on the configuration 
space $\mathfrak{Q}$ of the fields $q$ by gauge 
transformations 
\be \label{3.6}
\alpha:\; \mathfrak{G}\times \mathfrak{Q}\to \mathfrak{Q};\;
q\mapsto \alpha_g(q),\; \alpha_{g}\circ \alpha_{g'}=\alpha_{gg'}
\ee 
for $g,g'\in \mathfrak{G}$. A gauge fixing condition 
\be \label{3.7}
G:\; \mathfrak{Q} \to \mathfrak{g};\; q\mapsto G(q)
\ee
is a Lie algebra valued function on the configuration space subject to the 
condition that for each $q\in \mathfrak{Q}$ there exists a unique 
$g(q)\in \mathfrak{G}$ such that $G(\alpha_{g(q)}(q))=0$. If 
$\exp:\; \mathfrak{g}\to \mathfrak{G}$ is the exponential map then 
one can instead require that for each $q\in \mathfrak{Q}$ there exists 
a unique $u(q)\in \mathfrak{g}$ such that $G(\alpha_{\exp(u(q))}(q))=0$.   
The conditions are not equivalent because the exponential map is not a 
bijection, typically it is neither injective (e.g. the Lie algebra 
of $U(1)$ is $\mathbb{R}$ but all $u=v+2\pi n,n\in \mathbb{Z}, v\in[0,2\pi)$ 
map to $g=e^{iu}=e^{iv}$) nor 
surjective (general $g\in \mathfrak{G}$ are not in the component of the 
identity). We will, as is customary ignore these complications and 
work with the Lie algebra condition on the gauge fixing condition.
Then we have the Fadeev-Popov identity
\be \label{3.8}
1=\int_{\mathfrak{g}}\; [du]\; \delta[G(\alpha_{\exp(u)}(q))]\;
|\det[\frac{\delta}{\delta u}\;G(\alpha_{\exp(u)}(q))]|
\ee
for any $q\in \mathfrak{Q}$. 

Let now $F:\; \mathfrak{Q}\to \mathbb{C}$ be a gauge invariant function
$F\circ \alpha_g=F\;\forall \; g\in \mathfrak{G}$. Then the symbolic functional
integral
\be \label{3.9}
I_F:=\int_{\mathfrak{Q}}\;[dq]\;F(q) 
\ee
can also be written
\be \label{3.9b}
I_F:=\int_{\mathfrak{g}}\; [du] \;\int_{\mathfrak{Q}}\;[dq]\;F(q) 
\delta[G(\alpha_{\exp(u)}(q))]\;
|\det[\frac{\delta}{\delta u}\;G(\alpha_{\exp(u)}(q))]|
\ee
We introduce a new integration variable $q'=\alpha_{\exp(u)}(q)$. If
$\alpha_g$ is a bijection for any $g\in G$ (e.g. no fixed points) then 
$\alpha_{\exp(u)}$ is a permutation on $\mathfrak{Q}$ so that the Jacobean
or Radon-Nikodym derivative is
\be \label{3.10}
J(q,u):=\frac{d[\alpha_{\exp(u)}(q)]}{[dq]}=
|\det[\frac{\delta\alpha_{\exp(u)(q)}}{\delta q}]|=1
\ee
It would be enough to require that $J(q,u)=J(u)$ just depends on $u$. 
Clearly $F(q)=F(q')$ by gauge invariance and provided 
that 
\be \label{3.11}
\det[\frac{\delta}{\delta} G(\alpha_{\exp(v+u)}(q))]
-\det[\frac{\delta}{\delta v} G(\alpha_{\exp(v)\exp(u)}(q))]=O(v) 
\ee
we have
\be \label{3.12}
\det[\frac{\delta}{\delta u} G(\alpha_{\exp(u)}(q))]    
=\{\det[\frac{\delta}{\delta v} G(\alpha_{\exp(v+u)}(q))]\}_{v=0}
=\{\det[\frac{\delta}{\delta v} G(\alpha_{\exp(v)}(q'))]\}_{v=0}
\ee
Then after relabelling $q'$ by $q$
\be \label{3.13}
I_F=\int_{\mathfrak{g}}\;[du]\;J(u)\;
\int\;[dq]\; F(q)\;  
\delta[G(q)]\; 
|\det[\frac{\delta}{\delta v} G(\alpha_{\exp(v)}(q))]|_{v=0}
\ee
or for any fixed gauge invariant functional $F_0$
\be \label{3.14}
\frac{I_F}{I_{F_0}}=
\frac{
\int\;[dq]\; F(q)\;  
\delta[G(q)]\; 
|\det[\frac{\delta}{\delta v} G(\alpha_{\exp(v)}(q))]|_{v=0}
}
{
\int\;[dq]\; F_0(q)\;  
\delta[G(q)]\; 
|\det[\frac{\delta}{\delta v} G(\alpha_{\exp(v)}(q))]|_{v=0}
}
\ee
If $G$ meets the stated conditions then so does $G-w$ where 
$w\in \mathfrak{g}$ and since (\ref{3.14}) does not depend on the choice of 
this class of $G$ we can integrate both numerator and denominator 
against $\int\; [dw]\;exp(-<w,\kappa\cdot w>)$ with $G$ replaced by 
$G-w$ where 
$\kappa:\mathfrak{g}\to \mathfrak{g}$ is an integral kernel. Finally 
introducing ghost and anti-ghost $\eta,\rho$ Berezin integrals 
one finds 
\be \label{3.15}
\frac{I_F}{I_{F_0}}=
\frac{
\int\;[dq]\;[d\eta]\;[d\rho]\; F(q)\; e^{-<G(q),\kappa G(q)>}\;
\exp(<\eta,[\frac{\delta}{\delta v} G(\alpha_{\exp(v)}(q))]_{v=0}\cdot\rho>)
}
{
\int\;[dq]\;[d\eta]\;[d\rho]\; F_0(q)\; e^{-<G(q),\kappa G(q)>}\;
\exp(<\eta,[\frac{\delta}{\delta v} G(\alpha_{\exp(v)}(q))]_{v=0}\cdot\rho>)
}
\ee
The application of (\ref{3.15}) consists in picking a gauge invariant 
action $S_s(q)$ and $F_0=e^{-i^s S_s}$ while $F=e^{-i^s S_s}\; O$ where 
$O$ is any gauge invariant functional that one may obtain by 
functionally differentiating $e^{i^s <f,q>}$ at $f=0$ which thus 
gives rise to the generating functional (rescaling $\kappa$ and the ghost 
term by 
$e^{is}$)
\be \label{3.16}
Z_s(f)= 
\int\;[dq]\;[d\eta]\;[d\rho]\; e^{-i^s S_s(q)}\; e^{-e^{is} 
<G(q),\kappa G(q)>}\;
\exp(e^{is}\;<\eta,
[\frac{\delta}{\delta v} G(\alpha_{\exp(v)}(q))]_{v=0}\cdot\rho>)
\; e^{i^s <f\cdot q>}
\ee
for the case of a gauge theory. From here on the definition 
of $C_s(f), \Gamma_s(\hat{q})$ is identical to the case without gauge 
invariance.

\subsection{Background field method with gauge invariance}
\label{s3.4}

The background dependent generating functional 
in the case with gauge invariance is defined by
\be \label{3.17}
\bar{Z}_s(f,\bar{q}) =  
\int\;[dh]\;[d\eta]\;[d\rho]\; e^{-i^s S_s(\bar{q}+h)}\; 
e^{-e^{is} <\bar{G}_{\bar{q}}(h),\kappa \bar{G}_{\bar{q}}(h)>}\;
\exp(e^{is} <\eta,[\frac{\delta}{\delta v} 
\bar{G}_{\bar{q}}(\alpha^{\bar{q}}_{\exp(v)}(h))]_{v=0}\cdot\rho>)
\; e^{i^s <f\cdot h>}
\ee
while $\bar{C}_s(f,\bar{q}),\;\bar{\Gamma}(\hat{q},\bar{q})$ are defined 
as in the case without gauge invariance. Here $\bar{G}_{\bar{q}}(h)$ is a 
background dependent gauge fixing condition which is as admissable as any 
other one subject to the conditions spelled out in the previous subsection
and 
\be \label{3.18}
\alpha^{\bar{q}}_{\exp(u)}(h):=\alpha_{\exp(u)}(\bar{q}+h)-\bar{q}
\ee
If we now introduce $q=\bar{q}+h$ as a new integration variable 
we find again 
\be \label{3.19}
\bar{Z}_s(f,\bar{q})=Z_s(f)\; e^{-i^s <f,\bar{q}>}
\ee
provided that we choose in $Z_s$ the gauge fixing condition
\be \label{3.20}
G(q):=\bar{G}_{\bar{q}}(q-\bar{q})
\ee
In particular we have again (\ref{3.5}). Thus although we may use a 
background heavily in order to define $\bar{Z}_s(f,\bar{q})$, we can 
use the corresponding $\bar{\Gamma}_s(\hat{q},\bar{q})$ to define
the background independent $\Gamma_s(q)$.

The real power of the background method in usual QFT arises due to the 
existence of distinguished, so-called ``background gauges''. These are 
particular gauge conditions with the property that $Z_s(f,\bar{q})$ 
is in fact invariant under simultaneous  
gauge transformations of both $f$ and $\bar{q}$ for suitable choice
of kernel $\kappa$ (see \cite{12}). Here a gauge transformation of 
$\bar{q}$ is defined to be the same as for $q$ while the gauge transformation 
of $f$ is defined by duality 
\be \label{3.21}
<f,\alpha_g(q)>=:<\alpha^\ast_{g^{-1}}(f),q>
\ee
In case of GR these background gauge transformations are just the ordinary 
transformations of $f,\bar{q}$ viewed as ordinary tensor fields and 
thus suitable background gauges have to be diffeomorphism covariant 
in the usual sense, e.g. the deDonder gauge would be such a gauge.  
This has the advantage that the 
terms that appear in the effective action are restricted to be gauge 
invariant functionals of $\hat{q}$ under the usual action $\alpha_g$ 
which implies a huge reduction in the number of all possible terms.     

\subsection{Effective average action}
\label{s3.5}

We now modify the generating functional $\bar{Z}_s(f,\bar{q})$ by 
\ba \label{3.22}
&& \bar{\Gamma}_{s,k}(f,\bar{q})
:=\int\; 
\int\;[dh]\;[d\eta]\;[d\rho]\; e^{-i^s S_s(\bar{q}+h)}\; 
e^{-e^{is} <\bar{G}_{\bar{q}}(h),\kappa \bar{G}_{\bar{q}}(h)>}\;
\exp(e^{is} <\eta,[\frac{\delta}{\delta v} 
\bar{G}_{\bar{q}}(\alpha^{\bar{q}}_{\exp(v)}(h))]_{v=0}\cdot\rho>)
\times \nonumber\\
&&
\; e^{i^s <f\cdot h>}\; e^{-\frac{i^s}{2}<h,R_k(\bar{q}) h>}
\ea
If we are given a system without gauge invariance, the terms involving $G$ 
and the ghost integrals have to be dropped.

The one parameter family $k\mapsto R_k(\bar{q})$ 
of possibly background dependent and coupling dependent (coupling 
dependence increases the non-linearity in the flow of couplings as derived 
below)
integral kernels is supposed to 
obey a number of properties. Consider harmonic analysis on the 
spacetime determined by 1. the manifold on which the given QFT is defined and 
2. the background metric determined by $\bar{q}$ 
as well as the Fourier modes $\hat{R}_k(p)$ of $\hat{R}_k(\bar{q})$ 
where $p$ 
are the mode labels with respect to a generalised eigenbasis of the background 
d'Alembertian ($s=1$) or Laplacian ($s=0$) respectively. Then:\\
1. $\lim_{k\to 0} \hat{R}_k(p)=0$ at fixed $p$. 
This property will ensure that the $k-$dependent 
effective average action that we obtain from it as below reduces 
to the background dependent effective action which in turn determines
the actual object of interest, the background independent effective action.\\
2. $\lim_{k\to \infty} \hat{R}_k(p)=\infty$ at fixed $p$. This 
property makes the additional term suppress non-zero values of $h$ either 
due to oscillations ($s=1$) or due to damping so that the effective 
average action becomes almost exactly equal to $S_s$ in the saddle point 
approximation.\\
3. 
$\lim_{p\to 0} \hat{R}_k(p)=const.$ at fixed $k$. This property ensures 
that the additional term acts like an additional mass term for $p\le k$ which 
acts suppressingly either due to oscillations or damping.\\
4.
$\lim_{p\to \infty} \hat{R}_k(p)=0$ at fixed $k$. This property ensures that 
the momenta $p\ge k$ are unsuppressed by the additional term and thus are 
integrated over in the usual way. \\
\\
Altogether $\hat{R}_k(p)\approx \theta(k-p) \hat{R}(k)$ with $\hat{R}(0)=0$, 
monotonously 
increasing as $k\to\infty$ where it diverges. As $R_k(\bar{q})$ 
appears in the Wetterich equation (see below) in the form 
$\partial_k R_k(\bar{q})$ inside a trace (integral over $p$), that trace 
receives 
essential contributions only from a neighbourhood of $k$ which improves
convergence of the trace. Following the intuition that $R_k(\bar{q})$ 
suppresses integrating out modes $p\le k$, it follows that 
$\bar{Z}_{s,k}(f,\bar{q})$ has the more momenta integrated out the lower 
$k$, i.e. lowering $k$ is similar to, but different from, 
coarse graining in a strict Wilsonian renormalisation group sense.\\
\\
We define now in almost the usual pattern the k-dependent functionals
\ba \label{3.23}
\bar{C}_{s,k}(f,\bar{q})&=& i^{-s}\;\ln(\bar{Z}_{s,k}(f,\bar{q}))
\nonumber\\
\bar{\Gamma}_{s,k}(\hat{q},,\bar{q}) &=& {\rm extr}_f\;[-\bar{C}_{s,k}(f,\bar{q})
+<f,\hat{q}>]-\frac{1}{2}<\hat{q},R_k(\bar{q})\;\hat{q}>
\ea
Note the additional $R_k(\bar{q})$ dependent 
term that we subtract in the second line that is non-standard as compared to 
the previous sections. Again, due to $R_0(\bar{q})=0$ that additional 
term does not spoil that 
\be \label{3.24}
\bar{\Gamma}_{s,k=0}(\hat{q},\bar{q})=
\bar{\Gamma}_s(\hat{q},\bar{q})
\ee

\subsection{Wetterich equation}
\label{s3.6}

To show that the Wetterich equation really is just a consequence 
of the definition of the EAA, we give here an elementary 
derivation for completeness. We will suppress
the argument $\bar{q}$ of the functions that depend on it in order not to 
clutter the formulae and because it plays no role in the following 
analysis.\\
\\
By definition of the Legendre transform, the extremal condition
\be \label{3.25}
\frac{\delta\bar{C}_{s,k}(f)}{\delta f}-\hat{q}=0
\ee
must be solved for $f=f_{s,k}(\hat{q})$. We assume that the solution is 
unique and thus find explicitly
\be \label{3.26}
\bar{\Gamma}_{s,k}(\hat{q})=
[-\bar{C}_{s,k}(f)+<f,\hat{q}>]_{f=f_{s,k}(\hat{q})}
-\frac{1}{2}<\hat{q},R_k\hat{q}>
\ee
Since (\ref{3.25}) becomes an identity when evaluated at $f=f_{s,k}(\hat{q})$
\be \label{3.27}
\hat{q}\equiv [\frac{\delta\bar{C}_{s,k}(f)}{\delta f}]_{f=f_{s,k}(\hat{q})}
\ee
it follows
\be \label{3.28}
\frac{\delta\bar{\Gamma}_{s,k}(\hat{q})}{\delta \hat{q}}=
f_{s,k}(\hat{q})-R_k\cdot \hat{q}
\ee
The functional derivative of (\ref{3.27}) with respect to $\hat{q}$ yields
($1$ denotes the identity integral kernel on the space $\mathfrak{Q}$)
\ba \label{3.29}
1 &=&
<[\frac{\delta^2 \bar{C}_{s,k}(f)}{\delta f\; \delta f}]_{f=f_{s,k}(\hat{q})},
\frac{\delta f_{s,k}(\hat{q})}{\delta \hat{q}}>
\nonumber\\
&=&
<[\frac{\delta^2 \bar{C}_{s,k}(f)}{\delta f\; \delta f}]_{f=f_{s,k}(\hat{q})},
[\frac{\delta^2 \bar{\Gamma}_{s,k}(\hat{q})}
{\delta \hat{q}\otimes\delta\hat{q}}+R_k]>
\ea
where in the second step we inserted the functional derivative of (\ref{3.28}).

Next we write 
\ba \label{3.30}
&& \bar{Z}_{s,k}(f)=\int\; 
d\mu[h]\;e^{i^s\;<f,h>}\;e^{-\frac{i^s}{2}<h,R_k h>},\;
\frac{d\mu[h]}{d[h]}=   
\int\;[d\eta]\;[d\rho]\; e^{-i^s S_s(\bar{q}+h)}\; 
e^{-e^{is} <\bar{G}_{\bar{q}}(h),\kappa \bar{G}_{\bar{q}}(h)>}\;
\times\nonumber\\
&& \exp(e^{is} <\eta,[\frac{\delta}{\delta v} 
\bar{G}_{\bar{q}}(\alpha^{\bar{q}}_{\exp(v)}(h))]_{v=0}\cdot\rho>)
\ea
and find 
\ba \label{3.31}
\frac{\delta \bar{C}_{s,k}(f)}{\delta f} &=& 
\frac{1}{\bar{Z}_{s,k}(f)}\; \int\; d\mu[h]\; h\; e^{i^s <f,h>}\; 
e^{-\frac{i^s}{2}<h,R_k h>}=:\hat{q}_{s,k}(f)
\nonumber\\
\frac{\delta^2 \bar{C}_{s,k}(f)}{\delta f\otimes \delta f} &=&
i^s\;[ 
\frac{1}{\bar{Z}_{s,k}(f)}\; \int\; d\mu[h]\; h\otimes h\;  e^{i^s <f,h>}\; 
e^{-\frac{i^s}{2}<h,R_k h>}-\hat{q}_{s,k}(f)\otimes \hat{q}_{s,k}(f)]
\ea
and 
\ba \label{3.32}
&& \frac{\partial \bar{\Gamma}_{s,k}(\hat{q})}{\partial k}
=-\frac{\partial \bar{C}_{s,k}(f)}{\partial k}]_{f=f_{s,k}(\hat{q})}
-\frac{1}{2}<\hat{q},[\partial_k R_k]\cdot \hat{q}>
\nonumber\\
&=& \frac{1}{2}[-<\hat{q},[\partial_k R_k]\hat{q}>
+(\frac{i^s}{\bar{Z}_{s,k}(f)}\int\;d\mu[h]\;<h,[\partial_k R_k] h> \;
e^{i^s <f,h>}\; 
e^{-\frac{i^s}{2}<h,R_k h>})_{f=f_{s,k}(\hat{q}}]
\ea
We see that in (\ref{3.31}) the combination $h\otimes h$ appears at general 
$f$ while 
in (\ref{3.32}) the combination 
\be \label{3.33}
<h,[\partial_k R_k] h>={\rm Tr}([\partial_k R_k]\;\cdot \; h\otimes h)
\ee
appears at $f=f_{s,k}(\hat{q}))$. We can therefore relate (\ref{3.32}) 
to (\ref{3.32}) if we trace (\ref{3.32}) with $\partial_k R_k$ at 
$f=f_{s,k}(\hat{q})$. Note the identity 
\be \label{3.34}
[\hat{q}_{s,k}(f)]_{f=f_{s,k}(\hat{q})}=\hat{q}   
\ee      
Then 
\ba \label{3.35}
&& 
{\rm Tr}(\frac{\delta^2 \bar{C}_{s,k}(f)}{\delta f \otimes \delta f}
[\partial_k R_k])_{f=f_{s,k}(\hat{q})}
\\
&=& 
i^s
[(\frac{1}{\bar{Z}_{s,k}(f)}\; \int\; d\mu[h]\; <h,[\partial_k R_k] h>\;  
e^{i^s <f,h>}\; 
e^{-\frac{i^s}{2}<h,R_k h>})_{f=f_{s,k}(\hat{q}}
-<\hat{q},[\partial_k R_k] \hat{q}>]
\nonumber\\
&=& 2\; i^s \partial_k \bar{\Gamma}_{s,k}(\hat{q})
\nonumber\\
&=&
{\rm Tr}(
[\frac{\delta^2\bar{\Gamma}_{s,k}(\hat{q})}
{\delta\hat{q}\otimes \delta\hat{q}}+R_k]^{-1}\; [\partial_k R_k])
\ea
where in the last step we solved (\ref{3.29}) for 
$[\frac{\delta^2 \bar{C}_{s,k}(f)}{\delta f \otimes \delta f}
]_{f=f_{s,k}(\hat{q})}$. This is the celebrated {\it Wetterich equation}
\cite{29} for both signatures (where we display the dependence 
on the background again)
\be \label{3.36}
\partial_k \bar{\Gamma}_{s,k}(\hat{q},\bar{q})
=\frac{i^{-s}}{2}{\rm Tr}\{
[\bar{\Gamma}^{(2)}_{s,k}(\hat{q},\bar{q}) 
+R_k(\bar{q})]^{-1}\;[\partial_k R_k](\bar{q})\}
\ee
This is an exact and non-perturbative identity which has to be obeyed by the 
EAA if it comes from a well defined $\bar{Z}_{s,k}(f,\bar{q})$. The 
idea is now to consider (\ref{3.26}) as the fundamental equation, to solve 
it and try to find interesting solutions. By a solution we mean
a solution of (\ref{3.36}) which exists {\it for all values of $k$},
in particular for $k\to \infty$ and for $k\to 0$, in the sense described 
in the next subsection, so that our 
object of interest
$\lim_{k\to 0} \bar{\Gamma}_{s,k}(\hat{q},\bar{q})$ exists. 
The requirement that the solution has a suitable limit as $k\to \infty$ is 
the selection criterion of {\it asymptotic safety} for a viable and 
predictive theory because it typically dictates that all but a finite 
number of coulings have to be fine tuned to their fixed point values 
while the remaining couplings run with $k$ and approach finite values at 
$k=0$ in their dimensionful form. We stress that from the point of 
view of CQG the value $k=0$ is the only one for which the EAA has a 
clear physical meaning, namely it becomes the standard effective action,
because $k$ has no physical interpretation.\\
\\
Before closing this subsection we note that in the above derivation 
the ghost action was considered to be integrated out and thus gives 
a ghost determinant $\det(M)=\exp(\ln(\det(M))=:\exp(-i^s S^g)$. 
The ghost action $S^g$ and the gauge fixing action 
$S^{gf}=<\bar{G}_{\bar{q}},\kappa\cdot \bar{G}_{\bar{q}}>$ are thereby 
considered as corrections of the classical action. Therefore all three
parts flow according to the above Wetterich equation. It is sometimes
convenient to treat the ghost part individually by not integrating 
it out thereby avoiding the exponentiated 
logarithms. Thus one generalises the 
EAA by introducing also anticommuting test fields $e_j,r^j$
for the ghosts and a similar $R_k$ term bilinear in $\eta,\rho$.
Then we introduce into the integrand of $Z_{s,k}(f,\bar{q})$ the additional 
factor
\be \label{3.37}
\exp(i^s \; [<e,\eta>+<r,\rho>])\; 
\exp(-i^s<\eta,R^g_k(\bar{q})\rho>)
\ee
where $R^g_k(\bar{q})$ is a ghost kernel with the same properties 
as $R_k(\bar{q})$. Then $\bar{Z}_{s,k}(f,\bar{q})$ becomes  
$\bar{Z}_{s,k}(f,e,r,\bar{q})$ and 
$\bar{C}_{s,k}(f,e,r,\bar{\rho})=i^{-s} \ln(\bar{Z}_{s,k}(f,e,r,\bar{\rho})$.
Then 
\be \label{3.38}
\bar{\Gamma}_{s,k}(\hat{q},\hat{\eta},\hat{\rho},\bar{q})
:={\rm extr}_{f,e,r}[<f,\hat{q}>+<e,\hat{\eta}>+<r,\hat{\rho}>
-\bar{C}_{s,k}(f,e,r,\bar{q})]-[
\frac{1}{2}<\hat{q},R_k(\bar{q})\hat{q}>
+<\hat{\eta},R^g_k(\bar{q})\hat{\rho}>]
\ee
Then a repetition of the above chain of steps gives the generalised 
Wetterich equation (we collect $\hat{\phi}:=(\hat{q},\hat{c}), 
\hat{c}=\hat{\eta},\hat{\rho})$)
and suppress the dependence on $\bar{q}$ for notational simplicity) 
\be \label{3.39}
\partial_k \bar{\Gamma}_{s,k}(\hat{\phi})
=\frac{i^{-s}}{2}{\rm Tr}\{
[\frac{\delta^2 \bar{\Gamma}^{(2)}_{s,k}(\hat{\phi})}{[\delta\hat{q}]^2}
+R_k]^{-1}\;[\partial_k R_k]\}
-i^{-s}{\rm Tr}\{
[\frac{\delta^2 \bar{\Gamma}^{(2)}_{s,k}(\hat{\phi})}{[\delta\hat{c}]^2}
+R^g_k]^{-1}\;[\partial_k R^g_k]\}
\ee
where the relative factor of $-2$ between bosonic and ferminic term comes 
from the anticommuativity of Grassman valued derivatives and the independence 
of $\hat{\rho},\hat{\eta}$. 

In view of lemma \ref{la4.1} one must however
pay attention to the following:
It is not true that one recovers the original EAA (ghosts integrated out)
by simply setting
$\hat{c}=0$. Rather we must apply 
lemma \ref{la4.1} of the appendix to find the exact relation
between (\ref{3.39}) and (\ref{3.36}).

\subsection{Solving the Wetterich equation in principle and asymptotic 
safety prescription}
\label{s3.7}

The EAA is supposed to yield well defined tempered distributions when 
taking functional derivatives of any order and thus can be written as 
\be \label{3.40}
\bar{\Gamma}_{s,k}(\hat{q},\bar{q})
=\sum_{N=0}^\infty \; \bar{\Gamma}^N_{s,k}(\hat{\phi},\bar{q})
\ee
where  
$\bar{\Gamma}^N_{s,k}(\hat{q},\bar{q})$ is a monomial in $\hat{\phi}$ of 
order $N$. One can use a geometric series expansion 
\be \label{3.41}
[\bar{\Gamma}^{(2)}_{s,k}+R_k]^{-1}=R_k^{-1} \sum_{M=}^\infty (-1)^M 
[\bar{\Gamma}^{(2)}_{s,k} R_k^{-1}]
\ee
and derive an infinite recursion among the $\bar{\Gamma}^N_{s,k}$.
In practice one often combines $R_k$ with a piece of 
$\bar{\Gamma}^{(2)}_{s,k}(\hat{q}=0,\bar{q})$ in the geometric series 
expansion, see below. 
To make this practically useful one will try to find a basis of distributions
such that the recursion closes. After such a basis $T_\alpha(\bar{q})$ of 
distributions has been identified,
the coefficients $g_\alpha(k)$ of that basis have to be determined from 
the Wetterich equation where $\alpha$ runs through a countably infinite 
index set $\cal A$. These coefficients are in general dimensionful of 
mass dimension $d_\alpha$ 
and it is convenient to pass to dimensionfree objects $\hat{g}_\alpha(k)
=k^{-d_\alpha} g_\alpha(k)$. Setting $t=\ln(k/k_0)$ for some arbitrary
$k_0$ and collecting $\hat{g}(t)=\{g_\alpha(k_0 e^t)\}_{\alpha \in {\cal A}}$
one obtains an infinite first order ODE system
\be \label{3.41a}  
\frac{d}{dt}\;\hat{g}(t)=\beta(\hat{g}(t))
\ee
which is autonomous, i.e. $\beta$ does not depend explicitly on $t$
if $k$ is the only scale in the problem, but it is  
highly non-linear. By the general theory of finite dimensional 
ODE systems, maximal solutions exist and are unique given initial 
conditions at $k_0$ or equivalently integration constants $c$.
We assume that we can also find here maximal solutions whose allowed 
domain $D_c$ with respect to $t$ depends on the choice $c$. Let then 
$\hat{g}_c(t),\; t\in D_c$ be the unique maximal solution with initial data  
$c$ at $t=1$ or $k=k_0$. 

Such a solution of the flow equation (\ref{3.41}) is an integral 
curve of the vector field $\beta$. Let $\hat{g}_\ast$ be a zero of 
of the vector field $\beta$, if it exists, $\beta(\hat{g}_\ast)=0$.    
Now consider a general solution $\hat{g}_c(t)$. If we have to fine tune 
all but a finite number of the initial data $c$ in oder that 
$\lim_{t\to \infty} \hat{g}_c(t)=\hat{g}_\ast$ then the zero is called 
{\it predictive} for it says that we have only a finite dimensional space of 
admissable solutions (labelled by the non fined tuned, so called relevant, 
initial data) whose domain includes $(1,+\infty)$ and then 
$\lim_{t\to \infty} \hat{g}_c(t)=\hat{g}_\ast$ i.e. on that 
space of solutions $\hat{g}_\ast$ is an UV fixed point 
of the flow. We may write 
$c=(c_r,c_i^\ast)$ where $c_i^\ast$ is a point in an infinite dimensional 
space corresponding to the fine tuned parameters while $c_r$ may still 
range in a finite dimensional space. The value $c_i^\ast$ is 
considered a prediction of the theory given the zero $\hat{g}_\ast$. 

We now return to the corresponding
dimensionful couplings $g_{c_r,c_i^\ast}(k)$ which 
have a physical meaning at $k=0$ and can be measured there. Only a finite 
number of them need to be measured to determine the remaining initial 
data $c_r$, after that the theory is completely fixed. We call the 
zero $\hat{g}_\ast$ {\it complete} if the $g_{c_r,c_i^\ast}(0)$ 
are also finite. Note that this is not automatically the case.

To 
have an example in mind for such a scenario suppose that we have a 
basis consisting of only two distributions which come with dimensionful
constants $g_1(k),g_2(k)$ of dimension cm$^2$ and cm$^{-2}$ respectively.
The dimensionless objects are $\hat{g}_1(k)= k^2 g_1(k),\; 
\hat{g}_2(k)=k^{-2} g_2(k)$. Suppose that 
$\hat{g}_1(k)=\frac{k^2}{k^2+L_1^{-2}} \hat{g}_1^\ast,\;  
\hat{g}_2(k)=\frac{k^2+L_2^{-2}}{k^2} \hat{g}_2^\ast$
with integration constants $L_1,L_2$ of dimension of 
length and some numerical values 
$\hat{g}_1^\ast, \hat{g}_2^\ast$. Then both couplings are relevant, both 
run into the zero of the vector field as $k\to\infty$ where 
they reach their fixed point values $\hat{g}_1^\ast, \hat{g}_2^\ast$. 
On the other hand 
$g_1(k)=\frac{\hat{g}_1^\ast}{k^2+L_1^{-2}},\; 
g_2(k)=\hat{g}_2^\ast\;(k^2+L_2^{-2})$.    
Thus $g_1(0)=L_1^2 \hat{g}_1^\ast,\;g_2(0)=L_2^{-2} \hat{g}_2^\ast$ are 
supposed to equal measured values $g_{10}, g_{20}$ which thus fixes 
$L_1,L_2$. 
We see that it is important to require the finiteness of the 
{\it dimensionless} couplings at $k\to\infty$ and of the 
{\it dimensionful} couplings at $k\to 0$: If the couplings run, i.e. 
have non-trivial $k$ dependence then 
typically the running will be monotonous and thus to have finiteness  
of say the dimensionful coupling at both $k=0,\infty$ would be impossible. 

The application to GR in $n$ dimensions is as follows:
We expect the EAA to involve curvature polynomials of order $N=0,1,2,..$ 
such as $g_N(k) R^N$ where $R$ is the Ricci scalar and $g_N(k)$ is its 
dimensionful coupling. As $d^n x\; g_N\; R^N$ is dimensionless, 
$g_N(k)$ has dimension cm$^{2N-n}$, hence $\hat{g}_N(k)=k^{2N-n}\; g_N(k)$.
Accordingly a running with $k$ of the form 
$g_N(k)=\frac{L_N^{2N-n}}{1+(L_Nk)^{2N-n}}\;\hat{g}^\ast_N$ would be 
compatible with both requirements where $L_N$ is a constant of length 
dimension. Note that we do not require that all but a finite number of 
dimensionful couplings vanish, rather that they depend only on a finite 
number of such free length parametrs $L_N$.

\subsection{Solving the Wetterich equation in practice}
\label{s3.8} 

It is of course out of question to complete the programme of the previous 
subsection for all couplings. One has to resort to the so-called 
truncations in the space of couplings. It would be nice to have some 
kind of error control at one's disposal which would grant that given 
some error $\epsilon$ and a fixed point discovered by a certain truncation
space, all extensions of that truncation space exceeding some 
dimension depending on $\epsilon$ lead to corrections of the couplings 
of the given truncation of less than $\epsilon$. It is of course very hard 
to come up with such an error control.

The still exact and non-polynomial Wetterich equation 
(we just display the non-ghost version)
\be \label{3.42}
\partial_k\bar{\Gamma}_{s,k}(\hat{q},\bar{q})=\frac{i^{-s}}{2}\;
{\rm Tr}\{[\bar{\Gamma}^{(2)}_{s,k}(\hat{q},\bar{q})
+R_k(\bar{q}]^{-1}\;[\partial_k R_k(\bar{q})]\}
\ee
is equivalent to an infinite hierarchy of non-polynomial equations among the 
1-PI distributions 
\be \label{3.43}
T^n_{s,k}(\bar{q}):=
\{\frac{\delta^n \bar{\Gamma}_{s,k}(\hat{q},\bar{q})}{\otimes^n \delta\hat{q}}  
\}_{\hat{q}=0}
\ee
with $n=0,1,2..$. 
A compact way to write the $n-$th such equation is 
\be \label{3.44}
\frac{\delta^n}{\otimes^n \delta\hat{q}}\;
\{\partial_k \bar{\Gamma}_{s,k}(\hat{q},\bar{q})
-\frac{i^{-s}}{2}\;
{\rm Tr}([\bar{\Gamma}_{s,k}(\hat{q},\bar{q})+R_k]^{-1}\;
[\partial_k R_k(\bar{q})])
\}_{\hat{q}=0}=0
\ee
which no longer depends on $\hat{q}$. For $n=0$ we simply get 
\be \label{3.45}
\partial_k\;T^0_{s,k}(\bar{q})=\frac{i^{-s}}{2}\;
{\rm Tr}([T^2_{s,k}(\bar{q})+R_k(\bar{q}]^{-1}\;[\partial R_k(\bar{q})])
\ee
which is the equation most studied in the literature. For $n=1$ we get 
\be \label{3.46}
\partial_k\; T^1_{s,k}(\bar{q})(.)=-\frac{i^{-s}}{2}\;
{\rm Tr}(
[T^2_{s,k}(\bar{q})+R_k(\bar{q})]^{-1}\;
T^3_{s,k}(\bar{q})(.)\; [T^2_{s,k}(\bar{q})+R_k(\bar{q})]^{-1}\;
\;[\partial_k R_k(\bar{q})])
\ee
etc. where $(.)$ denotes the point at which one performs the 
functional derivatives. The {\it first truncation parameter} is therefore 
the number $N$ of those first equations that one considers, $n=0,1,..,N-1$.

Each of these equations can be expanded into a geometric series by 
splitting 
\be \label{3.47}
T^2_{s,k}(\bar{q})=D^2_{s,k}(\bar{q})+N^2_{s,k}(\bar{q})
\ee
where $D^2_{s,k}(\bar{q})$ (``diagonal part'') can be easily treated by the 
heat kernel methods sketched below while $N^2_{s,k}(\bar{q})$ 
(``off-diagonal part'') requires more work, and then  expanding 
around $R'_{s,k}(\bar{q})=R_k(\bar{q})+D^2_{s,k}(\bar{q})$
\be \label{3.48}
[T^2_{s,k}(\bar{q})+R_k(\bar{q})]^{-1}
=\lim_{M\to \infty} R'_{s,k}(\bar{q})^{-1}\sum_{m=0}^M\;(-1)^m\;
[N^2_{s,k}(\bar{q})\; R'_{s,k}(\bar{q})^{-1}]^m 
\ee 
A {\it second 
truncation parameter} consists in the number $M$ of terms $m=0,1,..,M-1$
that one keeps in that
expansion (for the $n-$th equations there are $n+1$ such series and for 
each of them we keep all $M$ terms).  

Having decided on the numbers $N,M$ one now makes an Ansatz for
$\bar{\Gamma}_{s,k}(\hat{q},\bar{q})$. The Ansatz is usually motivated 
by the action functional $S_s(q)$ that one started from. This is justified 
in the following sense:\\
Recall that we have 
\be \label{3.49}
\bar{Z}_s(f,\bar{q})=\int\;[dh]\; 
J[\bar{q}+h]\; e^{i^s[<f,h>-S^t_s(\bar{q}+h)]},\;\;
\bar{C}_s(f,\bar{q})=i^{-s} \ln(\bar{Z}_s(f,\bar{q})),\; 
\bar{\Gamma}_s(\hat{q},\bar{q})
={\rm extr}_f\;[<f,\hat{q}>-\bar{C}_s(f,\bar{q})]
\ee
where we have formally integrated out the ghosts giving rise 
to the measure factor $J=\det(M)$, i.e. the functional
determinant of the ghost matrix $M$ and we have combined the 
classical action $S_s$ and the gauge fixing action $S^{gf}$ 
into $S^t_s=S_s+S^{gf}$. 
   
Let $F=F(\hat{q},\bar{q})$ be the solution of 
$\bar{C}^{(1)}_s(f,\bar{q})=\hat{q}$ where 
we denote 
the $n-th$ order functional derivatives of the functional $F$ with respect to 
its argument by $F^{(n)}$. Then $\Gamma^{(1)}_s=F$ and suppressing the 
dependence on $\bar{q}$ 
\ba \label{3.50}
e^{-i^s \bar{\Gamma}_s(\hat{q})}
&=& \exp(i^s[\bar{C}_s(F)-<F,\hat{q})>])
=\bar{Z}_s(F) e^{-i^s <F,\hat{q}>i}
=\int\;[dh]\; J[\bar{q}+h]\;e^{i^s[<F,h-\hat{q}>-S^t_s(h+\bar{q})]}
\nonumber\\
&=&
\int\;[dh]\; J[\bar{q}+h]\;
e^{i^s[<\bar{\Gamma}^{(1)}(\hat{q}),h-\hat{q}>-S^t_s(\bar{q}+h)]}
\ea
which is a functional integro differential equation for 
$\bar{\Gamma}_s(\hat{q})$.
We Taylor expand 
\be \label{3.51}
S^t_s(\bar{q}+h)=\sum_{n=0}^\infty\; \frac{1}{n!}\; 
<S_s^{t(n)}(\bar{q}+\hat{q}),\otimes^n [h-\hat{q}]>
\ee
and similar for $J$
and shift the integration variable to obtain to second order
\ba \label{3.52}
e^{-i^s[\bar{\Gamma}_s(\hat{q},\bar{q})-S^t_s(\bar{q}+\hat{q})]}
&=& \int\; [dh]\; e^{-i^s[\frac{1}{2}<S_s^{t(2)}(\bar{q}+\hat{q}),\otimes^2 h>
+<(S_s^{t(1)}(\bar{q}+\hat{q})-\bar{\Gamma}_s^{(1)}(\hat{q},\bar{q})),h>]}
\times \nonumber\\
&& \{J[\bar{q}+\hat{q}]
+<J^{(1)}(\bar{q}+\hat{q}),h>
+\frac{1}{2}<J^{(2)}(\bar{q}+\hat{q}),h\otimes h>\}
\ea
The Gaussian integrals can be performed. It yields 
an iteration for $\bar{\Gamma}_s-S_s$. If one truncates 
the iteration at the lowest order (1-loop approximation) one 
obtains 
\be \label{3.53}
\bar{\Gamma}_s(\hat{q},\bar{q})=S^t_s(\bar{q}+\hat{q})+\frac{1}{2\;i^s}
{\rm Tr}(\ln(S^{t(2)}_s(\bar{q}+\hat{q})
-\frac{1}{i^s} {\rm Tr}(\ln(J(\bar{q}+\hat{q}))
\ee     
Thus a well motivated starting Ansatz for the EAA is always the 
classical action plus gauge fixing action 
plus the above additional logarithmic term 
with argument split into background $\bar{q}$ and argument $\hat{q}$. 
Note that for $s=1$ the EAA, like the usual effective action, 
is necessarily complex valued rather than real valued.

The ingredients $S_s(q),S_{gf}(q),{\sf Tr}(\ln(M))$ 
can be decomposed into several terms 
$g_\alpha\; S_{s,\alpha}(q)$ differing 
by the order of the derivatives of the field $q$ each of which comes with 
its own coupling constant $g_\alpha$. We make these couplings $k$ 
dependent by hand, $g_\alpha\to g_\alpha(k)$ and write 
$S_{s,\alpha}(\hat{q},\bar{q}):=S_{s,\alpha}(\bar{q}+\hat{q})$ by expansion 
where we use the covariant differential defined by $\bar{q}$. Accordingly 
we now step with 
\be \label{3.54}
T^n_{s,k}(\bar{q}):=\sum_{\alpha=1}^A\; g_\alpha(k)\; 
S^n_{s,\alpha}(\bar{q}),
\;\;
S^n_{s,\alpha}(\bar{q})
=[\frac{\delta^n S_{s,\alpha}(\hat{q},\bar{q})}
{\otimes^n \delta \hat{q}}]_{\hat{q}=0}  
\ee
into the above hierarchy. Thus a {\it third truncation parameter} consists 
in the number $A$ of terms in the above Ansatz (assuming they can be 
enumerated in some way). 

Next one must isolate on the r.h.s. of the Wetterich equation truncated at 
$N,M$ as detailed above the contributions $S^n_{s,\alpha}(\bar{q})$ and 
compare coefficients. For this isolation of terms we require the 
heat kernel techniques as detailed below. The reason for this is that the 
l.h.s., when written in terms of the covariant differential $\nabla$ compatible 
with $\bar{q}$, generates distributions which are of the form of polynomials  
of $\nabla$ applied to products of $\delta$ distributions which generates 
curvature polynomials. The heat kernel technique is an elegant method 
for rewriting also the r.h.s. in terms of such distributions.
    
Some terms that appear on 
the r.h.s. of the Wetterich equation are not of the form 
$S^n_{s,\alpha}(\bar{q}),\; \alpha\le A$ (they corrspond to some 
$\alpha>A$) and thus cannot be compared to the l.h.s. 
These are discarded by hand. One must tune $A$ to $N,M$ 
such that after discarding those unmatchable terms one gets as many 
independent equations as we have couplings, that is, $A$. After this,
one gets a closed, finite system of equations $\partial_k g(k)=\beta(g(k))$
which is then treated along the logic laid out in the previous subsection.
  
One can then improve the analysis and increase the numbers $N,M,A$. 
A systematic way to do this is concisely described in \cite{17,1}. 
The hope is of course that the fixed points and (finite!) 
number of relevant couplings 
stabilises beyond certain values of $N,A,M$.

\subsection{Comparing coefficients: Synge world function,  
Schr\"odinger kernel, heat kernel}
\label{s3.9}

The material contained in this subsection is mostly well known in the 
Euclidian setting, we include it here because it allows to pin point 
differences withe the Lorentzian version and to make this review section 
self contained.

\subsubsection{Heat kernel: definition}
\label{s3.9.1}

Let $(M,m)$ be an $n-$dimensional manifold of Lorentzian 
($s=1$) or Euclidian ($s=0$) signature, 
$\nabla$ the Levi Civita covariant differential 
determined by $m$ and $\Delta:=m^{\mu\nu}\nabla_\mu\nabla_\nu$ the 
Laplacian (for $s=1$ it is better called the d'Alembertian). The 
heat kernel of $\Delta$ (for $s=1$ it is better called the 
Schr\"odinger kernel) is the solution to the initial value problem
\be \label{3.55}
[i^{-s}\; \frac{d}{dt}-\Delta_x]\;H^s_t(x,y)=0,\;
H_0(x,y)=\delta(x,y)
\ee
In flat space $m_{\mu\nu}=\eta_{\mu\nu}$ (Minkowski metric $s=1$) or
$m_{\mu\nu}=\delta_{\mu\nu}$ (Euclidian metric $s=0$) this equation can 
be explicitly solved by 
Fourier transformation 
\be \label{3.56}
H_t(x,y)=H_t(x-y)=e^{i^s\; t \Delta_x} \;\delta(x,y)
=\int_{\mathbb{R^n}}\; \frac{d^n l}{[2\pi]^n}\;
e^{-i k_\mu (x-y)^\mu}\; e^{-i^s \; t\; m^{\mu\nu} k^\mu k_\nu}
\ee
Using the elementary integrals
\be \label{3.57}
\int_\mathbb{R}\; du\; e^{-a \;u^2}=\sqrt{\frac{\pi}{a}}\;\forall a>0;\;\;\; 
\int_\mathbb{R}\; du\; e^{i a \;u^2}=\sqrt{\frac{\pi}{|a|}}
e^{i\frac{\pi}{4}{\rm sgn}(a)}\;\forall a\in \mathbb{R}-\{0\}
\ee
one finds 
\be \label{3.58}
H^s_t(x,y)=[4\pi |t|]^{-n/2}\; e^{\frac{i^s}{4t} 
m_{\mu\nu}(x-y)^\mu (x-y)^\nu}\; e^{is\frac{\pi}{4}{\rm sgn}(t)(2-n)}
\ee
where for $s=0$ only $t\ge 0$ is allowed. For $s=1$ this is a 
one parameter group of unitary operators,
for $s=0$ a one parameter semi-group of contraction operators. 

Irrespective of signature one observes that the $x,y$ dependence is determined
by the square of the 
geodesic distance $m(x-y,x-y)$ between the points $x,y$. It is 
thus reasonable to expect that at least for sufficiently close points $x,y$
in a general spacetime the heat kernel is also largely 
determined by the square of 
the geodesic distance. 

\subsubsection{Synge world function}
\label{s3.9.2}

In flat spacetime 
the geodesic between any two points is uniquely given by the straight line 
through them. In curved spacetime the geodesics through two points are 
unique only within a convex normal neighbourhood. Consider then two such 
points $x,y$ in $(M,m)$ and let $\gamma_{x,y}(s)$ be the unique geodesic 
with starting and end point $\gamma_{x,y}(0)=x,\;\gamma_{x,y}(1)=y$ 
respectively (we may assume that the convex neighbourhood is contained in a 
chart, otherwise restrict the neighbourhood). 
We know that the geodesic is the solution of the Euler Lagrange 
equations following from variation of the action 
\be \label{3.59}
S[\gamma]:=\int_0^1\; ds\; w_\gamma(s),\; w_\gamma(s)=
\sqrt{|m_{\mu\nu}(\gamma(s)) 
\dot{\gamma}^\mu(s)\;\dot{\gamma}^\nu(s)|}
\ee
subject to the boundary condition $\gamma(0)=x,\; \gamma(1)=y$. When 
evaluated on the solution $\gamma_{x,y}$ (\ref{3.59}) becomes the 
Hamilton-Jacobi function $S(x,y)$ that equals the absolute 
value of the geodesic distance.      
Using the Euler Lagrange equations and the boundary conditions one finds 
\be \label{3.60}
\partial_{y^\mu} S(x,y)=\delta\;
\frac{m_{\mu\nu}(\gamma_{x,y}(1))\;\dot{\gamma}^\nu_{x,y}(1)}
{w_{\gamma_{x,y}}(1)} 
\ee
where $\delta=1$ for $s=0$ and $\delta=\pm 1$ if the geodesic is spacelike
or timelike respectively when $s=1$. When the geodesic 
is null then $S(x,y)=0$ of course and the r.h.s. of (\ref{3.60}) 
is to be replaced by zero. Thus one finds the Hamilton-Jacobi
equation (replace r.h.s. by zero in the null case).  
\be \label{3.61}
m^{\mu\nu}(y)\; S_{,y^\mu}(x,y) S_{,y^\nu}(x,y)=\delta
\ee
We now define the {\it Synge world function}
\be \label{3.62}
\sigma(x,y):=\delta\frac{1}{2} S(x,y)^2
\ee
as the {\it signed} half of the square of the geodesic distance. Note
that in flat space $\sigma=\delta |m(x-y,x-y)|/2=m(x-y,x-y)/2$ is the 
function on which heat kernel exponent depends. Then we obtain the 
{\it master equations} 
\be \label{3.63}
m^{\mu\nu}(y) 
[\nabla^y_\mu \sigma(x,y)]\;[\nabla^y_\nu \sigma(x,y)]
=2\sigma(x,y),\; \sigma(x,x)=0
\ee
Remarkably, this equation, and all identities that follow from it, are 
completely insensitive to the signature and in the Lorentzian 
signature case holds for all three types of geodesics. 
One shows that (\ref{3.63}) 
also holds if we replace $\nabla^y, m(y)$ by $\nabla^x, m(x)$. 

The Synge world function ist the simplest example of a bi-scalar. In general 
a bi-tensor $T(x,y)$ of type $(a,b),(a',b')$ 
transforms 
as a tensor of type $(a,b)$ with respect to diffeomorphisms 
of $x$ and 
as a tensor of type $(a',b')$ with respect to diffeomorphisms 
of $y$. The coincidence limity of a bi-tensor is denoted 
as $\bar{T}(x)=T(x,x)$ and is a tensor of type $(a+a',b+b')$.
In the Euclidian case the coincidence is unambiguously defined, for 
the Lorentzian case we take $y\to x$ along a spacelike path for definiteness. 
For our purposes it will be sufficient to consider bi-tensors of type 
$(a,b),(0,0)$ i.e. it transforms as a scalar w.r.t. $y$. 

We have seen above that $\nabla^x_\mu \sigma(x,y)=\propto S(x,y) 
\partial^x_\mu S(x,y)$ is either zero for a null geodesic or 
$\nabla^x_\mu S(x,y)$ is a (timelike or spacelike) unit vector. Therefore 
\be \label{3.64}
\overline{\nabla^x_\mu \sigma(x,.)}=0
\ee
We obtain complete information about the coincidence limits of 
all covariant derivatives of $\sigma$ 
\be \label{3.65}
s_{\mu_1..\mu_n}(x)
:=\overline{
\nabla^x_{\mu_1}..\nabla^x_{\mu_n}\sigma(x,.)}
\ee
as follows: We
compute succesively the covariant derivatives of $\sigma(x,y)$ of order 
$n=1,2,..$ 
w.r.t. $x$ and obtain relations from them using the master equations. 
The $n-th$ derivative relation
depends on the r.h.s. on covariant derivatives of $\sigma$
up to order  
$n+1$ where that $(n+1)$-th derivative is multiplied with a first derivative.
Assuming inductively that the coincidence limit of the $(n+1)$-th
derivative is regular one obtains an identity for the coincidence limits 
of the n-th derivative due to (\ref{3.64}).  

For instance (we drop the arguments $x,y$ from now on and understand that 
all derivatives are taken at $x$ at fixed $y\not=x$ 
and all contractions and curvature tensors 
are evaluated at $x$)
\ba \label{3.66}
\nabla_\mu \sigma &=& m^{\alpha\beta} [\nabla_\mu\nabla_\alpha\sigma]\; 
[\nabla_\beta \sigma]
\nonumber\\
\nabla_\mu \nabla_\nu \sigma &=& 
m^{\alpha\beta} \{
[\nabla_\mu \nabla_\nu\nabla_\alpha\sigma]\; 
[\nabla_\beta \sigma]
+[\nabla_\nu\nabla_\alpha\sigma]\; 
[\nabla_\mu\nabla_\beta \sigma]\}
\ea
Assuming $s_{\mu\nu\rho}$ to be regular we find from the second relation
that 
\be \label{3.67}
s_{\mu\nu}=m^{\alpha\beta} s_{\mu\alpha} s_{\nu\beta}
\ee
which means that $s_\mu\;^\nu$ is a projection. Since $\sigma$ for 
sufficiently close $x,y$ approaches the flat space expression in 
suitable coordinates, the projection
has maximal rank, therefore 
\be \label{3.68}
s_{\mu\nu}=m_{\mu\nu}   
\ee
Continuing like this one finds 
\be \label{3.68a}
s_{\mu\nu\rho}=0,\;
s_{\mu\nu\rho\sigma}=\frac{2}{3}\; R_{\mu(\rho\sigma)\nu}
\ee
etc.
Thus we can consider the coincidence limits (\ref{3.65}) as known 
tensors constructed from $m$. 

Next we consider an arbitary bi-tensor of type $(0,b),(0,0)$ (we can 
use $m(x)$ to reduce the case $(b-a,a),(0,0);\; b\ge a$ to this case). 
We are interested in the coincidence limit of of its covariant derivatives at
$x$. To that end we expand it with respect to its $y$ dependence which due 
to (\ref{3.64}) is equivalent to an expansion in $\nabla\sigma$ 
\be \label{3.69}
T_{\mu_1 .. \mu_b}(x,y)=\sum_{l=0}^\infty\; \frac{1}{l!}\;
[T^n]_{\mu_1..\mu_b}^{\nu_1..\nu_l}(x) \prod_{k=1}^l\; 
[\nabla^x_{\nu_k} \sigma(x,y)]
\ee
where the coefficients are automatically comepletely symmetric 
in $\nu_1..\nu_l$.
Then one finds using the above results for the coincidence limit of 
the $\sigma$ derivatives
\be \label{3.70}
\overline{T_{\mu_1 .. \mu_b}}
=[T^0]_{\mu_1 .. \mu_b},\;\;
\overline{\nabla_\nu T_{\mu_1 .. \mu_b}}
=\nabla_\nu [T^0]_{\mu_1 .. \mu_b}+[T^1]_{\mu_1..\mu_b}\;^\rho\; 
m_{\rho\nu}
\ee
etc., i.e. we can relate all expansion coefficients in (\ref{3.69}) to 
coincidence limits of covariant derivatives of $T$.
   
\subsubsection{Heat kernel: computation}
\label{s3.9.3}   

The Ansatz for the heat kernel in general $(M,m)$ is therefore well 
motivated to be 
\be \label{3.71} 
H^s_t(x,y)=[4\pi |t|]^{-n/2}\; e^{\frac{i^s}{2t} \sigma(x,y)}\; 
e^{is\frac{\pi}{4}{\rm sgn}(t)(2-n)}\; \Omega_t(x,y)
\ee
which contains a correction factor $\Omega_t(x,y)$ that captures the 
curvature effects. We know already 
\be \label{3.72}
\overline{\Omega_{t=0}}=1
\ee
as the other factor in (\ref{3.71}) 
reduces to $\delta(x,y)$ at $t=0$. Once we have 
determined $\Omega_t$ to sufficient accuracy we know the heat kernel 
to sufficient accuracy. To determine $\Omega_t$ we plug (\ref{3.71})
into the heat equation. We obtain for $t\not=0$ using (\ref{3.63})
(all derivatives and contractions at $x$)
\be \label{3.73}
\frac{[\Delta \sigma]-n}{2t}\Omega_t+\partial_t\Omega_t
+\frac{1}{t}m^{\mu\nu} [\nabla_\mu\Omega_t]\;[\nabla_\nu \sigma]
-i^s[\Delta \Omega_t]=0
\ee
We Taylor expand with respect to the heat time $t$ 
\be \label{3.74}
\Omega_t(x,y)=\sum_{k=0}^\infty\;
(i^s t)^k\; \Omega_k(x,y)
\ee
and obtain for $k=0,1,...$
\be \label{3.75}
(\frac{[\Delta \sigma]-n}{2}+k)\Omega_k
+m^{\mu\nu} [\nabla_\mu\Omega_k]\;[\nabla_\nu \sigma]
-[\Delta \Omega_{k-1}]=0
\ee
with the convention $\Omega_{-1}(x,y)=0$. Also $\Omega_{t=0}(x,x)=
\Omega_{k=0}(x,x)=1=\overline{\Omega_0}$. Equation (\ref{3.75}) is 
independent of signature and as all coefficients are real valued we can pick 
$\Omega_k$ real valued.  

We can explicitly determine the $\Omega_k$ using the Taylor expansion
(\ref{3.70}) together with (\ref{3.75}) 
\be \label{3.76}
\Omega_k(x,y)=\sum_l \frac{1}{l!}\; [\Omega_{k,l}]^{\mu_1..\mu_l}(x) 
\;[\nabla^x_{\mu_1}\sigma(x,y)]..[\nabla^x_{\mu_l} \sigma(x,y)]   
\ee
where $\Omega_{k,l}^{\mu_1..\mu_l}(x)$ is determined in terms of the 
coincidence limits of the $\nabla_{\mu_1}..\nabla_{\mu_r} \Omega_k,\;
0\le r\le l$ via (\ref{3.70}). To obtain the coincidence limits of 
the $\nabla^l \Omega_k$ 
we take the $\nabla^l$ 
derivative of (\ref{3.75}) and then take the coincidence limit. This 
involves only $\nabla^l \Omega_k$ and $\nabla^{l+2}\Omega_{k-1}$ because
the $\nabla^{l+1}\Omega_k$ term has a coefficient $\nabla\sigma$ which 
vanishes in the coincidence limit. Thus, as $\Omega_{-1}\equiv 0$ we 
can compute all $\Omega_{k,l}$ to any sesired order.  

We exhibit this for $l=0,1$ and general 
$k$: 
\be \label{3.77} 
[\Omega_{k,0}]=\overline{\Omega_k},\;
[\Omega_{k,1}]^\mu=\overline{\nabla^\mu \Omega_k}-\nabla_\mu
\overline{\Omega_k}
\ee
The coincidence limit of (\ref{3.75}) is 
\be \label{3.78}
k\overline{\Omega_k}-\overline{\Delta \Omega_{k-1}}=0
\ee
and the first derivative of (\ref{3.75}) is
\be \label{3.78a}
[k-\frac{n}{2}+\frac{1}{2}\Delta\sigma][\nabla_\mu \Omega_k]
+\frac{1}{2} [\nabla_\mu \Delta\sigma] \Omega_k
+[\nabla_\mu\nabla_\nu \Omega_k][\nabla^\nu\sigma]
+[\nabla_\nu \Omega_k][\nabla_\mu \nabla^\nu\sigma]
-\nabla_\mu\Delta \Omega_{k-1}=0
\ee
whose coincidence limit is 
\be \label{3.79}
(k+1)\overline{\nabla_\mu \Omega_k}-\overline{\nabla_\mu \Delta \Omega_{k-1}}
=0
\ee
For $k=0$ (\ref{3.78}) is identically satisfied while (\ref{3.79}) gives 
$\overline{\nabla_\mu \Omega_1}=0$. Thus by (\ref{3.77}) 
$[\Omega_{0,0}]=1, [\Omega_{0,1}]^\mu=0$ and taking this one step 
further one finds $\overline{\nabla_\mu \nabla_\nu\Omega_0}=\frac{1}{6}
R_{\mu\nu}$ (Ricci tensor). Thus 
\be \label{3.80}
\Omega_0=1+\frac{1}{12} R^{\mu\nu} \sigma_{,\mu} \sigma_{,\nu}
\ee
up to second order. Inserted into (\ref{3.78}) for $k=1$ gives 
$\overline{\Omega_1}=\frac{1}{6} R$ (Ricci scalar). To compute (\ref{3.79})
for $k=1$ requires to know $\nabla^3 \Omega_0$ etc.\\
\\
Concluding, we can consider the heat kernel to be known to any desired 
order $k,l$ in $t,[\nabla\sigma]$. Also arbitary covariant derivatives 
of the heat kernel can be obtained from again computing derivatives 
of (\ref{3.75}) and taking coincidence limits as well as the known 
coincidence limits of $\sigma$. 

\subsubsection{Functions of the heat kernel and traces}
\label{s3.9.4}

Let for, $s=0$, $f(z)$ be a complex valued function defined for $\Re(z)>0$ and
$\hat{f}(t)$ its pre-image under the Laplace transform
\be \label{3.81}
f(z)=\int_{\mathbb{R}^+}\; dt\; e^{-tz}\; \hat{f}(t),\;
\hat{f}(t)=[2\pi i]^{-1}\int_{x+i\mathbb{R}}\; dz\; e^{tz}\; f(z);\;x>0
\ee
Let, for $s=1$, $f(z)$ be a complex valued function defined on the 
real line and
$\hat{f}(t)$ its pre-image under the Fourier transform
\be \label{3.82}
f(z)=\int_{\mathbb{R}}\; dt\; e^{-itz}\; \hat{f}(t),\;
\hat{f}(t)=[2\pi]^{-1}\int_{\mathbb{R}}\; dz\; e^{itz}\; f(z);\;x>0
\ee
For a function $f=f(-\Delta)$ we may use the spectral theorem 
\be \label{3.83}
f(-\Delta)=\left\{ \begin{array}{cc}
\int_0^\infty \; dt\; \hat{f}(t) \; e^{t\Delta} & s=0\\
\int_{-\infty}^\infty \; \hat{f}(t) dt\; e^{it\Delta} & s=1
\end{array}
\right.
\ee
Then the trace of $f(-\Delta)$ with respect to the volume form of $m$ 
is given by 
\be \label{3.83a}
\int\; dt\; \int\; d^n x\; |\det(m)|^{1/2} H_t(x,x)
\ee
which demonstrates why we are interested in the coincidence limit and 
justifies the assumption about the convex normal neighbourhood above. 
Obviously $H_t(x,x)\propto \overline{\Omega_t}(x)$ which is an 
expansion in terms of curvature invariants.     

As laid out in the previous subsection, on the r.h.s of the 
Wetterich equation what we are interested in are 
expressions of the form (recall (\ref{3.47}), (\ref{3.48})
\be \label{3.84}
{\rm Tr}(P_0(-\Delta))\; Q_1(\nabla)\; P_2(-\Delta)\;..\;
Q_m(\nabla) P_m(-\Delta))
\ee
where $m\le M-1$ and 
where the $P_k$ just depend on $\Delta$ (``diagonal operators'')  while
the $Q_k$ may depend on all combinations of $\nabla$. We may write 
these expressions as 
\ba \label{3.85}
&& \int\ d^{m+1}t\; \hat{P}_0(t_0)\;..\;\hat{P}_m(t_m)\;
{\rm Tr}(e^{i^s t_0\; \Delta}\; Q_1(\nabla) e^{i^s t_1 \Delta} .. 
Q_m(\nabla) e^{i^s t_m \Delta})
\\
&=& 
\int\ d^{m+1}t\; \hat{P}_0(t_0)\;..\;\hat{P}_m(t_m)\;
{\rm Tr}([e^{i^s r_1\; \Delta}\; Q_1(\nabla) e^{-i^s r_1 \Delta}]
e^{i^s r_2\Delta}\; Q_1(\nabla) e^{-i^s r_2 \Delta}]
..
[e^{i^s r_m\Delta}\; Q_m(\nabla) e^{-i^s r_m \Delta}]
e^{i^s r_{m+1}\Delta)}
\nonumber
\ea
where $r_k=t_0+..t_{k-1},\;k=1,..,m+1$. Then we may use 
\be \label{3.86}
e^{i^s r \Delta} Q(\nabla) e^{-i^s \Delta}
=\sum_{n=0}^\infty \frac{(i^s r)^n}{n!} [\Delta,Q(\nabla)]_{(n)},\;
[A,B]_{(0)}:=B, 
[A,B]_{(n+1)}:=[A,[A,B]_{(n)}]
\ee
and discard in the sums over $n$ all terms above a certain order 
order that is dictated by the chosen values of $N,M,A$.
Then working out the multiple commutators we end up, inside the integrals 
over $t_0,..,t_m$ with an
expression of the form 
\be \label{3.87}
{\rm Tr}(Q(\nabla) H_r(\Delta)) 
\ee
which can be evaluated by the tools of the previous subsection. Also here 
we discard terms above a certain order in the expression of 
$\overline{\Omega_r}$ above a certain order in terms of $k,l$ as dictated
by $N,M,A$.

\subsubsection{Performing the Laplace or Fourier integrals}
\label{s3.9.6}

It remains to perform the integrals over the heat kernel times $t_0,..,t_m;\;
m\le M-1$. Their convergence is of course much determined by the properties of 
the Laplace or Fourier transform of $R_k$. As an example consider 
the truncation $N=1$ for the case $N^2_{s,k}(\bar{q})=0$ and 
$D^2_{s,k}(\bar{q})=\Delta+R_k(\Delta)$ then it suffices to consider 
the case $M=1$ and we need 
\be \label{3.88}
{\rm Tr}([\Delta+R_k(\Delta)]^{-1} [\partial_k R_k(\Delta)])
=\int\; dt\; \hat{f}_k(t) {\rm Tr}(e^{i^s t \Delta})
=\int\; dt\; \hat{f}_k(t)\; [4\pi |t|]^{-n/2} e^{is\;{\rm sgn}(t)\pi/4} 
\int\; d^n x\; [\sum_{k'=0}^\infty (i^s t)^{k'} \overline{\Omega_{k'}}(x)] 
\ee
i.e. we require the integrals for $k'=0,1,2,..$
\be \label{3.89}
q_k(n,k'):=\int\; \hat{f}_k(t)\; t^{k'}\; |t|^{-n/2} 
e^{is\;{\rm sgn}(t)\frac{\pi}{4}}
\ee
over $\mathbb{R}_+$ ($s=0$) and $\mathbb{R}$ ($s=1$) respectively 
where $\hat{f}_k(t)$ is the {\it inverse} of the 
Laplace respectively Fourier transform of 
\be \label{3.90}
f_k(x)=\frac{\partial_k R_k(x)}{x+R_k(x)}
\ee
Strictly speaking, what we are given is $f_k(x)$ and we assume that there
exists $\hat{f}_k(t)$ such that $f_k(x)=\int_{\mathbb{R}^+}\;dt\; 
e^{-xt}\hat{f}_k(t)$ for $s=0$ and $f_k(x)=\int_{\mathbb{R}}\;dt\; 
e^{ixt}\hat{f}_k(t)$ for $s=1$ respectively. This formula can be inverted
for $\hat{f}_k(t)$ for $s=1$ on the space of Schwartz functions and 
tempered distributions while 
for $s=0$ the existence of $\hat{f}_k$ is not granted given $f_k$
(there exists an inverse if instead $\hat{f}_k$ is given). This 
fact has important consequences for some of the results quoted in the 
literature in particular in connection with the so called {\it optimised
cut-off} \cite{1}. 
As we show in appendix \ref{sc}, the optimised cut-off is not granted (and 
is likely not)  
to lie in the image of the Laplace transform of a 
meaningful mathematical object (function, distribution, measure,..)
which would mean that the formulas that 
one quotes for its $q_k$ integrals do not hold and have to be revisited.

Whether the integrals (\ref{3.89}) exist depends on the chosen
shape of the suppressing function $R_k(x)$ which may 
depend on the chosen signature $s$. For the case $s=1$ we may pick 
$f_k(x)$ as the the Fourier transform of $\hat{f}_k$ where besides 
the already stated conditions 1.-4. we require $\hat{f}_k$ to be  
smooth of rapid decrease at $t=0,\pm\infty$. A controllable class of 
such functions and its integrals was studied recently in \cite{27a}
and we showcase their usefulness in appendix \ref{sd}. 
Again one will discard 
terms above a maximal value of $k'$ in agreement with the chosen values 
of $M,N,A$.\\
\\
In conclusion for suitable choices of $N,M,A$, upon discarding, we may evaluate 
both the l.h.s. and the r.h.s. of the Wetterich equation in closed form
producing a closed systsem of equations $\partial_k g(k)=\beta(g(k))$.

\section{Contact between CQG and ASQG in a concrete model}
\label{s4}

In this section we combine the two frameworks and study a concrete model
for Lorentzian GR coupled to some form of matter.
In the first subsection we introduce the classical Lagrangian of the model
and carry out the steps of section \ref{s2} to pass from the Hamiltonian
formulation to the path integral. In the second subsection we release 
the asymptotic safety machinery to the model. To avoid confusion note 
that the meaning of the indices $a,b,c,..,j,k,l,..$ in this section 
is different from section \ref{s2}: There they took infinite range 
labelling test smearing functions while here they take finite range 
$1,..,n-1$ labelling tensor components and field sepcies respectively.

\subsection{CQG treatment of the model}
\label{s4.1}

\subsubsection{Classical analysis}
\label{s4.1.1}

The classical Lagrangian is given by 
\be \label{4.1}
L=\frac{1}{G}\;|g|^{1/2}[R^{(n)}(g)-2\Lambda]
-\frac{1}{2}\;|g|^{1/2}\;g^{\mu\nu}\; 
S_{IJ}\; \phi^I_{,\mu}\;\phi^J_{,\nu}
\ee
where $g$ is a Lorentzian signature metric on an $n-$dimensional manifold 
$M$ diffeomorphic to $\mathbb{R}\times \Sigma$ and 
$\phi^I,\;I=0,..,n-1$ are $n$ scalar fields, \; $\mu,\nu=0,..,n-1$ are 
tensor indices. The real valued, constant matrix $S_{IJ}$ is positive 
definite, $G,\Lambda$ are Newton's and the cosmological constant 
respectively. Thus (\ref{4.1}) is just the Einstein-Hilbert Lagrangian 
minimally coupled to $n$ massless Klein-Gordon fields without potential.
This model was also treated by the methods of LQG in \cite{27b}.
To avoid discussions about boundary terms we assume that $\Sigma$ is 
compact without boundary.

We note that since $S$ is positive we find non-singular matrices 
$s^I_J$ such that $M_{IJ}=\delta_{KL} s^K_I s^L_J$ and thus could replace 
$S_{IJ}$ by $\delta_{IJ}$ by the field redefinition $s^i_J \phi^J \mapsto \phi^I$.
We will however keep $S_{IJ}$ general as it does not blow up the 
formalism. The above model can be considered as a dark matter model 
as the scalars couple only gravitationally, albeit not a very realistic
one as there is no mass term.

The classical Hamiltonian formulation of this system is well known
\cite{28}. 
An arbitrary foliation of $M$, using time $t$ and spatial coordinates 
$x^a$ on $\Sigma$ and 
as parametrised by lapse $N$ and shift $N^a$
functions $a,b,..=1,..,n-1$ leads to canonical pairs 
$(P,N);(P_a,N^a);(p^{ab},q_{ab}),(\pi_I,\phi^I)$ where 
$P,P_a$ are constrained to vanish (primary constraints) while 
\be \label{4.2}    
p^{ab}=\sqrt{\det(q)}[q^{ac} q^{bd}-q^{ab} q^{cd}]\;k_{cd},\;
k_{ab}=\frac{1}{2N}[\dot{q}_{ab}-[L_{\vec{N}}q]_{ab}],\;
\pi_I=\sqrt{\det(q)}\; S_{IJ} [L_n \phi^I]
\ee
where $L_u$ denotes the Lie derivative with respecto to $u$ and where  
$n=N^{-1}(\partial_t-\vec{N}),\; \vec{N}=N^a \partial_{x^a}$. Here $q_{ab}$ 
is the intrinsic spatial metric of Euclidian signature on the leaves of the 
foliation and $k_{ab}$ their extrinsic curvature and $n$ is a unit timelike 
normal to the leaves while $\vec{N}$ is tangential. 

The secondary constraints are (all spatial indices are moved with 
$q_{ab},\ q^{ab};\;q_{ac} q^{cb}=\delta_a^b$ and $Q:=[\det(q)]^{1/2}$)
\ba \label{4.3}
C_a &=& -2 D_b \; p^b_a+\pi_I \phi^I_{,a};\;\;\;
C= C^g+C^s;
\\
C^g &=& Q^{-1}\;[p^{ab} p_{ab}-\frac{1}{n-2} (p^a\;_a)^2]
-Q\;[R^{(n-1)}(q)-\Lambda],\;\;
C^s=\frac{1}{2}[Q^{-1}\; S^{IJ} \pi_I \pi_J+Q\; S_{IJ}\; q^{ab}\; 
\phi^I_{,a}\phi^J_{,b}]
\nonumber
\ea
known as spatial diffemorphism and Hamiltonian constraint respectively.
Here $D$ is the Levi-Civita differential of $q$ (note that $P^{ab}$ is a 
spatial tensor density of weight one). It is straightforward but tedious 
to check that the $2n$ constraints are first class, i.e. their mutual 
Poisson brackets vanish when all constraints hold.

The constrained Hamiltonian density is 
\be \label{4.4}
H=v\; P+v^a\; P_a+N^a C_a+N C
\ee
where $v,v^a$ are undetermined Lagrange multipliers as the Legendre transform
does not determine the velocities of $N,N^a$. This Hamiltonian is a sum 
of constraints and its Hamiltonian flow is to be interpreted as gauge 
transformations.  
   
We now consider the reduced phase space of the system. 
We assume that the $n$ scalar fields serve as a material reference system
in the sense that 
\be \label{4.5}
\det([\partial \phi/\partial (t,x))\not=0
\ee
so that $\phi$ in fact defines a diffeomorphism. It is then natural to 
adopt the gauge fixing condition
\be \label{4.6}
G^I(t,x)=\phi^I(t,x)-k^I(t,x), G^0=N-c(t,x), G^a=N^a-c^a(t,x)
\ee
where $k^I,c,c^a$ are coordinate functions independent of the phase space 
variables subject to the constraint $\det(\partial k/\partial (t,x))\not=0$.
That is, we consider $\phi^I,N,N^a$ and their conjugate momenta
as gauge degrees of freedom while $q_{ab},p^{ab}$ are the true degrees of 
freedom. 
The correspondence with the general theory laid out in section \ref{s2}
is that $(x,y)$ are given by $(P,P_a,\phi^I),(N,N^a,\pi_I)$ and 
$(p,q)$ by $(p_{ab},q^{ab})$.
The model thus has $\frac{1}{2}n(n+1)+n -2n=\frac{1}{2}n(n-1)$ 
physical degrees of freedom, i.e. $n$ more than in vacuum due to the 
presence of the $n$ scalar fields. That we  choose the true degrees 
of freedom purely in terms of $q_{ab}$ rather than say the  
$\frac{1}{2}n(n+1)-2n=\frac{1}{2}n(n-3)$ gravitational wave polarisations and 
$n$ scalar fields is because the description of the reduced phase space
is much simpler then, as we can solve the constraints algebraically
(otherwise we would need to solve PDE's). This is similar to the 
Higgs mechanism where rather than 2 polarisations for the $W_+,W_-,Z$ boson 
we have 3 because 3 of 4 degrees of freedom of the complex Higgs doublet 
are gauge fixed. The additional polarisation degrees of freedom could 
be called Goldstone bosons.  

In detail, let $j,k,..=1,..,n-1$ then we may solve algebraically 
\ba \label{4.7}
&& \hat{C}_j := \pi_j+h_j,\;\hat{C}=\pi_0+h_0
\\
h_j &=& \phi_j^a\;[-2 D_b P^b_a+\pi_0 \phi^0_{,a}] =:h_j^g+\pi_0 \phi^0_j;
\;\; \phi^j_{,a} \phi_j^b=\delta_a^b
\nonumber\\   
h_0 &=&\frac{B}{A}\pm\sqrt{-\frac{Q}{A}\;[\frac{B^2}{A}+ 
C_g+\frac{1}{2} (Q^{-1}\; 
S^{jk} h_j^g h_k^g+ Q\; S_{IJ}\; q^{ab}\; \phi^I_{,a} \phi^J_{,b}]}
\nonumber\\
A &=& S^{00}-2S^{0j} \phi^0_j+S^{jk}\phi^0_j\phi^0_k
\nonumber\\
B &=& (S^{0j}+S^{jk} \phi^0_k) h^g_j
\ea
where the solution $\pi_0=-h_0$ is to be inserted in the first line. 

Note that as detailed in section \ref{s2} there are two possible roots 
for $\pi_0$. Note also that the solutions $h_I$ drastically simplify 
at the gauge cut $\phi^I=k^I$ when $k^0_{,a}=0$ 
so that $k^0_j=k^a_j k^0_{,a}=0$
and if $S^{0j}=0$ i.e. $S_{0j}=0$ i.e. $S_{IJ}$ is block diagonal 
in which case $B=0, A=S^{00}=S_{00}^{-1}$. In fact, in order to apply
the results of section \ref{s2} we must adopt this choice because 
in this case 1. the quadratic constraints are of the type $C_j=\sum_k 
P_j^k \hat{C}_k^+\; C^k_-$ with $\hat{C}^\sigma_k=y_k+h_k^\sigma$, 2. 
$h_k^\pm=\pm h_k$ and 3. $h_k$ depends only quadratically on $x^j$.   

The reduced Hamiltonian density is according to the general theory of section 
\ref{s2} given by
\be \label{4.8}
H_\ast(q,p;t)=\dot{k}^I(t)\; h_I(q,p;\phi=k(t))
\ee 
It is not 
explicitly time dependent iff $\dot{k}$ is a constant and the particular 
way in which $k$ appears in $h_I$ loses its time dependence. Since the 
square root in $h_0$ depends on the term $S_{IJ} q^{ab} k^I_{,a} k^J_{,b}$
we require that $k^I_{,a}$ is time independent. The terms $A,B$ in $h_0$ 
are then also time independent since $k^0_j=k^a_j k^0_{,a}$ and 
$k^a_j$ is the inverse of $k^j_{,a}$. Accordingly the gauge fixing condition 
is of the form $k^I(t,x)=k^I_0\;t+k^I_1(x)$ where $k^I_0$ are $n$ constants 
and $k^I_1(x)$ are functions of the spatial coordinates only, subject 
to the condition that 
\be \label{4.9}
\det(\partial k/\partial(t,x))
=\frac{1}{(n-1)!}\; k_0^{I_0}\epsilon_{I_0..I_{n-1}}\epsilon^{a_1..a_{n-1}} 
\prod_{l=1}^{n-1} k^{I_l}_{1,a_l}
\not=0
\ee
A particularly simple choice is $k^I_0=\kappa_0 \delta^I_0$ in which 
case (\ref{4.9}) reduces to the condition that 
$\det(\partial k_1/\partial x)\not=0$ which is for example 
achieved for $k^j_1(x)=[\kappa_1]^j_a x^a$ and $\kappa_1$ is an 
invertible matrix. In this case 
\be \label{4.10}
S_{IJ} q^{ab} k^I_{,a} k^J_{,b}=q^{ab} S_{ab},\; S_{ab}=S_{jk} 
[\kappa_1]^j_a [\kappa_1]^k_b
\ee
Note that we could keep $H_\ast$ free 
of explicit time dependence in this gauge if we would add a mass and 
potential term which only depend on $\phi^j$ but not on $\phi^0$. 

At this point we have declared $q_{ab}, p^{ab}$ as the independent 
physical degrees of freedom of the model. However, to define an entire 
spacetime metric it is not enough to know the time evolution of 
$g_{ab}=q_{ab}$, we also need to know $g_{tt}=-N^2+q_{ab} N^b$ and 
$g_{ta}=q_{ab} N^b$ to have access to the full spacetime metric, i.e. 
we need to know $N,N^a$ as functions of $q_{ab},p^{ab}$. This dependence 
is provided by the stability condition on the gauge condition, i.e.
\be \label{4.10a}
[\partial_t G^I+\{C(N),G^I\}]_{\phi=k,\pi=-h}=-\dot{k}^I+N^a k^I_,a
-\frac{N}{Q}\;S^{IJ} h_J=0
\ee
which can be solved for $N,N^a$. For the choice discussed above which 
avoids explicit time dependence of the reduced Hamiltonian one finds 
\be \label{4.10b}
N=-\kappa_0 S_{00} \frac{Q}{h_0},\;
N^a=-\kappa_0 \; S_{00}\; ([\kappa_1]^{-1})^a_j\; S^{jk}\; \frac{k_k}{h_0}
\ee

\subsubsection{Canonical quantisation}
\label{s4.1.2}

In the canonical quantisation of this model one now considers the 
Weyl algebra generated by 
\be \label{4.11}
W(f,g)=\exp(i\int_{\Sigma}\; d^{n-1}x\; [f^{ab} q_{ab}+g_{ab} p^{ab}])
\ee
and representations thereof which allow a quantisation of $H_\ast$. 
This is quite a challenging task due to the square root involved and 
all the other non-linearities that are involved in $C^g$ which depends 
on inverse powers of $Q$.  
To make this task as simple as possible, one will resort to the 
simplest gauge condition $k^0=\kappa_0 t, k^j=[\kappa_1]^j_a x^a$ 
and block diagonal $S$ so that for the choice of the positive root  
\be \label{4.12}
H_\ast(q,p) = \frac{\kappa_0}{S_{00}^{1/2}}\;\sqrt{-Q\;[ 
C_g+\frac{1}{2}(Q^{-1}\; S^{ab} h_a^g h_b^g+ Q\; S_{ab}\; q^{ab})]}
\ee
where we have used 
\be \label{4.13}
S^{jk} h_j^g h_k^g=S^{ab}\; h_a^g h_a^g,\;h_a^g=-2 D_b p^b_a 
\ee
with $S^{ac} S_{cb}=\delta^a_b$ and $S_{ab}$ is just a constant matrix.  
One possibility to proceed is to notice that the expression $-E$ 
under the square
root is classically constrained to be positive and therefore classically 
\be \label{4.14}
H_\ast = 
\frac{\kappa_0}{S_{00}^{1/2}}\; \sqrt{|-E|}
=\frac{\kappa_0}{S_{00}^{1/2}}\; [E^2]^{1/4},\;\;
E=Q\;[ 
C_g+\frac{1}{2}(S^{ab} h_a^g h_b^g+ Q\; S_{ab}; q^{ab}]
\ee
which now at least frees us of making sure that the expression under the 
square root is positive after quantisation. The function $E^2$ can now 
be ordered symmetrically and one can implement this as an honest symmetric 
operator on a finite lattice (recall that $\sigma$ was chosen compact) and  
then study its Hamiltonian renormalisation flow.

\subsubsection{Path integral formulation}
\label{s4.1.3}

In the path integral approach outlined in section \ref{s2} one can now 
formally obtain the generating functional of Feynman ($s=1$) or Schwinger 
($s=0$) functions
\be \label{4.15}
Z_s(f)=\int[dq]\;[dp]\; \overline{\Omega_0(q(t=-\infty))}\;
\Omega_0(q(t=\infty))\; 
e^{-\int\; d^n x[i\; p_{ab}\dot{q}^{ab}+ (-1)^s\;i^s\; H_\ast(q,p)]}
e^{i^s \int\; d^n x\; f^{ab} q_{ab}}
\ee
This formula is rigorous as long as one understands that the 
integrals involved are in fact finite Riemann sums, and accordingly
$[dp], [dq]$ finite product Lebesgue measures, in a discetisation 
of the phase space on the compact spacetime $[-T,T]\times \sigma$ 
upon which $t=\pm \infty$ in the cyclic state is to be replaced 
by $t=\pm T$. As usual we keep a continuum notation but the steps 
that follow can only be strictly justified with that 
discretised understanding.

There are now two possibilities. Either one stays within this 
strictly reduced phase space and tries to integrate out the momenta 
$p^{ab}$. This is not possible directly because of the square root 
involved in $H_\ast(q,p)$. However, one can get rid of the square 
root using an exact integral transform based on a Lagrange multiplier 
field. Or one passes to an unreduced phase space path integral 
along the lines of section \ref{s2} which is another 
way to get rid of the square root. We will describe both methods below.
For both methods we obtain a problem in the case $s=0$ which is 
specific to GR. The problem of the first approach is that while 
the argument of the square root involved in $H_\ast$ is classically 
constrained to be positive, in the integral over all paths that 
condition cannot be maintained, the Hamiltonian is unbounded from below 
and the momentum integral diverges which is partly caused by the conformal 
mode. If one ignores the infinite factors that appear in carrying 
out the momentum integrals because they 
formally drop out in the non vacuum correlation functions one still 
faces the problem that the resulting expression is spatially rather 
non-local. We leave it for future investigations.
The problem of the second approach is that while there is no 
convergence issue in integrating out the momenta, we face the problem 
already mentioned in section \ref{s2}, namely that the ``Euclidian'' 
action becomes complex valued rather than real valued. In the second 
approach we are therefore 
unambiguously directed to consider the $s=1$ formulation 
(Feynman functions rather than Schwinger functions). In both 
formulations, the configuration space measure receives non-trivial
corrections to the ``naive'' Lebesgue measure.\\
\\  
\\
{\bf Reduced configuration space path integral}\\
\\
\\
The square root Hamiltonian 
reminds of the square root action for the relativistic particle or the 
Nambu-Goto action for closed bosonic string and one may recall 
\cite{10} that in a lowest order saddle point approximation 
\be \label{4.16}
\int\; d\lambda\; e^{\frac{c}{2}[\lambda f+\lambda^{-1}}])
\propto e^{c \sqrt{f}}
\ee
(the infinite constant involved
would drop out in $Z_s(f)/Z_s(0)$) would let us pass 
to the analog of the Polyakov action without square root. If
(\ref{4.16}) was exact one could now integrate out $p$ as $H_\ast^2$ depends 
only quadratically on $p$ yielding a Gaussian inegral. 
Unfortunately (\ref{4.16}) is not exact. 
However, there exists a more complicated version
of (\ref{4.16}) which is based on lemma \ref{la.a.1} proved in the appendix
and which to the best of our knowledge has not been reported before.

We write $H_\ast=:Q\sqrt{h}$ which displays $h$ as a scalar with zero 
density weight while $H_\ast,Q$ have density weight unity. 
Let $V_I:=\epsilon_I Q_I$ be a discretisation of 
$\int_{c_I}\; d^4x\; Q$, the volume measured by $Q$ of an $n-$cell $c_I$ with 
centre $p_I$
into which $M=[-T,T]\times \sigma$ is partitioned (the partition is finite)
with $\epsilon_I$ its coordinate volume and $Q_I=Q(p_I)$. Then with 
our understanding of the symbolic continuum expression with $h_I=h(p_I)$
\be \label{4.16a}
\int\;d^4x\; H=\sum_I\; V_I\; \sqrt{h_I}
\ee
By lemma \ref{la.a.1} we have with $d=V_I, z=h_I$
\be \label{4.16b}
e^{(-1)^{1+s}\; i^s \sum_I V_I \sqrt{h_I}}
=\prod_I \int_{\mathbb{R}_+} \; \frac{d\lambda_I}{\lambda_I^{3/2}}\;
\prod_I \frac{e^{-i\;s\;\frac{\pi}{4}}\;V_I}{2\sqrt{\pi}}\; 
e^{(-1)^{1+s}\; i^s \sum_I [\lambda_I h_I+\frac{V_I^2}{4\lambda_I}]}
\ee
We substitute $\lambda_I\to \lambda_I V_I$ to obtain 
\be \label{4.16b1}
e^{(-1)^{1+s}\; i^s \sum_I V_I \sqrt{h_I}}
=\prod_I \int_{\mathbb{R}_+} \; \frac{d\lambda_I}{\lambda_I^{3/2}}\;
\prod_I \frac{e^{-i\;s\;\frac{\pi}{4}}\;V_I^{1/2}}{2\sqrt{\pi}}\; 
e^{(-1)^{1+s}\; i^s \sum_I V_I\;[\lambda_I h_I+\frac{1}{4\lambda_I}]}
\ee
Accordingly, dropping the power of $e^{-i s \pi/4}/(2\sqrt{\pi})$ we have 
\be \label{4.16c}
Z_s(f)=\int[Q^{1/2} dq]\;[dp]\;[\lambda^{-3/2}\;d\lambda]\; 
\overline{\Omega_0(q(t=-\infty))}\;
\Omega_0(q(t=\infty))\; 
e^{-\int\; d^n x[i\; p_{ab}\dot{q}^{ab}+ (-1)^s\;i^s\; (\lambda\; H(q,p)
+\frac{Q}{4\lambda})]}
e^{i^s <f,q>}
\ee
where it is understood that one integrates over $\lambda\in \mathbb{R}_+\;\;\;
q_{ab}, p^{ab}\in \mathbb{R}$ and 
\be \label{4.16d}
H=|-[C_g+\frac{1}{2}(S^{ab} h_a^g h_b^g+ Q\; S_{ab}; q^{ab})]|
\ee
The expression in (\ref{4.16d}) between the absolute values 
is constrained to be positive in the classical theory (this is possible 
because the gravitational contribution $C_g$ is not bounded 
from below). This can be 
ensured in the path integral only if one does not integrate unconditionally 
over $q_{ab},p^{ab}$. But then integrating out $p_{ab}$ is not just 
reduced to Gaussian integrals and therefore not possible explicitly. 

To explore (\ref{4.16d}) we assume without proof that the paths in phase 
for which one can drop the modulus dominate. This means that we have 
to perform the Gaussian integral
\be \label{4.16e}
\int\; [dp]\; e^{-[i<p,\dot{q}>-(-1)^s\; i^s[<p,O(q,\lambda)\;p>+
<Q,\lambda(\Lambda-R+\frac{1}{2}S\cdot q)+\frac{1}{4\lambda}]>}
\ee
where $O(q,\lambda)$ is the differential operator
\be \label{4.16f}
[O(q)\;p]_{ab}:=Q^{-1} G_{abcd}(q,\lambda) \;p^{cd}-2 
D_{(a} q_{b)c} \frac{\lambda S^{cd}}{Q} q_{de} D_f p^{fe},\;
G_{abcd}=\lambda(q_{a(c} q_{d)b}-\frac{1}{n-2} q_{ab} q_{cd})
\ee
This differential operator is known to be indefinite with respect 
to the first term (conformal mode problem) and the second term, although 
it looks like minus a Laplacian, is likely not to repair this. This 
is the reason why $s=1$ is preferred because oscillatory Gaussians 
$x\mapsto e^{i x^2}$ can be integrated. Completing the square we obtain 
after dropping (infinite) constant factors
\ba \label{4.16g}
Z_1(f) &=& \int[Q^{1/2} dq]\;[\lambda^{-3/2}\;d\lambda]\; 
\overline{\Omega_0(q(t=-\infty))}\;
\Omega_0(q(t=\infty))\; |\det[O(q,\lambda)]|^{-1/2}\;
\times \nonumber\\
&& e^{-i[\frac{1}{4} <\dot{q}\; O(q,\lambda)^{-1} \dot{q}>
-<\lambda Q,\Lambda-R+S\cdot q/2+(2\lambda)^{-2}>]}
e^{i <f, q>}
\ea
This expression has no resemblance any more with the naive expression 
(Lebesgue measure times exponential of the classical action). It has 
an exponential part but that part involves the inverse of 
the full differential operator
$O$ and not just the DeWitt metric term $G_{abcd}$ and thus is
spatially non-local. There is also 
a measure part which involves the determinant of $O$.\\ 
\\
In principle one can now release the ASQG machinery on (\ref{4.16g})
where no gauge fixing term and no ghost term appear because we have 
reduced the phase space prior to quantisation, i.e. this would 
be the background formalism without gauge invariance where 
a background is introduced just for $q_{ab}$ and not the full spacetime 
metric. There 
is no current for the Lagrange multiplier field $\lambda$ 
because its whole purpose 
was to get rid of the square root and thus plays a role similar to the 
ghosts in presence of gauge invariance which are just there in order to 
write the Fadeev-Popov determinant factor as an exponential. One can see 
this also from the fact that no time derivatives of $\lambda$ appear. 

In order to solve the Wetterich equation one needs to make an Ansatz for 
the EAA which is usually the logarithm of the integrand of $Z(0)$. 
If we write this integrand as a measure $d\mu[q]$ by formally integrating 
out $\lambda$ then this Ansatz would be $\bar{\Gamma}(\hat{q},\bar{q})=
i\ln(\frac{d\mu[q]}{[dq]})_{q=\bar{q}+\hat{q}}$. Since it is practically 
not possible to carry out the integral over $\lambda$ we instead 
artificially multiply the integrand by $\exp(i<\iota,\lambda>)$ where 
$\iota$ is a current for $\lambda$ thus obtaining $Z'_1(f,g)$ where of course 
we are only interested in $Z_1(f)=Z'_1(f,g=0)$. We can now define 
in the usual way 
$\bar{Z}'_1(f,g;\bar{q}),\; \bar{C}'_1(f,g;\bar{q}),\; 
\bar{\Gamma}'_1(f,g;\bar{q})$ 
but while $\bar{Z}_1(f,\bar{q})=\bar{Z}'_1(f,g=0,\bar{q})$ and   
$\bar{C}_1(f,\bar{q})=\bar{C}'_1(f,g=0,\bar{q})$ we {\it do not have}
$\bar{\Gamma}_1(\hat{q},\bar{q})=\bar{\Gamma}'_1(\hat{q},\hat{\lambda}=0,
\bar{q})$. Instead, $\bar{\Gamma}_1(\hat{q},\bar{q})$ can and must be 
retrieved from $\bar{\Gamma}'_1(\hat{q},\hat{\lambda},\bar{q})$ via 
lemma \ref{la4.1} proved in the appendix. The advantage of proceeding like 
this is that it frees us from integrating out $\lambda$ and thus the Ansatz 
for the EAA is now given by 
$\bar{\Gamma}(\hat{q},\hat{\lambda};\bar{q})=
i\ln(\frac{d\mu[q,\lambda]}{[dq][d\lambda]})_{q=\bar{q}+\hat{q},
\hat{\lambda}=\lambda}$ if we write the integrand of $Z(0)$ as 
$d\mu[q,\lambda]$. In a first approximation one may discard the contribution
from the determinant, the cyclic state and $[Q^{1/2}] [\lambda^{-3/2}]$.
We leave the investigation of this route for future publications.
\\
\\
\\
{\bf Unreduced configuration space path integral}\\
\\
\\ 
If one wants to avoid the above spatial non-localities one may 
proceed as in section \ref{s2}
to obtain a path integral over the unconstrained phase space given by 
\ba \label{4.17}
Z_s(f)&=&\int\;[dq]\;[dp]\;[d\phi]\;[d\pi]\;[dN]\;
|\int\; [d\eta]\;[d\rho] e^{-i^s\int\;d^n x \eta^\mu \{C_\mu,G^I\}\rho_I}|\;
\overline{\Omega_0(q(t=-\infty))}\;
\Omega_0(q(t=\infty))\; 
\;\delta[G]\; 
\times \nonumber\\
&& e^{-i\int\;d^nx\;(\;p^{ab}\;\dot{q}_{ab}
+(-1)^s\; i^s\;\pi_I\dot{\phi}^I+N^\mu C_\mu)}\;
\; 
e^{i^s\int\; d^n x f^{ab} q_{ab}}
\ea
where $C_\mu$ are the original constraints (\ref{4.3}) and 
\be \label{4.18}
\{C_\mu,G^I\}=\frac{\partial C_\mu}{\partial \pi_I}=
\delta_\mu^a\; \phi^I_{,a}+\delta_\mu^0\; Q^{-1}\; S^{IJ}\pi_J
\ee
Note that while the exponent now depends at most quadratically on $p,\pi$ 
there is again an obstacle for $s=0$: The symplectic term $\pi_I \dot{\phi}^I$ 
has a relative factor of $i$ as compared to $C_\mu$. Then performing 
the Gaussian inegral would not undo the Legendre transform    
(\ref{4.2}) but would rather replace $\pi_I$ by the complex valued rather than
real valued object
\be \label{4.19}
S_{IJ}\frac{1}{N}[i\dot{\phi}^J-N^a\phi^J_{,a}]
\ee
This is the problem already spelled out in the general case in section 
\ref{s2}. We thus fix $s=1$ for what follows. 

To integrate out $p^{ab}$ we collect the terms that depend on it
using (\ref{4.2})
\ba \label{4.20} 
&& p^{ab}\dot{q}_{ab}
-N Q^{-1}\; G_{abcd}\; p^{ab} p^{cd}-p^{ab} [L_{\vec{N}} q]_{ab}
=N Q^{-1}(2\; p^{ab}\; k_{ab}-G_{abcd}\; p^{ab} p^{cd})
\nonumber\\
&=& N \{Q G^{abcd} k_{ab} k_{cd} -Q^{-1}\; G_{abcd} 
[p^{ab}-QG^{abef} k_{ef}]
[p^{cd}-QG^{cdgh} k_{gh}]\}
\ea
where 
\be \label{4.21}
G_{abcd}=
q_{a(c}\;q_{d)b}-\frac{1}{n-2} q_{ab} q_{cd},\;
G^{abcd}=q^{a(c}\;q^{d)b}-q^{ab} q^{cd},\;
G_{abef}\; G^{efcd}=\delta_{(a}^c \delta_{b)}^d
\ee
The Gaussian integral over $p^{ab}$ can now be performed 
by shifting $p^{ab}$  
and producing a factor of $|\det(N Q^{-1} G)|^{-1/2}$ per spacetime point
up to a global phase that drops out in $Z_s(f)/Z_s(0)$
where the determinant 
is for square matrices of type $G$ which has rank $r=\frac{1}{2}n(n-1)$.
There is no confomal mode problem \cite{26} because the integral of an 
osciallatory Gaussian $e^{i a x^2},\;a\in \mathbb{R}-\{0\}$ 
exists while it does not for an undamped Gaussian $e^{ax^2},\; a>0$. 
Note that there would also not be a conformal mode problem if we had kept 
$s=0$ because the exponent $G_{abcd}(q) p^{ab} p^{cd}$ 
always has a prefactor $i$ as it comes from implementing $\delta[C]=\int
[dN]\; e^{i<N,C>}$. 
  
To compute this determinant, we note that 
\be \label{4.22}
G_{abcd}=\hat{G}_{ijkl} E^{ij}_{ab} E^{kl}_{cd},\;
\hat{G}_{ijkl}=\delta_{i(k}\delta_{l)j}-\frac{1}{n-2}\delta_{ij}\delta_{kl},\;
E^{ij}_{ab}=e^i_{(a} e^j_{b)}
\ee
where $q_{ab}=\delta_{ij} e^i_a e^j_b$ defines an $(n-1)-$Bein for $q$. 
The matrix $\hat{G}$ viewed as a matrix on symmetric tensors has signature 
$(r-1,1)$ i.e. $r-1$ positive eigenvalues $+1$ and one negative eigenvalue 
$-1/(n-2)$ corresponding to tracefree and pure trace tensors respectively.
Let $u^a_\alpha$ be an eigenvector of $e_a^i$ i.e. $e_a^i u^a_\alpha
=\lambda_\alpha \delta_a^i u^a_\alpha$. Then 
$u^{ab}_{\alpha\beta}=u^{(a}_\alpha u^{b)}_\beta$ is an eigenvector 
of $E^{ij}_{ab}$ i.e. $E^{ij}_{ab} u^{ab}_{\alpha\beta}
=\lambda_\alpha \lambda_\beta \delta^i_{(a} \delta^j_{b)} 
u^{ab}_{\alpha\beta}$. Thus 
\be \label{4.23}
\det(E)=\prod_{1\le\alpha\le\beta\le n-1} \lambda_\alpha \lambda_\beta
=[\prod_{1\le \alpha\le n-1} \lambda_\alpha]^n=\det(e)^n
\ee
and 
\be \label{4.24}
\det(G)=-\frac{1}{n-2} \det(E)^2=-\frac{1}{n-2} Q^{2n},\; Q=\sqrt{\det(q)}
\ee
Accordingly dropping overall factors
\ba \label{4.25}
Z_1(f) &=& \int\;[dq]\;[d\phi]\;[d\pi]\;[dN]\;
|\int\; [d\eta]\;[d\rho] 
e^{-i\int\;d^n x \eta^\mu \{C_\mu,G^I\}\rho_I}|\;
\overline{\Omega_0(q(t=-\infty))}\;
\Omega_0(q(t=\infty))\;
\times \nonumber\\
&& 
[(Q^{2n} [|N| Q^{-1}]^r)^{-1/2}] \;\delta[G]\; 
e^{-i\int\;d^nx\;(|g|^{1/2}[R^{(n)}(g)-\Lambda]
+\pi_I\dot{\phi}^I+N^\mu C^s_\mu)}\;
\; 
e^{i\int\; d^n x f^{ab} q_{ab}}
\ea
where we used the Gauss-Coddacci relation and $|g|^{1/2}=|N| Q$. 

To perform the integral over $\pi_I$ we note that $\pi_I$ also appears 
in the ghost term. The ghost integral appears in a modulus and contains a 
factor of $i$ in the exponent, but the ghost matrix is even dimensional 
and thus yields a real valued determinant. Thus the modulus just 
multiplies the the ghost integral without modulus by a factor $\pm 1$
depending on the non-ghost variables.  
We collect the $\pi_I$ dependent terms
\ba \label{4.26}
&& \pi_I [\dot{\phi}^I-N^a \phi^I_{,a}-Q^{-1} S^{IJ} \eta^0 \rho_J]
-\frac{1}{2} N Q^{-1} S^{IJ} \pi_I \pi_J 
\nonumber\\
&=& -\frac{N}{2Q} [\pi_I-[Q S_{IK} L_n\phi^K+N^{-1}\eta_0 \rho_I)] S^{IJ} 
[\pi_I-(Q S_{JL} L_n\phi_L+N^{-1}\eta_0 \rho_J)] 
\nonumber\\
&& +\frac{1}{2Q} (Q S_{IK} L_n\phi^K+N^{-1}\eta_0 \rho_I) S^{IJ} 
(Q S_{JL} L_n\phi_L+N^{-1} \eta_0 \rho_J) 
\ea
The term $\eta_0 \rho_I$ is bilinear in the ghosts and thus a commuting number,
we can shift it away in the integral together with the $L_n\phi$ term and 
then perform the $n$ Gaussians which up to a global factor gives 
$[|N|/Q]^{-n/2}$ per spacetime point. Note that the surviving term quartic
in the ghosts is ultra-local and thus drops out as $\eta_0$ is nil-potent. 
Thus we obtain
\ba \label{4.27}
Z_1(f) &=&\int\;[dq]\;[d\phi]\;[dN]\;
|\int \; [d\eta]\;[d\rho]
e^{-i\int\;d^n x [\eta^a \phi^I_{,a}+\eta_0 (L_n\phi^I)]\rho_I}|\;
\overline{\Omega_0(q(t=-\infty))}\;
\Omega_0(q(t=\infty))\;
\times \nonumber\\
&& 
[(Q^{2n} [|N| Q^{-1}]^r)^{-1/2} (|N|/Q)^{-n/2}] 
\;\delta[G]\; 
e^{-i\int\;d^nx\;|g|^{1/2}
[R^{(n)}(g)-\Lambda-\frac{1}{2}\;S_{IJ} g^{\mu\nu} \phi^I_{,\mu} \phi^J_{,\nu}]
}\;
\; 
e^{i\int\; d^n x f^{ab} q_{ab}}
\ea
The local measure factor can be written
\be \label{4.28}
[(Q^{2n} [|N| Q^{-1}]^r)^{-1/2} (|N|/Q)^{-n/2}] 
=[Q^{-1} (|N|/Q)^{-(n+1)/4}]^n=(|g|^{-(n+1)/8} Q^{(n+1)/2})^n=:f_1(g)^n
\ee
Instead of using $[dq][dN]$ 
we may want to use $[dg]$. We have $g_{tt}=-N^2+q_{ab} N^a N^b, 
g_{ta}=q_{ab} N^b, g_{ab}=q_{ab}$.
The change of variables gives the Jacobean
\be \label{4.29}
|\det(\partial g/\partial (N,\vec{N},q))|
=|-2N\det(q)|=2([-\det(g)\det(q)]^{-2/n})^n=:2 f_2(g)^{-n}
\ee
To see this note that the matrix $\partial g/\partial(N,\vec{N},q)$ is 
a rank $n(n+1)/2$ square matrix with rank $1,n-1,1/2 n(n-1)$ diagonal 
blocks given by $-2N, q_{ac}, \delta_{(a}^c \delta_{b)}^d$ 
respectively and which is upper triangular with respect to this block 
structure. Therefore its determinant is equal to the product of the 
determinant of its diagonal blocks.

We absorb $f_1, f_2$ into the ghost action and also carry out the gauge 
fixing $\delta$ distribution and find with $f(g):=f_1(g) f_2(g)$
\ba \label{4.29a}
Z_1(f)&=&\int\;[dg]\;
|\int\;[d\eta]\;[d\rho] 
e^{-i\int\;d^n x\; \eta^\mu M_\mu^I(g) \rho_I}| \; 
\overline{\Omega_0(q(t=-\infty))}\;
\Omega_0(q(t=\infty))\;
\times\nonumber\\
&& e^{-i\int\;d^nx\;|g|^{1/2}
[R^{(n)}(g)-\Lambda-\frac{1}{2}\;S_{IJ} \kappa^I_\mu \kappa^J_\nu g^{\mu\nu} 
]}\;
e^{i\int\; d^n x f^{ab} q_{ab}}
\ea
with the ghost matrix
\be \label{4.30}
M_\mu^I(g)
=f(g)[\delta_\mu^0 (L_n k)^I+\delta_\mu^a k^I_{,a}] 
=f(g)[N^{-1}\delta_\mu^0 (\kappa_0\delta^I_0-N^a[\kappa_1]^j_a \delta^I_j
+\delta_\mu^a [\kappa_1]^j_a \delta_j^I] 
\ee
Note that the gauge fixing condition requires $\phi^I(x)=\sigma(x) k^I(x)$
on the sector $\sigma$ but since the action depends only quadratically 
on $\phi^I$ and because the ghost determinant is formally a 
product over $x$ of determinants of 4x4 matrices depending 
linearly on $\phi^I$ the signs $\sigma(x)$ drop out of $Z(f)$ so that
the final expression only contains $k^I$. We have regularised 
$\partial_\mu \sigma\equiv 0$ by first relplacing by piecewise constant
$\sigma$ and taking one sided derivatives at jumps. 

One could of course integrate out the ghosts at the price of picking up 
a measure factor $|\det[M)(g)]|$ per spacetime point. Conversely we may 
formally absorb the state dependence into the ghost matrix by 
multiplying (\ref{4.30}) by the factor
\be \label{4.30a}
e^{h/n},\; h(x^0,\vec{x})=
\delta(x^0,\infty)\;I(x^0,\vec{x}),\;
+\delta(x^0,-\infty)\;\overline{I(x^0,\vec{x})},\;
\Omega_0(q(x^0))=:\exp(\int\;d^{n-1}x\; 
I(x^0,\vec{x})
\ee
It contributes only at $x^0=\pm\infty$ and is ultra-local in time
in the sense that $I(x^0,\vec{x})$ depends only on $q(x^0)$. 
We may pick for instance $I(x^0,\vec{x})=w(\vec{x})\;
\ln(\Omega_0(q(x^0)))$ where $w$ is any normalised weight function
$\int\; d^{n-1}x w(\vec{x})=1$. If $\Omega_0[q]=
\propto e^{-<q,l\cdot q>/2}$ 
is a Fock state for some spatial integral kernel $l$
it is natural to pick 
\be \label{4.30b}
I(x^0,\vec{x})=-1/2\; q_{ab}(x^0,\vec{x})\;\int\; d^{n-1}y\; 
l^{ab;cd}(\vec{x},\vec{y})\; q_{cd}(x^0,\vec{y})
\ee
~\\
\\
Remarkably the final result is very close to the Einstein-Hilbert path 
integral for vacuum GR with a cosmological constant which 
has been intensively studied in ASQG, the scalar field has 
disappeared completely. 
However, there are several {\it key} differences:\\
1.\\
There is no longer a (say de Donder) gauge fixing term for the metric and 
no corresponding 
Fadeev-Popov determinant (after integrating out the ghosts). 
This is because the model has $n$ additional 
degrees of freedom as compared to vacuum GR. These  are the $n$ scalar 
fields ``eaten by 
the gauge bosons'' $g_{\mu\nu}$ similar to the Higgs Kibble mechanism for 
the electroweak interaction. The gauge fixing term and FP determinant 
respectively have been replaced by the ``reduction term'' that 
corresponds to the $\kappa$ dependence in the exponent and respectively 
a more general 
determinant that depends on the Dirac matrix, the deWitt metric and 
ADM to metric variable transformations. While ghost and Dirac matrix
are cloesly related, the reduction term and the gauge fixing term are 
only related in the sense that both originate from gauge fixing, however,
otherwise they are logically indepnendent because we have solved 
the gauge fixing condition exactly via the $\delta$ distribution 
while the gauge fixing term just suppresses field configurations
violating the gauge condition. In this suppressed version, the 
reduction term would stilll be there but with $k^I$ replaced 
by $\phi^I$ together with the $\phi$ integral.\\   
2. \\
Accordingly in application of the background field method to 
the above generating functional we are in the situation of a theory 
with no gauge invariance left over any more, that is, we apply 
the method of section \ref{s2.2} rather than \ref{s2.3} and therefore 
do not need to worry about adapting background dependent and independent
gauge conditions (\ref{3.20}).\\
3.\\
The generating functional $Z_1(f)$ has a current $f$ only for the true 
degrees of freedom $q$. In principle one could therefore integrate out 
$N,N^a$ and obtain a path integral over $q$ only just as for a 
system without gauge symmetry at all. All correlators derivable 
from (\ref{4.30}) are directly observable, one does not need 
to worry about the gauge invariant meaning of correlators among the 
$g_{\mu\nu}$ as in the ususal setting.\\  
4.\\
The contribution from the cyclic state $\Omega_0(q(t=\pm \infty))$ 
reminds us that (\ref{4.29}) was truly derived from a canonical quantisation
time ordered correlation function. Note that the very notion of 
time ordering and a Hamiltonian 
is only meaningful in this relational observable setting where
one has a distinguished notion of time and Hamiltonian.\\
5.\\
The cosmological constant term is augmented by a field dependent 
contribution (reduction term)
$g^{\mu\nu} S_{IJ} \kappa^I_\mu \kappa^J_\nu$ which, like 
the measure factor $[\det(M(g)]$, breaks spacetime diffeomorphism 
or coordinate transformation invariance.
This is no problem because the coordinates have been fixed by the 
gauge kondition $\phi^I=\kappa^I_\mu x^\mu$ and thus acquire a 
physical, operational meaning.\\ 
6.\\
We found that one can obtain a closed expression in terms of a 
configuration path integral only for the generating functional 
of time ordered functions, rather than the Schwinger functions. 
Thus we are forced to consider the Lorentzian version of ASQG in what 
follows.\\
7.\\
The path integral is directly for Lorentzian signature QG, no Wick rotation 
(a questinable notion in a theory in which the metric is not a background)
from a Euclidian version is necessary.   

\subsection{ASQG treatment of the model}
\label{s4.2}

We consider $Z_1(f)$ as given in (\ref{4.29}) as our starting point with 
the ghost action not integrated out.  
We define in the usual way as reviewed in section \ref{s3} the functionals 
\be \label{4.31}
C_1(f)=i^{-1} \ln(Z_1(f)),\;\;
\Gamma_1(\hat{q})={\rm extr}_f[<f,\hat{q}>-C(f)]      
\ee
If everything would be well defined the definition of the theory would be 
complete at this point. As this is not the case we formally 
introduce additional terms to the current and extend with $F^{ab}=:f^{ab},\;
F^{t\mu}=:u^\mu$ 
\be \label{4.32}
<f,q>=\int d^nx f^{ab} q_{ab}\; \to \; 
<F,g>=\int d^nx F^{\mu\nu} g_{\mu\nu}=<f,q>+
\int\; d^nx\; u^\mu\; g_{t\mu} 
\ee
Thus we obtain $Z'_1(F)=Z'_1(f,u)$ with the property that 
$Z'_1(f,u)_{u=0}=Z_1(f)$ and consequently also $C_1'(F)=i^{-1} 
\ln Z_1'(F)$ satisfies $C_1'(f,u)_{u=0}=C_1(f)$. Next we introduce 
a background $\bar{g}$, write the dependence of the integrand of 
(\ref{4.29}) as $g=\bar{g}+h$ (i.e. in the cyclic state, the ghost 
matrix and Einstein-Hilbert like action term), replace $[dg]$ by $[dh]$ and 
$<F,g>$ by $<F,h>$, thereby obtaining 
$\bar{Z}'_1(F,\bar{g})=Z'_1(F)\; e^{-i<F,\bar{g}>}$ so that with 
$F=(f,u)$
\be \label{4.32a}
\bar{Z}'_1(f,u,\bar{g})_{u=0}
=Z'_1(f,u)_{u=0}\; e^{-i<f,\bar{q}>}
=Z_1(f)\; e^{-i<f,\bar{q}>}=\bar{Z}_1(f,\bar{q})
\ee
where $\bar{Z}_1(f,\bar{q})$ results from $Z_1(f)$ by just 
writing the spatial metric as $q=\bar{q}+\tilde{h}$ and replace
just $[dq]$ by $[d\tilde{h}]$ and $<f,q>$ by $<f,\tilde{h}>$, i.e.
the background method is just applied to $q$, not to the full $g$.

Next we introduce into $Z_1'(F,\bar{g})$ currents 
$e,f$ for the ghosts and the $R_k, R_k^g$ terms as 
displayed in section \ref{s3} thereby producing 
$\bar{Z}'_{1,k}(F,e,r,\bar{g})$ with 
$\bar{Z}'_{1,k=0}(F,e=0,r=0,\bar{g})=Z'_1(F,\bar{g})$ where we can 
set $e=r=0$ before or after setting $k=0$.
From here on one is now precisely in the situation to which we 
can apply the ASQG machinery and obtain a Wetterich equation 
for $\bar{\Gamma}'_{1,k}(\hat{g},\hat{\eta},\hat{\rho};\bar{g})$. 
Note that the Wetterich equation applies because its derivation 
does not make use of any details of the measure $d\mu(h,\eta,\rho)$, 
the only structural elements that it refers to are the $R_k, R_k^g$
terms and the current terms. After solving it we compute 
$\bar{\Gamma}'_1(\hat{g},\hat{\eta},\hat{\rho};\bar{g})
=\bar{Z}'_{1,k=0}(\hat{g},\hat{\eta},\hat{\rho};\bar{g})$.
Now we apply lemma \ref{la4.1} to get rid of $\hat{\eta},\hat{\rho}$ and 
obtain $\bar{\Gamma}'_1(\hat{g},\bar{g})$ from 
$\bar{\Gamma}'_1(\hat{g},\hat{\eta},\hat{\rho};\bar{g})$. Then 
$\Gamma'_1(\hat{g})=\bar{\Gamma}_1'(\hat{g}',\bar{g})_{\hat{g}'=0,
\bar{g}=\hat{g}}$. Finally, splitting $\hat{g}=(\hat{q},\hat{N})$ 
we apply lemma \ref{la4.1} once more to get rid of $\hat{N}$ to finally 
obtain the desired answer $\Gamma_1(\hat{q})$.\\
\\
Alternatively, one could refrain from working withe the extended 
primed objects but then would need to adapt the asymptotic saftey machinery 
and in particular heat kernel methods to a split treatment of spatial and 
temporal derivatives, see \cite{5}. Also alternatively one can 
integrate out the ghosts i.e. refrain from introducing 
$R^g_k, e,f$ which has the advantage that we do not need to invoke 
lemma \ref{la4.1} but then the Ansatz for $\bar{\Gamma}'$ needs to
be generalised to incorporate the logarithm of the ghost determinant as 
detailed in (\ref{3.53}). \\
\\
This analysis is carried out in our companion paper \cite{companion}.

\section{Conlusion and Outlook}
\label{s5}

The aim of this paper was twofold: One the one hand, to review, in a hopefully
not too technical fashion, the formalism of CQG in its reduced phase space 
incarnation for ASQG practitioners and vice versa the formalism 
of ASQG for CQG practitiones. On the other, to show 
that the two approaches can be brought into contact. 

From the point of ASQG,
the CQG input is to provide a starting point or starting action for the 
analysis of the Wetterich equation which is based on a clear physical 
interpretation of the system: 
First, there is a clear distinction 
between true i.e. physical and gauge degrees of freedom.
Next, if one uses the unitary group generated by the reduced Hamiltonian 
to obtain the generating functional of physical Feynman distributions then the
generating functional {\it can} be formulated as a 
a path integral over Lorentzian signature metrics involving 
the Lorentzian Einstein-Hilbert action by unfolding the reduced phase space 
integral to the unreduced phase space.
If one uses instead the contraction semi-group generated by the 
reduced Hamiltonian 
to obtain the generating functional of physical Schwinger distributions then the
generating functional {\it cannot} be formulated as a 
a path integral over Euclidian signature metrics involving 
the Euclidian Einstein-Hilbert action by unfolding the reduced phase space 
integral to the unreduced phase space. In this case the best one can 
do is to keep the reduced phase space path integral and integrate out 
the momenta. We have sketched how to do this but focussed 
mostly on the Lorentzian interpretation. Furthermore, 
the process of integrating out the momenta gives  precise measure 
corrections
to the naive ``Ansatz Lebesgue measure times exponential of 
the sum of classical plus gauge fixing plus ghost action''. Finally 
there is a non-trivial 
dependence 
of that measure on the cyclic state on which the QFT underlying 
the Hamiltonian formulation is based.

From the point of CQG, the ASQG input is to provide a powerful 
renormalisation scheme for the reduced Hamiltonian since from the 
exponential of the Legendre transform of the effective 
action that one obtains from the ASQG framework at $k=0$ one can reconstruct
the matrix elements of the reduced Hamiltonian (in the Euclidian setting 
this is known as Osterwalder-Schrader reconstruction \cite{23}). 
We emphasised that at least in the Lorentzian regime from 
the point of CQG the only object of physical interest is the effective 
action which is the $k=0$ limit of the effective average action. The 
effective average action and its fixed point analysis 
serves merely as a tool to derive the effective action. 

We also had the opportunity to address a number of reservations that 
practitioners from other approaches to QG 
sometimes articulate concerning ASQG:\\
1.\\
Signature:\\
We have seen that the ASQG framework can also be applied 
to Lorentzian GR although parts of the framewoirk have to be adapted.
Mathematically this concerns the analytic properties of oscillatory 
versus damped integrals \cite{24}.
The suppression kernels $R_k$ of the Euclidian framework can at best 
be interpreted as 
oscillation
kernels in the Lorentzian framework.\\
2. \\
Background independence:\\
We showed that there is in fact no background dependence in ASQG: 
In ASQG one simply uses the background field method as a convenient tool 
to construct the background independent (effective) average action which 
is a standard instrument in QFT. It was invented by DeWitt, one 
of the founding fathers of CQG. The important point is to 
keep the background unspecified because only then can one 
unambiguously reconstruct 
the background independent effective action from the background dependent
one.\\
3.\\
Truncations:\\
Truncations in ASQG should be viewed as an approximation method of the 
renormalisation scheme, unfortunately without error control at present.
To the best of our knowledge, there is no renormalisation scheme without 
such an error uncontrolled approximation due to the sheer complexity 
of renormalisation group equations. Although in LQG, an incarnation 
of CQG, one can give a concrete UV finite definition of the reduced theory 
\cite{31,32} which in principle is complete,
that definition is not free from quantisation ambiguities and thus 
also requires renormalisation methods. In that sense the issue of truncations 
exists for both CQG and ASQG, in fact, any approach to QG.\\
\\
To see both frameworks in their interaction we have studied a concrete model 
for Lorentzian GR minimally coupled to scalar fields which is not taken 
to be a serious dark matter model but rather as a proof of principle 
or showcase example. Starting from CQG quantisation we have 
cast that model into a form to which a Lorentzian ASQG treatment is 
immediately applicable. We reserve that analysis for our companion paper
\cite{companion}. The Lorentzian version of ASQG follows in spirit 
all the steps of the Euclidian version but some modifications are 
necessary in order that the flow induced by the Wetterich equation be well
defined. This concerns in particular the class of cut-off functions 
which are now defined by Fourier transforms rather than Laplace 
transforms and thus have to come equipped with different analytical 
behaviour due to the fact that the Lorentzian heat kernel is a group 
rather than a semi-group. We have proposed such an adapted class of cut-off 
functions and as a spin-off found that some of the cut-off functions 
used in the Euclidian regime are in fact not granted to be 
in the image of the Laplace
transform of any meaningful mathematical object 
which likely requires revisiting some results reported 
in the literature based on the existence of such a pre-image.

The analysis performed in this contribution and our companion paper
\cite{companion} leaves of course many interesting open questions such as 
the amount of freedom in the choice of Lorentzian cut-off functions,
gauge reduction by other than Klein-Gordon matter, exploration 
of the Euclidian interpretation of the reduced phase space path integral
and its ASQG treatment. We leave this for future research.\\
\\    
~\\
{\bf Acknowledgements}\\ 
We thank Renata Ferrero for in depth discussions about the details of 
Euclidian ASQG. 

\begin{appendix}

\section{Relations between effective actions upon restriction}
\label{sb}

We consider a generating function $Z_2(f_1,f_2)$ in two variables $f_1,f_2$ 
such that $Z_1(f_1)=Z_2(f_1,f_2=0)$ is a generating function 
in one variable. Then of course also 
$C_1(f_1)=C_2(f_1,f_2=0)$ for 
$C_1(f_1)=i^{-s} \ln(Z_1(f_1)),\;C_2(f_1,f_2)=i^{-s} \ln(Z_1(f_1,f_2))$.
The question arises how the Legendre transforms $\Gamma_1(q_1)$ and 
$\Gamma_2(q_1,q_2)$ of these two 
functionals are related. The naive guess $\Gamma_1(q_1)=\Gamma_2(q_1,q_2=0)$
turns out to be completely false.
\begin{Lemma} \label{la4.1}  ~\\
Let $C_1(f_1), C_2(f_1,f_2)$ be any functionals such that 
$C_1(f_1)=C_2(f_1,f_2=0)=(R_2\cdot C_1)(f_1)$ where $R_2$ denotes  
restriction to zero in the second argument. Denote by 
$\Gamma_1(q_1)=[T_1\cdot C_1](q_1),\;
\Gamma_2(q_1,q_2)=[T_2\cdot C_2](q_1,q_2)$ 
their non-singular Legendre transforms w.r.t. 
only the first and both variables respectively. Then
\be \label{4.33}
\Gamma_1=T_1\circ R_2\circ T_2 \circ \Gamma_2
\ee
\end{Lemma}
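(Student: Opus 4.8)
The plan is to derive (\ref{4.33}) from the single structural fact that a non-singular Legendre transform is its own inverse, so that the right-hand side telescopes. First I would record, as a preliminary observation, that under the ``non-singularity'' hypothesis the operators $T_1$ (Legendre transform in $f_1$ only) and $T_2$ (Legendre transform in both arguments) are involutions, $T_1\circ T_1=\mathrm{id}$ and $T_2\circ T_2=\mathrm{id}$ on the relevant classes of functionals. The argument is the standard one at the level of critical points: if $\Gamma=T\cdot C$ with stationarity condition $\hat q=C^{(1)}(f)$ solved uniquely by $f=f(\hat q)$, then differentiating $\Gamma(\hat q)=[-C(f)+\langle f,\hat q\rangle]_{f=f(\hat q)}$ gives $\Gamma^{(1)}(\hat q)=f(\hat q)=(C^{(1)})^{-1}(\hat q)$, the term proportional to the stationarity equation dropping out; feeding this into $T\cdot\Gamma$, whose stationarity condition $f=\Gamma^{(1)}(\hat q)$ is the same equation $\hat q=C^{(1)}(f)$, and substituting back, one recovers $T\cdot\Gamma=C$. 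For $s=1$ the ``extr'' is generically a saddle rather than an extremum, so this is Fenchel--Moreau only formally, but the computation is signature independent and uses nothing beyond invertibility of $C^{(1)}$ with non-degenerate $C^{(2)}$, which is exactly what ``non-singular'' buys.

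Granting this, the proof is a three-line chain. By definition $\Gamma_2=T_2\cdot C_2$, so $T_2\cdot\Gamma_2=(T_2\circ T_2)\cdot C_2=C_2$. Applying the restriction $R_2$ and using the hypothesis $C_2(f_1,f_2=0)=C_1(f_1)$ — which itself follows from $Z_1=R_2\cdot Z_2$ since $R_2$ commutes with the pointwise operation $f\mapsto i^{-s}\ln f$ — gives $R_2\cdot T_2\cdot\Gamma_2=R_2\cdot C_2=C_1$. Finally applying $T_1$ and invoking $\Gamma_1=T_1\cdot C_1$ yields $T_1\circ R_2\circ T_2 \circ \Gamma_2=T_1\cdot C_1=\Gamma_1$, which is (\ref{4.33}).

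The conceptual point I would stress, and the reason the naive guess $\Gamma_1\overset{?}{=}\Gamma_2(\,\cdot\,,q_2=0)$ fails, is that $R_2$ and $T_2$ do not commute: one must transform all the way back to $C_2$ before setting the second argument to zero, because whenever $C_2$ couples $f_1$ to $f_2$ the extremal $f_1$ for the one-variable problem at $q_1$ is not the extremal $f_1$ for the two-variable problem at $(q_1,q_2=0)$. So in terms of actual computation there is essentially nothing to do beyond bookkeeping; the only genuinely delicate step, if one wants a rigorous rather than a formal statement, is the global invertibility of $C_2^{(1)}$ underlying the involutivity of $T_2$ — in the field-theoretic application this is the assumption that the connected $N$-point functions are non-singular tempered distributions with invertible two-point function, i.e.\ precisely the content packed into the word ``non-singular'' in the hypothesis. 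I expect that to be the main (and essentially the only) obstacle.
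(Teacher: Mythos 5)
Your proof is correct and is essentially the paper's own argument: the paper disposes of the lemma in one sentence by invoking $T_2\circ T_2=\mathrm{id}$ (which it calls idempotence, though involutivity is what is meant), whence $T_1\circ R_2\circ T_2\circ\Gamma_2=T_1\circ R_2\circ C_2=T_1\circ C_1=\Gamma_1$, exactly your chain. Your added justification of the involution property and the remark on why $R_2$ and $T_2$ fail to commute are consistent elaborations, not a different route.
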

\begin{proof}: \\
The identity (\ref{4.33}) 
follows from the fact that the Legendre transform is idempotent
$T_2\circ T_2={\rm id}_2$ when non-singular.
\end{proof}
That generically $\Gamma_1\not=R_2\circ \Gamma_2$ even if 
$C_1=R_2\circ C_2$
can already seen for a generic polynomial $C_2(\vec{f})$ of second order in 2 
variables 
$\vec{f}=(f_1,f_2)$ for which we define $C_1(f_1):=C_2(f_1,f_2=0)$.
If one writes $C_2=A+\vec{B}^T\vec{f}+\frac{1}{2}\vec{f}^T C \vec{f}$ then 
we see that $\Gamma_1=R_2\circ \Gamma_2$ 
iff $B_2=0=C_{12}$, i.e. there is no ``condensate'' and no coupling
between the subsystems.

\section{Avoiding a square root Hamiltonian}
\label{sa}

\begin{Lemma} \label{la.a.1} ~\\
Let $d,z>0$ and $c^s_d=\frac{d\;e^{-is\frac{\pi}{4}}}{2\sqrt{\pi}}$
with $s=0,1$. Then 
\be \label{a.1}
[c^s_d]^{-1}\; e^{(-1)^{s+1} i^s\; d\; \sqrt{z}}
=\int_{\mathbb{R}^+}\; \frac{dl}{l^{3/2}}\;
e^{(-1)^{1+s} i^s \; [z\;l+\frac{d^2}{4l}]}
\ee
\end{Lemma}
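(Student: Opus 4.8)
The plan is to collapse both signatures into a single computation and reduce it to an elementary Fresnel--Gaussian integral. First I would set $\epsilon := (-1)^{s+1} i^s$, so that $\epsilon = -1$ for $s=0$ and $\epsilon = i$ for $s=1$; in either case $\mathrm{Re}\,\epsilon \le 0$ and $\epsilon \ne 0$, which is precisely the range in which integrals of the type (\ref{2.42a}) converge (absolutely for $s=0$, as an oscillatory improper integral for $s=1$). Using $[c^s_d]^{-1} = \tfrac{2\sqrt\pi}{d}\,e^{is\pi/4}$ and $(-1)^{s+1}i^s = \epsilon$, the assertion (\ref{a.1}) is equivalent to the single identity
\[
I_s \;:=\; \int_{\mathbb R^+}\frac{dl}{l^{3/2}}\; e^{\epsilon\,(zl + d^2/(4l))}\;=\;\frac{2\sqrt\pi}{d}\,e^{is\pi/4}\,e^{\epsilon\,d\sqrt z}.
\]

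Next I would apply two substitutions. The inversion $l\mapsto d^2/(4zl)$ leaves the exponent $zl+d^2/(4l)$ invariant while transforming $l^{-3/2}\,dl \mapsto (2\sqrt z/d)\,l^{-1/2}\,dl$, giving $I_s = (2\sqrt z/d)\int_0^\infty l^{-1/2}\,e^{\epsilon(zl+d^2/(4l))}\,dl$; then $l = u^2$ turns this into $I_s = (4\sqrt z/d)\int_0^\infty e^{\epsilon(zu^2+d^2/(4u^2))}\,du$. The remaining ``master integral'' I would evaluate by completing the square, $zu^2 + d^2/(4u^2) = (\sqrt z\,u - \tfrac{d}{2u})^2 + d\sqrt z$, and then invoking the Cauchy--Schl\"omilch/Glasser substitution $w = \sqrt z\,u - \tfrac{d}{2u}$: this map is monotone on each of the two subintervals of $(0,\infty)$ separated by its zero $u_0 = (d^2/(4z))^{1/4}$, and inverting $\sqrt z\,u^2 - wu - d/2 = 0$ on both branches and adding the two contributions collapses the Jacobian to $du \mapsto \tfrac{1}{\sqrt z}\,dw$ over $w\in(0,\infty)$. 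Hence
\[
\int_0^\infty e^{\epsilon(zu^2+d^2/(4u^2))}\,du = e^{\epsilon d\sqrt z}\,\frac{1}{\sqrt z}\int_0^\infty e^{\epsilon w^2}\,dw = \frac{1}{2\sqrt z}\,\sqrt{\frac{\pi}{-\epsilon}}\;e^{\epsilon d\sqrt z},
\]
the last integral being half of (\ref{3.57}) (with the principal branch: value $\sqrt\pi$ for $\epsilon=-1$, value $\sqrt\pi\,e^{i\pi/4}$ for $\epsilon=i$).

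Assembling the pieces, $I_s = (4\sqrt z/d)\cdot\tfrac{1}{2\sqrt z}\sqrt{\pi/(-\epsilon)}\,e^{\epsilon d\sqrt z} = \tfrac{2}{d}\sqrt{\pi/(-\epsilon)}\,e^{\epsilon d\sqrt z}$; since $-\epsilon = 1$ for $s=0$ and $-\epsilon = e^{-i\pi/2}$ for $s=1$, one has $1/\sqrt{-\epsilon} = e^{is\pi/4}$, so $I_s = \tfrac{2\sqrt\pi}{d}\,e^{is\pi/4}\,e^{\epsilon d\sqrt z} = [c^s_d]^{-1}\,e^{(-1)^{s+1}i^s d\sqrt z}$, which is exactly (\ref{a.1}). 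The one point I would be careful about — and the only real obstacle — is the conditional convergence for $s=1$: since $|l^{-3/2}e^{i(\cdots)}| = l^{-3/2}$ is not integrable near $l=0$, $I_1$ must be read as $\lim_{\delta\downarrow 0}\int_\delta^{1/\delta}$, and one must check that the two substitutions and the Fresnel evaluation are compatible with this limit. The cleanest way to discharge this is to replace $\epsilon$ by $\epsilon-\eta$ with $\eta>0$, run the (now absolutely convergent) argument, and let $\eta\downarrow 0$, every integral being continuous in $\eta$; alternatively one integrates by parts once in $l$ near the origin to gain absolute convergence before substituting.
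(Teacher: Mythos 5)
Your proposal is correct and follows essentially the same route as the paper's own proof: both reduce the claim, after completing the square, to the Cauchy--Schl\"omilch evaluation of $\int_0^\infty e^{\epsilon(au-b/u)^2}\,du$ followed by a Gaussian/Fresnel integral, the only difference being that you symmetrize via the inversion $l\mapsto d^2/(4zl)$ and fold the substitution onto $w\in(0,\infty)$, whereas the paper uses the global bijection $x=v-r/v:(0,\infty)\to\mathbb{R}$ and discards the odd part of the Jacobian. Your regularization $\epsilon\to\epsilon-\eta$ to justify the conditionally convergent $s=1$ case is a point the paper passes over in silence.
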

Note that the integral converges absolutely for $s=0$ both at infinity
and zero. 
One idea of proof is based on the observation of \cite{27a} that 
integrals of exponentials of $-(x^2+x^{-2})$ are in fact analytically 
performable and the saddle point approximation of (\ref{a.1}) which 
is proportional to the exponential of $(-1)^{1+s}\; i^s
\sqrt{z}d$ in leading order. 
Another idea 
is obtained for $s=0$ by solving the equation $f_d(z)=e^{-d\sqrt{z}}=
c^0_d\; \int\;dl\;e^{-zl}\; \hat{f}_d(l)$ for $\hat{f}_d(l)$ using the inverse 
formula for the Laplace transform $\hat{f}_d(l)=\int_C\; \frac{dz}{2\pi i}
e^{lz} f_d(z)$ where $C$ is any path in $\mathbb{C}_+:=\{z\in\mathbb{C};
\Re(z)>0\}$ between end points $z_\pm,\; \Im(z_\pm)=\pm \infty$. 
Given Such 
a path is given by $u\in \mathbb{R} \mapsto\ \sqrt{z}(u)=\frac{d}{2l}+
|u|\exp(i{\rm sgn}(u)\frac{\pi}{4})$. The resulting integral over $u$ 
is then Gaussian which yields two contributions, one is the integrand 
of (\ref{a.1}), the other is proportional to 
$e^{-\frac{d^2}{4 l}}\delta(l)\equiv 0$. In what follows we verify (\ref{a.1})
dirictly using the methods of \cite{27a} because the Laplace method does 
not carry over to $s=1$ (the case $s=1$ is formally obtained from 
$s=0$ by analytic continuation).\\
\begin{proof}:\\
The claim is equivalent to 
\ba \label{a.2}
[c^s_d]^{-1} 
&=& 2\int_0^\infty\; \frac{du}{u^2}\;e^{(-1)^{s+1} i^s
[\frac{d}{2 u}-\sqrt{z}u]^2}
\nonumber\\
&=& 2\int_0^\infty\; dv\;e^{(-1)^{s+1} i^s
[\frac{d v}{2}-\frac{\sqrt{z}}{v}]^2}
\nonumber\\
&=& \frac{4}{d}\int_0^\infty\; dv\;e^{(-1)^{s+1} i^s
[v-\frac{r}{v}]^2}
\ea
where we succesively changed variables $l=u^2,\; u=v^{-1}, v\; d/2\to v$
and set $r:=\frac{d\sqrt{z}}{2}\in \mathbb{R}_+$. 
Now set $x(v):=v-\frac{r}{v}$ which is 
a bijection $\mathbb{R}_+\to \mathbb{R}$ since $x'(v)=1+\frac{r}{v^2}>0$ 
with inverse 
$v(x)=\frac{x}{2}+w(x),\; w(x)=\sqrt{r+[\frac{x}{2}]^2}$. It follows
\ba \label{a.3}
[c^s_d]^{-1}
&=& \frac{4}{d}\int_{-\infty}^\infty\; dx\;
\frac{\frac{x}{2}+w(x)}{2w(x)}\;
e^{(-1)^{s+1}\; i^s\;x^2}
\nonumber\\
&=& \frac{2}{d}\int_{-\infty}^\infty\; dx\;
e^{(-1)^{s+1}\; i^s\;x^2}
\nonumber\\
&=& \frac{2\sqrt{\pi}}{d}\; e^{i\;s\frac{\pi}{4}}
\ea
where we used that $\frac{x}{w(x)} e^{(-1)^{s+1}\; i^s\; x^2}$ is odd under
$x\to -x$.
\end{proof}

\section{Issues with the inverse Laplace transform}
\label{sc}

In the ASQG literature one introduces certain cut-off functions 
$R_k$ and assumes that they and their products with other functions 
are in the image of the Laplace transform. Assuming this, the 
required heat kernel time ``Q'' integrals over the pre-image that appear 
in the computation of the flow equations can be reduced to properties 
of the cut-off function itself, thus one never needs to know the pre-image
explicitly. In the first subsection we show that there are isses with the 
existence of the pre-image for selected proposed cut-offs and in the second 
we verify for a function that {\it is} in the image of the Laplace 
transform, that it qualifies as a Euclidian cut-off function.

\subsection{Inverse Laplace transform of selected proposed 
cut-offs}
\label{sc.1}

A standard reference on the (one sided) Laplace transform 
is \cite{24}.
Recall that given a function $\hat{f}:\; \mathbb{R}_+\to \mathbb{C}$ one 
defines its Laplace transform $f_k:\; \mathbb{R}_+\to \mathbb{C}$ by 
\be
\label{c.1}
f(z)=[L\cdot \hat{f}](z):=\int_0^\infty\; dt\; e^{-z\;t}\; \hat{f}(t)       
\ee
if the integral exists. A class of functions for which this is true are 
for example the piecewiese continuous functions of at most exponential 
growth, i.e. 
there exist constants $A,B>0$ such that $|\hat{f}(t)|\le A\; e^{Bt}$
so that $f$ exists for $z>B$. 

For this domain of functions for (\ref{c.1}), the image is a holomorphic 
function in the half plane $\Re(z)> B$ and by the theorem of Lerch, the  
Laplace tranmsform is an injection. Thus given two holomorphic 
functions in the image of this domain we can uniquely reconstruct
$\hat{f}_k$ by the Bromwich formula
\be \label{c.2}
\hat{f}(t)=[L^{-1}\cdot f](t)=
\int_{a+i\mathbb{R}}\; \frac{dz}{2\pi i}\; e^{z\;t}\;
f(z)
\ee
as a piecewise constant function of at most exponential
growth, where $a>B$ is some constant chosen such that the path 
in (\ref{c.2}) is to the right of all singularities of $f$. 

It is quite difficult to determine the exact image of the Laplace transform 
given a domain. For instance we may extend (\ref{c.1}) to a 
distribution which is the weak derivative of a piecewise constant function of 
compact 
suport in $(0,\infty)$, say $\hat{f}(t)=[\frac{d}{dt}]^n \chi_{[a,b]}(t)$ 
with $0<a<b<\infty$ and $\chi$ is the charcteristic function. 
Then $[L\cdot T](z):=z^n\; [L\cdot \chi]_{[a,b]}(z)$ exists, it is even 
a holomorphic function, but it is 
not clear whether the pre-image is unique if we consider the domain of 
tempered distributions of compact support in $(0,\infty)$.   

Therefore it is not clear whether we can turn the logic 
around, i.e. given a function  or distribution
$f(z)$ whether we can we find a function or distributioon
$\hat{f}$ and from what precise space such that $f=L\cdot \hat{f}$ holds 
and whether 
$\hat{f}$ is unique. This is however the situation we are confronted 
with in asymptotic safety: We are given a certain function $f_k(\Box)$ 
of the Laplacian and wish to write it as a Laplace transform. 

The concrete application in asymptotic safety are the functions
\be \label{c.3}
f_{k,N}(z)=\frac{k\partial_k R_k(x)}{[z+R_k(z)]^N},\; n\ge 1
\ee
where $R_k(x)$ is the chosen cut-off function. A popular choice,
applied in the Euclidian signature regime, is the 
{\it optimised cut-off} 
\be \label{c.4}  
R_k(z)=(k^2-z)\;\theta(k^2-z)\;\;\Rightarrow\;\;
f_{k,N}(z)=2\;k^{2(1-N)}\;\theta(k^2-z)
\ee
Accordingly we are interested in the question whether the function 
$f_k(x)=\theta(k^2-x)$ is in the image of the Laplace transform. We
will be satisfied to show that there exists at least one pre-image, 
even if it is not unique. 

A possible Ansatz is to apply the Bromwich integral but we expect 
trouble because $\theta(k^2-z)$ is obviously not a holomorphic function.
At best it is a holomorphic distribution. In fact, we can make use of 
the identity
\be \label{c.5}
\theta(z)=\frac{1}{2}({\sf sgn}(z)+1)=\frac{1}{2}+
\int_{\mathbb{R}}\; \frac{dp}{2\pi} \frac{e^{ipz}}{ip}     
\ee
which can be proved using the residue theorem. Formula (\ref{c.5}) 
applied to $f_k(z)=\theta(k^2-z)$ however cannot be 
analytically extended to a half space $\Re(z)>a$ as the integral over $p$
extends over the whole real axis. If we nevertheless formally apply 
the Bromwich integral we obtain an ill-defined expression. On the other 
hand (\ref{c.5}) demonstrates that $f_{k,n}$ for the optimised cutoff formally 
applied to the Lorentzian regime can be written as a {\it Fourier transform}
and thus has a Lorentzian heat kernel expression. 

Returning to the Euclidian regime we will now rigorously 
prove:
\begin{Proposition} \label{prop.c.1} ~\\
The function $f_k(z)=\theta(k^2-z),\; k\not=0$ on $\mathbb{R}_+$ is not in the  
image of the Laplace transform of a positive measure on the positive 
real line.
\end{Proposition}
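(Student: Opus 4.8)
The plan is to argue by contradiction using the fact that the Laplace transform of a positive measure is completely monotone, which will be incompatible with the indicator function $\theta(k^2-z)$ at the jump point $z=k^2$. First I would recall the classical Hausdorff–Bernstein–Widder theorem: a function $f:\mathbb{R}_+\to\mathbb{R}$ is the Laplace transform $f(z)=\int_{\mathbb{R}^+}\,d\mu(t)\,e^{-zt}$ of a positive measure $\mu$ if and only if $f$ is completely monotone, i.e. $f\in C^\infty((0,\infty))$ and $(-1)^n f^{(n)}(z)\ge 0$ for all $n\ge 0$ and all $z>0$. (One may also invoke the weaker consequence that such an $f$ must at least be continuous, indeed real-analytic, on $(0,\infty)$.)

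Then I would simply observe that $f_k(z)=\theta(k^2-z)$ is not continuous at $z=k^2>0$: it equals $1$ for $z<k^2$ and $0$ for $z>k^2$ (or whatever convention one fixes at the point itself), so it has a genuine jump discontinuity in the interior of $(0,\infty)$. Since any Laplace transform of a positive (even merely finite signed) measure is continuous — indeed $C^\infty$ — on the open half-line where the integral converges, $f_k$ cannot be of this form. To make the contradiction fully rigorous without quoting smoothness, I would argue directly: if $\theta(k^2-z)=\int_{\mathbb{R}^+}\,d\mu(t)\,e^{-zt}$ for a positive measure $\mu$, then for $z>k^2$ we would have $\int_{\mathbb{R}^+}\,d\mu(t)\,e^{-zt}=0$; since the integrand is strictly positive and $\mu\ge 0$, this forces $\mu=0$, whence $f_k\equiv 0$, contradicting $f_k(z)=1$ for $0<z<k^2$ (here we use $k\ne 0$ so that the interval $(0,k^2)$ is nonempty).

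The argument is essentially immediate, so there is no real obstacle; the only point requiring a little care is the boundary convention at $z=k^2$ and making sure one does not need the full Bernstein machinery — the vanishing-integral trick above sidesteps it entirely and uses only positivity of $\mu$ and positivity of $e^{-zt}$. I would present the short direct version as the proof and mention the complete-monotonicity characterization as the conceptual reason. One could additionally remark that the same argument rules out any finite signed measure, and that the genuine obstruction for the broader claims in the paper (pre-image being a distribution rather than a measure) is more delicate and is treated separately; but for the stated proposition this suffices.
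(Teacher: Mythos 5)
Your proof is correct, and it takes a genuinely different and more elementary route than the paper. The paper argues via the Stieltjes moment problem: it computes the would-be moments $\int d\mu_k(t)\,t^n$ for $n\ge 0$ from the derivatives of $f_k$ at $z=0$ (obtaining $\delta_{n,0}$, since $f_k\equiv 1$ near zero), invokes the existence and uniqueness criteria for the Stieltjes problem to conclude that the only positive measure with these moments is the Dirac measure at $t=0$, and then derives a contradiction because the negative-order moments $\int d\mu_k(t)\,t^{-n}$, which should equal the finite quantities $\frac{1}{(n-1)!}\int_0^\infty dz\,z^{n-1}f_k(z)=k^{2n}/n!$, are $+\infty$ for the Dirac measure. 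Your vanishing-integral trick --- evaluating the transform at a single $z>k^2$, where strict positivity of $e^{-zt}$ and positivity of $\mu$ force $\mu=0$, contradicting $f_k=1$ on $(0,k^2)$ --- reaches the same conclusion in one line and needs none of the moment-problem machinery; it is the cleaner proof of the proposition as stated. What the paper's longer route buys is a by-product that matters for its broader discussion: it exhibits explicitly that the heat-kernel-time integrals $I_{k,n}$ for $n<0$, which the ASQG literature computes via formula (C.6) for the optimised cut-off, cannot all be finite for any candidate positive measure, which is the practical point of the appendix. One small caution on your closing remark: the single-point vanishing argument does not by itself rule out finite \emph{signed} measures (a signed measure can integrate $e^{-zt}$ to zero at one $z$ without vanishing); for that you need either the analyticity of the transform on the half-plane of convergence or the injectivity of the Laplace transform applied to the vanishing on all of $(k^2,\infty)$, both of which you correctly gesture at.
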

\begin{proof}:\\ 
Suppose that a given $f_k(z)$ had a
pre-image $\hat{f}_k(t)$ under the Laplace transform. Then we are 
interested in the heat kernel time integrals 
\be \label{c.6}
I_{k,n}:=\int_0^\infty\; dt\; \hat{f}_k(t)\; t^n = \left\{
\begin{array}{cc}
(-1)^n\; ([\frac{d}{dz}]^n f_k)(z)_{z=0} & n\ge 0\\
\frac{1}{(|n|-1)!}\; \int_0^\infty\; dz\; z^{|n|-1} \; f_k(z) & n\le -1
\end{array}
\right.
\ee
This yields for $f_k(z)=\theta(k^2-z)$ the explicit expression 
\be \label{c.7}
I_{n,k}=\theta(n) \delta_{n,0}+\theta(-n) \frac{1}{|n|!} k^{2|n|}
\ee
We interpret $\hat{f}_k(t)\;dt=d\mu_k(t)$ as the symbolic notation of a 
possibly signed measure on $\mathbb{R}^+$. Then for $n\ge 0$ 
\be \label{c.8}
\mu_{k,n}:=\int \; d\mu_k(t)\; t^n =\delta_{n,0}
\ee
is a Stieltjes moment problem \cite{24} (see vol. 2, ch. X, problem 25).
It has a solution in the class of positive measures if and only if 
\be \label{c.9}
\sum_{m,n=0}^\infty\; \bar{z}_m \; z_n \;\mu_{k,m+n}\ge 0,\;
\sum_{m,n=0}^\infty\; \bar{z}_m \; z_n \;\mu_{k,m+n+1}\ge 0
\ee
and it is unique if there exist $A,B>0$ such that $\mu_{k,n}\le 
A\; B^{n}\; (2n)!$. It is not difficult to show that both critera
are satisfied. The unique positive measure solving (\ref{c.7})
is the Dirac measure supported at zero, formally $d\mu_k(t)=\delta(t)\;dt$.
But then for any $n>1$
\be \label{c.10}
\frac{k^{2n}}{n!}=\int\; d\mu_k(t)\; t^{-n}=+\infty
\ee
which is a contradiction. 
\end{proof}
A more formal way to see that (\ref{c.7}) is contradictory is 
to Taylor expand the exponential in the relation
$\theta(k^2-x)=\int\;dt \; \hat{f}_k(t)\; e^{-zt}$ and to use (\ref{c.7}) for 
$n\ge 0$ which gives $\theta(k^2-z)\equiv 1$ for all $z$, i.e. there is no 
jump at $z=k^2$. 

While we cannot exclude that there exists a distribution of a different 
kind than the ones discussed at the beginning of this section with 
non-holomorphic images under the Laplace transform, the 
claimed properties \ref{c.7} based on the existence 
of the pre-image $\hat{f}_k(t)$ of the  optimised cut-off are very doubtful.

In the Lorentzian 
regime we infer from (\ref{c.5}) that 
\be \label{c.14}
\theta(k^2-z)=\int_{-\infty}^\infty\;\frac{dt}{2\pi} \; e^{itz}
[\frac{e^{-itk^2}}{it}+2\pi \delta(t)] 
=:
\int_{-\infty}^\infty\;dt\; \hat{f}_k(t)
\ee
which means that indeed a heat kernel expansion exists but the 
heat kernel time integrals 
$\int_{-\infty}^\infty\;dt\; \hat{f}_k(t)\; t^n$ badly diverge 
for all $n\in \mathbb{Z}$. Hence the optimised cut-off is also not 
practical in the Lorentzian regime.\\
\\
There are also issues with suggested 
other ``holomorphic'' cut-offs such as 
$R_k(z)=k^2\;r(z/k^2),\; r(y)=y\;[e^y-1]^{-1}$ which 
implies $k\partial_k R_k/(z+R_k)=2\;r(y)=2\;R_k(z)/k^2$. 
The function $r$ has poles on the
imaginary axis at $y=2\pi i n,\;n\in \mathbb{Z}$, thus for any $a>0$
the path in the Bromwich integral is to the right of those poles.
Suppose $r(y)$ is the Laplace transform of $\hat{r}(t)$. Then pick a 
Bromwich contour in the right complex plane starting and ending at 
$-\mp i(2n+1)\pi$ respectively and take $n\to \infty$. We can close that 
contour in the left complex plane with a semi-circle of radius 
$r_n=(2n+1)\pi$ so that $e^y,\;e^{yt};t>0$ decay to zero exponentially as 
$n\to \infty$ on that semi-circle. Thus by the residue theorem 
the Bromwich integral returns the infinite series 
\be \label{c.14a}
\hat{r}(t)=\sum_{k\in \mathbb{Z}} (2\pi i k)\; e^{2\pi i k t}
=\frac{d}{dt} \delta_{S^1}(t)
\ee
which is a distribution, namely the derivative of the 
$\delta$ distribution on the circle,
and no function as it diverges poinwise in $t$. In particular it is 
not piecewise continuous and of exponential type thus the Bromwich 
integral is not granted to apply. Indeed we can can check
whether $r$ is the Laplace transform of (\ref{c.14a}) by integrating 
term-wise for $y>0$ to obtain
\be \label{c.14b}
\int_0^\infty \; dt\; e^{-yt} \; \hat{r}(t)=
-\sum_{k\in \mathbb{Z}}\;\frac{2\pi ik}{2\pi ik-y}=-8\pi^2\;
\sum_{k=1}^\infty\; \frac{k^2}{[2\pi k]^2+y^2}
\ee
which diverges for all $y>0$. Thus if $\hat{r}$ exists, then certainly 
not as a function. However, we may interpret $\delta_{S_1}(t)$
as the periodisation of the $\delta$ distribution on the real axis, i.e.
\be \label{c.14c}
\delta_{S^1}(t)=\sum_{n\in \mathbb{Z}}\; \delta(t,n)
\ee
Then for $t>0$ we may try the definition (dropping the $n\le 0$ contributions 
by hand) 
\be \label{c.14d}
\hat{r}(t):=\frac{d}{dt}\sum_{n=1}^\infty \delta(t,n)
\ee
which yields for $y>0$ upon formally integrating by parts 
\be \label{c.14e}
\int_0^\infty\; dt\; \hat{r}(t)\; e^{-yt}
=y\;\sum_{n=1}^\infty\; e^{-ny}        
=y\;\frac{e^{-y}}{1-e^{-y}}=r(y)
\ee
since the geomteric series converges for $y>0$. 

Accordingly, $r(y)$ can 
indeed be interpreted as the Laplace transform of the distribution 
(\ref{c.14d}). However, with this interpretation, the claimed identities 
(\ref{c.6}) a priori fail. We have for $N\in \mathbb{Z}$
\be \label{c.14f}
\int_0^\infty\; t^N\; \hat{r}(t)
=\left\{ \begin{array}{cc}
0 & N=0\\
-N\;\sum_{n=1}^\infty\; n^{N-1}=-N\; \zeta(1-N) & N\not=0
\end{array}
\right. 
\ee
where $\zeta$ is the Riemann $\zeta$ function. 
As it stands, the r.h.s. of (\ref{c.14f}) is zero for $N=0$,
diverges to $-\infty$ for $N>0$ and converges to a positive 
number for $N<0$. To make (\ref{c.6}) hold we 
interpret the ill defined r.h.s. of (\ref{c.14f}) in terms 
of the of the analytic continuation of 
the $\zeta$ function. We define $\zeta(1)=+\infty$ such that 
$0\cdot \zeta(1):=1$ so that (\ref{c.14f}) produces $r(0)$. Then for $N<0$  
(\ref{c.6}) and (\ref{c.14f}) agree as the integral in (\ref{c.6}) 
coincides with the integral definition
\be \label{c.14g}
\zeta(z)=\frac{\int_0^\infty\; dy\; y^{z-1} \; [e^y-1]^{-1}}
{\int_0^\infty\; dy\; y^{z-1} \; e^{-y}} 
\ee
of the $\zeta$ function for $\Re(z)>1$.
Finally the analytic continuation yields for $N< 0$
\be \label{c.14h}
\zeta(-N)=(-1)^N\;\frac{B_{N+1}}{N+1}\;\; (N\ge 0),\; 
\ee
where $B_N$ denotes the $N-$th Bernoulli number.

\subsection{Cut-off functions in the image of the Laplace transform}
\label{sc.2}

The lesson learnt from the previous subsection is that for the cut-off f
unctions 
proposed in ASQG for the Euclidian regime, a non-trivial task
is to show that they are in fact in the image of a function,
distribution or measure such that the important relations (\ref{c.7}) may 
be applied to compute and show the convergence of 
the heat kernel time integrals. This issue is open for the optimal
cut-off while for the ``holomorphic'' cut-off we found a distribution 
in the pre-image. The integrals (\ref{c.7}) are ill-defined for that 
distribution, only a ``reinterpretation'' ($\zeta$ function regularisation)
makes (\ref{c.7}) work. On the other hand, the fact that $\hat{r}(t)$ 
is a distribution rather than a smooth function carries the danger of 
producing further ill-defined expressions in multiple integrals over 
heat kernel times as products of distributions are ill-defined.  

A safer route is 
to define the cut-off function as the Laplace transform of a given 
function. But then it is not a priori clear whether the image 
of the Laplace transform satisfies the desired properties of a 
cut-off function. This touches on the branch of mathematics known as 
Paley-Wiener theory \cite{24}. For instance it is known that 
$L\cdot L_2(\mathbb{R}_+, dt)=H_2(\mathbb{C}_+)$, i.e. the 
image of the one sided Laplace transform of $L_2$ functions $\hat{r}$ on the 
positive real line are the entire functions $r$ on the upper half plane 
(with complex variable $z=x+iy$) such that the integrals of $|r|^2$ along
the real axis at constant $y$ are uniformly bounded in $y$ (Hardy space).
This is certainly not the case for the above $r(z)=-iz/(e^{-iz}-1)$ whose 
corresponding integrals over $x$ diverge for any $y>0$.  
We leave this interesting question for future research.     

The source of the problem is that acccording to (\ref{c.6}) 
the function $\hat{f}_k(t)$ must be able 
to integrate {\it all positive and negative powers of $t$}. This 
sugests that $\hat{f}_k(t)$ decays rapidly at both $t=0,=\infty$ 
and is the reason why we 
were lead to consider functions such as 
\be \label{c.11}
\hat{f}_k(t)=k^4\;\exp(-([k^2 t]^2+[k^2 t]^{-2})/2)
\ee
in the next section in application to the Lorentzian sector. 
The integrals of this function multiplied by $t^n,n\in \mathbb{Z}$ 
exist and are computed in \cite{27a}. But then we define 
$f_k(z)$ as the Laplace transform of $\hat{f}_k(t)$ with {\it given, existing}
$\hat{f}_k(t)$ and $f_k(z)$ is no longer a prescribed but a derived function.

We now estimate the 
the Laplace transform (\ref{c.11}) to demonstrate 
that it has the usually desired properties. We 
have for $z\ge 0$ 
\be \label{c.12}
f_k(z)=k^2\; f(z/k^2),\;
f(y)=\int_0^\infty\; dt\; e^{-t^2-t^{-2}} \; e^{-yt}
\ee
For fixed $k>0$ we are interested in $z\ll k^2$ and $z\gg k^2$ i.e.
$y\to 0+$ and $y\to +\infty$. We have \cite{27a}
\be \label{c.13}
\lim_{y\to 0+} f(y)=f(0)=\frac{e^{-2}\; \sqrt{\pi}}{2},\;\;
f(y)\le \int_0^\infty \; e^{-ty}=\frac{1}{y}
\ee
Thus at fixed $k$ the function 
$f_k$ indeed approaches a constant for $z\ll k^2$ and decays to zero 
at least as $1/z$ for $z\gg k^2$. Next, for fixed $z>0$ we are interested 
in $k\to 0+$ and $k\to +\infty$ i.e. $y\to +\infty$ and 
$y\to 0+$ respectively. Using (\ref{c.13}) we see that $f_k(z)$ 
decays as $k^4/z$ as $k\to 0$ and 
$f_k(z)\to k^2 f(0)$ as $k\to \infty$. Thus (\ref{c.11}) has all 
the properties that are usually required in the {\it Euclidian 
regime} and now it is secured that 
$f(y)$ is in the image of the Laplace transform. This also 
enables one to check that the product of two 
functions $f,g$ is in the image of the Laplace transform when $f,g$ are.
This will be the case if e.g. the convolution integral 
$\int_0^t\; ds\; \hat{f}(s) \; \hat{g}(t-s)$ exists for all $t\ge 0$
and grows at most polynomially in $t$.

Finally we compute the ``Q'' integrals for (\ref{c.11})
\be \label{c.13a}
Q_n=\int_0^\infty\; dt\; t^n\; \hat{f}(t)
=\int_0^\infty\; dt\; t^n\; e^{-t^2-t^{-2}}
\ee
for $n\in \mathbb{Z}$ which cannot be treated by (\ref{c.6}) because we do 
not know $f(y)$ in closed form, thus we must compute directly. We note 
that $Q_n=Q_{-(n+2)}$ which enables us to concentrate on $n\ge -1$.
Introducing the bijection $\mathbb{R}_+\to \mathbb{R}$ defined by 
$u=t-t^{-1}, t=u/2+w,\;w:=\sqrt{1+[u/2]^2}$ we have $dt=t/(2w)\;du$ and thus 
\be \label{c.13b}
Q_n=\frac{e^{-2}}{2}\;\int_{-\infty}^\infty\; \frac{du}{w}\;\; e^{-u^2}\;
(w+u/2)^{n+1}
\ee
It is immediate to see that $Q_{2n}$ 
is computable in closed form \cite{27a} in terms of elementary Gaussian 
integrals. These integrals are also related to modified 
Bessel functions of the second kind. For  $Q_{2n+1},n\ge -1$ we can make 
use of the elementary estimate $w^{-1}<1<w$ to see that $Q_{2n+1}$
has analytically computable upper and lower bounds \cite{27a}.
These integrals will be used in our companion paper \cite{companion}.
\\
\\
Accordingly (\ref{c.11}) qualifies as a Euclidian cut-off function.

\section{Lorentzian cut-off functions}
\label{sd}

In the Lorentzian signature case the Wetterich equation 
\be \label{d.1}
k\;\partial_k\; \Gamma_k
=\frac{1}{2i}\; {\rm Tr}([\Gamma_k^{(2)}+R_k]^{-1}\; [k\partial_k R_k])
\ee
can be usefully expanded as follows (we use the background formalism): 
We split 
\be \label{d.2}
\Gamma_k=F_k+U_k
\ee
where $F^{(2)}_k$ has the property to be independent of $k$ and to depend 
just on the background Laplacian, also having its dimension. 
If both numerator and denominator in (\ref{d.1})
contain a commpon prefactor $f_k$ that depends on the dimensionless 
couplings $g_k$ then $\partial_k R_k$ is to be replaced 
by $k\partial_k R_k+\eta_k R_k, \eta_k=k\partial_k \ln(f_k)$, 
the additional ``anomalous dimension'' 
term increases the non-linerarity of the flow as one 
must eventually solve the system for $k\partial_k g_k$. We refrain 
from displaying this term as it can be treated by the same 
methods as described below.  

We then expand (\ref{d.1}) into a 
geometric series 
\be \label{d.3} 
k\;\partial_k\; \Gamma_k
=(2i)^{-1}\;\sum_{n=0}^\infty\; (-1)^n\;
{\sf Tr}([k\partial_k R_k]\;[F_k^{(2)}+R_k]^{-1}\;(U_k^{(2)}\;
[F_k^{(2)}+R_k]^{-1})^n)
\ee
which can be rewritten as 
\be \label{d.4} 
k\;\partial_k\; [\Gamma_k-(2i)^{-1}\; {\sf Tr}(\ln([F_k^{(2)}+R_k]
=(2i)^{-1}\;\sum_{n=1}^\infty\; (-1)^n\;
{\sf Tr}([k\partial_k R_k]\;[F_k^{(2)}+R_k]^{-1}\;(U_k^{(2)}\;
[F_k^{(2)}+R_k]^{-1})^n)
\ee
The logarithmic term on the l.h.s converges to zero as $k\to 0$ if 
$R_k\to 0$. One can therefore ignore it in that limit. We also ignore 
it for the moment and come back to it at the end of this section. 
We just remark that since it can be written in 
terms of the functional 
determinant $\det(F_k^{(2)}+R_k)$ one may envisage also independent 
regularisation schemes such as zeta function regularisation. Note 
also that if $F_k=F$ is also independent of $k$ and not only 
$F_k^{(2)}$ then $k\partial_k \Gamma_k=k\partial_k U_k$ and (\ref{d.4})
becomes a flow equation entirely in terms of the effective potential 
$U_k$.

The right hand side of (\ref{d.4}) can be written
(we formally use cyclicity of the trace)
\be \label{d.5}
(2i)^{-1}\;\sum_{n=0}^\infty\; (-1)^n\;
{\sf Tr}([k\partial_k (F_k^{(2)}+R_k)^{-1}]\;U^{(2)}_k\;
[(F^{(2)}_k+R_k)^{-1})\; U_k^{(2)}]^n)
\ee
which displays the fundamental role of the operator 
\be \label{d.6}
(F^{(2)}_k+R_k)^{-1}=:[F_k^{(2)}]^{-1}+G_k  
\ee
where we assume that $G_k$ is an operator just depending on $\Box$
and $G_k\to 0$ as $k\to 0$ which is consistent with
$R_k\to 0$ as $k\to 0$. As an explicit example 
consider 
\be \label{d.7}
G_k(z)=k^{-2}\; e^{-k_0^2/k^2}\;\int_{-\infty}^\infty\; 
dt\; e^{-t^2-t^{-2}}
e^{i\;t\;z/k^2}
\ee
which is a smooth, real valued function of rapid decrease at 
$|z|=\infty$. The fixed constant $k_0$ is there just for dimension reasons
and does not run with $k$. Instead of $k_0^2/k^2$ one could also use 
$\lambda_k=\Lambda_k/k^2$ the dimensionless cosmological constant where 
$k_0^2=\Lambda_{k=0}$. For both choices $G_k\to 0$ rapidly at $k=0$ while 
$G_k$ has the correct dimension of $1/R_k$.

Since the integrand is smooth at $t=0$ where it and 
all of its derivatives vanish we can extend the integrand smoothly to zero 
for all $t<0$ and therefore may also consider 
\be \label{d.8}
G_k(z)=k^{-2} \; e^{-k_0^2/k^2}\;\;G(z/k^2),\; G(y)=
\int_0^\infty\; dt\; e^{-t^2-t^{-2}}
e^{i\;t\;y}
\ee
which however is no longer real valued but still vanishes at $k=0$. We may 
also consider the case $k_0=k$ which has the advantage that this 
cut-off function leads to autnomous flow equations.

Note that 
\be \label{d.9}
k\partial_k (F^{(2)}_k+R_k)^{-1}=
k\partial_k G_k 
\ee
just depends on $G_k$.
The reason for choosing (\ref{d.7}) or (\ref{d.8}) is that in the 
heat kernel expansion we are therefore interested in the heat kernel time 
integrals 
\be \label{d.9a}
\int\; dt\; \hat{G}_k(t) e^{i\pi/4\;(2-d){\sf sgn}(t)}\; |t|^{-d/2}\; t^l,\;
l\in \mathbb{N}_0
\ee
where $\hat{G}_k$ is the integrand of the integral defining $G_k$, i.e. 
its Fourier transform and $d$ the spacetime dimension. These integrals
all converge due to the properties of $\hat{G}_k$ at $t=0$. 

We consider the first two orders on the right hand side of (\ref{d.5})
which already display the essential features of the expansion
\be \label{d.10}
i^{-1}\;{\sf Tr}([k\partial_k G_k]\;U_k^{(2)}\;
[1-((F_k^{(2)})^{-1}+G_k)\; U_k^{(2)}])  
\ee
which requires us to know also the heat kernel expansion of $(F_k^{(2)})^{-1}$.
If $F_k^{(2)}(z)$ can be written as $Az+B$ with $A,B\not=0$ real constants 
independent of $k$ we have the {\it Schwinger proper time expression}
\be \label{d.11}
(Az+B)^{-1}=-\frac{i}{A}\;[\int_0^\infty\; dt\;e^{i\;t\;z-t\; C}]_{
C\to -i B/A}
\ee
where the integral is performed at $C>0$ and then analytically extended 
to $C=-iB/A$. By rescaling one can assume $A=1$ and it is natural to 
identify $B=k_0^2=\Lambda_{k=0}$.   

If we now use the version (\ref{d.8}) we will be confronted 
in (\ref{d.10}) with heat kernel expressions of the form
form 
\ba \label{d.12}
&& \int_0^\infty\;dt_1 \int_0^\infty dt_2   
\hat{G}'_k(t_1) \; \tilde{G}_k(t_2)
{\sf Tr}(U_k^{(2)}\; e^{it_1\Box}\; U_k^{(2)}\; e^{it_2 \Box})
\nonumber\\
&=& \int_0^\infty\;dt_1 \int_0^\infty dt_2   
\hat{G}'_k(t_1) \; \tilde{G}_k(t_2)
{\sf Tr}(U_k^{(2)}\; 
[e^{it_1\Box}\; U_k^{(2)}\; e^{-it_1\Box}]\; 
e^{i(t_1+t_2) \Box})
\ea
where $\hat{G}'_k(t)=\hat{G}_k(t)$ or $\hat{G}'_k(t)=\dot{\hat{G}}_k(t)$
and
$\tilde{G}_k(t)=\hat{G}_k(t)$ or $\tilde{G}_k(t)=e^{-Ct}$. 
As indicated in the main body of the paper we now expand
\be \label{d.13}
e^{it_1\Box}\; U_k^{(2)}\; e^{-it_1\Box}
=\lim_N\sum_{n=0}^N\;\frac{(it_1)^n}{n!}[\Box,U^{(2)}_k]_{(n)}
\ee
and truncate at some finite $N$. The heat kernel expansion now 
results in time integrals of the type (note that the sign of 
$t_1,t_2,t_1+t_2$ is positive)
\be \label{d.14}
=\int_0^\infty\;dt_1 \int_0^\infty dt_2   
\hat{G}_k(t_1) \; \tilde{G}_k(t_2)\; e^{i\pi/4\;(2-d)}\;
t_1^r\; (t_1+t_2)^s,\; r\ge 0, s\ge -d/2
\ee
which all converge due to the estimate $(t_1+t_2)^{-s}\le t_1^{-s},\; 
s\ge 0$ and the fact that all times are positive. Since there is always 
at least one factor of $\hat{G}_k(t_1)$ also at higher orders, all those 
integrals converge. Note that this would not be the case had we used 
(\ref{d.7}). Note however that with $R_k$ being complex valued we expect 
the coupling constants $g_k$ also to be complex valued for $k>0$. 
Thus the number of real coupling constants is in principle doubled.
We confine to the original number of real couplings by 
considering only {\it admissible} trajectories which are those 
such that all dimensionful couplings are real valued as $k\to 0$. 
This halves the number of admissible initial conditions and 
presents some form of fine tuning.  
  
Coming back to the logarithmic term in (\ref{d.4}) we note that with 
the above choice of $R_k, F_k$ we have 
\ba \label{d.16}
&&k\partial_k\ln(F_k^{(2)}+R_k)  
=-k\partial_k\ln([F_k^{(2)}]^{-1}+G_k)  
\nonumber\\
&=&-2\frac{F_k^{(2)}\;G_k}{1+F_k^{(2)}\;G_k}
=\sum_{n=1}^\infty\; (-1)^n\;[F_k^{(2)}\;G_k]^n
\nonumber\\
&& F_k^{(2)}(z)\;G_k(z)
=k^2 \;k_0^2\;\int\; dt\; e^{itz}\;[-i \;A\; \frac{d}{dt}+B]
e^{-(t\;k_0)^2-(t\; k_0)^{-2}}
\ea
and if the integration domain of $t$ is positive (note that due to rapid decay
there are no boundary terms picked up by the integration by parts) 
then the heat kernel traces and time integrals of all powers of 
(\ref{d.16}) converge. Thus in this case we can even tame the logarithmic 
term if we truncate the geometric series. \\
\\  
The above choice of $R_k$ in  terms of $G_k$ has the advantage to 
enable a simple series expansion of the Wetterich equation just in terms 
of $U_k$. However, the resulting equation is not autonomous as we 
were forced to introduce $k_0$. Had we replaced $B=k_0^2$ by $\Lambda_k$ then 
the heat kernel time integrals would not always contain a factor 
of $\hat{G}_k$ and therefore in general diverge. We therefore now consider a 
more complicated expansion which however is autonomous.

This is obtained by refraining from introducing $G_k$ and instead 
we use
\be \label{d.17}
R_k(z)=k^2 r(z/k^2),\;\;r(y)=\int_0^\infty\; dt \; \hat{r}(t) 
e^{ity},\;\hat{r}(t)=e^{-t^2-t^{-2}}\;
\; \Rightarrow\;\; k\partial_k R_k=2k^2[r-y r'(y)]=2k^2[r-ir_1],
\hat{r}_1(t)=\dot{\hat{r}}(t)
\ee
Consider again (\ref{d.2}) where now $F_k^{(2)}=z+\Lambda_k$. Here 
$\Lambda_k$ is the dimensionful cosmological constant. Then we expand
the r.h.s. of (\ref{d.1}) with $u_k=U_k^{(2)}/k^2;\;\lambda_k=\Lambda_k/k^2,
y=\Box/k^2$ 
\ba \label{d.18}
&& 2\;{\sf Tr}[(r-y\; r')\;[(y+\lambda_k)+(r+u_k)]^{-1}]
=2\;\sum_{n=0}\; (-1)^n\; {\sf Tr}(\frac{r-i r_1}{y+\lambda_k}\;
([r+u_k]\; [y+\lambda_k]^{-1})^n)
\nonumber\\
&& [y+\lambda_k]^{-1}=-i [\int_0^\infty\; dt\; 
e^{t(iy-\epsilon)}]_{\epsilon\to -i\lambda_k}
\ea
where all traces and $t$ integrals 
are computed at finite $\epsilon>0$ and at the end are analytically 
continued as displayed. The essential mechanism that makes all $t$ integrals  
converge namely that at least one factor of $\hat{r},\dot{\hat{r}}$ appears 
is preserved, the system is autonomous but the expansion is more complicated 
than before.

\end{appendix}

\end{document}